\documentclass[journal, comsoc, 10pt]{IEEEtran}

\usepackage{soul}

\usepackage[T1]{fontenc}% optional T1 font encoding

\usepackage{amssymb,amsmath,amsfonts,amsthm}
\usepackage{mathrsfs}
\usepackage{cite}
\usepackage{array}
\usepackage{tabularx}
\usepackage[table]{xcolor}
\usepackage{multicol, multirow}
\usepackage{color}
\usepackage{mathtools}
\usepackage{subcaption}
\usepackage{mathtools}
\usepackage{tikz,pgf}
\usetikzlibrary{calc,positioning,mindmap,trees,decorations.pathreplacing}
\usepackage{graphicx}
\usepackage{bbm}
\usepackage[utf8]{inputenc}
\usepackage{pgfplots}
\usepackage{algorithm,algcompatible}
\usepackage{circledsteps}

\newtheorem{propi}{Proposition}
\newtheorem{remark}{Remark}

\newtheorem{lemma}{Lemma}
\newtheorem{assumption}{Assumption}
\newtheorem{theorem}{Theorem}

\newcommand{\argmin}{\mathop{\rm arg~min}\limits}

\algblockdefx{FORALLP}{ENDFAP}[1]%
{\textbf{for all }#1 \textbf{do in parallel}}%
{\textbf{end for}}

\usepackage[colorinlistoftodos,prependcaption,backgroundcolor=black!5!white,bordercolor=red]{todonotes}

\IEEEoverridecommandlockouts
\ifCLASSINFOpdf
\else
\fi
\usepackage[cmintegrals]{newtxmath}
\begin{document}
	%\title{Federated Learning in the Downlink with Block-based Differential Coding}
	\title{Mixed-Timescale Differential Coding for Downlink Model Broadcast in Wireless Federated Learning}
	
	\author{Chung-Hsuan Hu, Zheng Chen, and Erik G. Larsson
	\thanks{The authors are with the Department of Electrical Engineering (ISY), Link\"{o}ping University, Link\"{o}ping, SE-58183 Sweden. E-mail:\{chung-hsuan.hu, zheng.chen, erik.g.larsson\}@liu.se. This work was supported in part by Zenith, ELLIIT, the Swedish Research Council (VR), and the Knut and Alice Wallenberg (KAW) Foundation.\\
    A preliminary version of this paper was presented at the 2024 Asilomar conference on Signals, Systems, and Computers \cite{ch2024mix}.}
    }
	
	\maketitle
	
\begin{abstract}
	In standard federated learning systems, the parameter server broadcasts the global model to the participating devices in every iteration. Motivated by the temporal correlation between consecutive global models, differential coding can be applied to global model dissemination to reduce the information magnitude, thereby enabling communication with fewer quantization bits. However, due to wireless link failures, devices may occasionally miss differential updates and consequently fail to reconstruct the global model. As a result, they either continue local training based on an outdated model or remain idle until the next full-model broadcast becomes available. To address this challenge, we propose a mixed-timescale differential coding (MTDC) scheme that performs differential coding at two different levels by adjusting the reference model. With MTDC, a device can reconstruct the latest global model between two full-model broadcasts even if it misses a differential update. We provide a convergence analysis that motivates the design of an age-aware variant of MTDC, along with a device scheduling policy to further improve communication efficiency. Simulation results demonstrate that the proposed MTDC schemes achieve superior learning performance compared to baseline methods under similar communication resource budgets in the presence of downlink transmission failures.
\end{abstract}

\begin{IEEEkeywords}
	Federated learning, differential coding, global model broadcast, downlink failure, scheduling
\end{IEEEkeywords}

\section{Introduction}
Federated learning (FL) is a popular distributed machine learning framework in which a set of devices collaboratively train a model through iterative local training and centralized aggregation at a server \cite{mcmahan2017communication}. The iterative exchange of model information between the server and devices incurs substantial communication overhead, making communication efficiency a key challenge in FL systems. For FL over wireless networks, a wide range of communication-efficient transmission strategies have been investigated in the literature \cite{gafni2022fed,li2020fed}. In particular, device scheduling and resource allocation can be optimized using data-importance or learning-aware metrics \cite{luo2022tack,amiriconvergence,dogan2025opt,yang2024DetFed}, as well as by accounting for communication uncertainties such as time-varying channel conditions and interference \cite{salehi2021federated,importance-aware,Ozfatura2021fastF,chen2022federated,scheduling-fl-latency,chen2020ajoint}. In addition, data compression techniques, including sparsification \cite{oh2023comm,alistarh2018convergence} and quantization \cite{alistarh2017qsgd,jhu2021ada}, have been widely adopted to further improve communication efficiency.

% to further reduce the size of transmitted messages. 

%Recent studies have explored  differential or predictive coding  \cite{yue2022com, adrian2024Temp,Adi2021comp,song2024resfed, zheng2021design} that leverage the temporal correlation of local models or gradients for efficient compression. 
%Reference \cite{zheng2021design} emphasizes the importance of transmitting a weight differential in achieving efficient data compression. Based on the same premises, \cite{song2024resfed,adrian2024Temp,yue2022com} investigate predictive coding schemes for the differential vector, with reduced communication load accomplished by adaptive quantization or event-triggered information updating.
%Furthermore, some approaches advocate using analog superposition to alleviate heavy uplink (UL) traffic.
%\zheng{Second paragraph, not many papers consider reducing downlink transmission. Downlink broadcast frequency can be reduced by applying different coding. Explain briefly how it works, background and basic ideas (temporal correlation makes it useful).}

Most prior work on communication-efficient FL has focused on uplink (UL) transmission from devices to the server, while the downlink (DL) transmission from the server to the devices has received   less attention. On the DL, the server can, in principle, broadcast the global model so that all devices receive it simultaneously using the same time–frequency resources. In contrast, on the UL, communication resources need to be divided among transmitting devices such that the server can receive multiple local model updates without interference.
DL communication efficiency has been studied in \cite{amiri2022Con}, which focuses on the comparison between analog (based on over-the-air computation \cite{ota-chen}) and digital transmission designs.
Joint UL and DL communication-efficient designs have been investigated in terms of data compression \cite{cui2021slash,hu2023flex}, resource allocation \cite{mu2022comm}, and analog transmission designs \cite{zhang2023ota}. 
%Reference \cite{cui2021slash} aims for a unified compression scheme applicable to UL and DL transmissions, while \cite{hu2023flex} tackles the joint optimization of communication resources and compression levels for both UL and DL transmissions.
%\zheng{Do we need so many references on communication-efficient FL? No need to write their explicit contributions if they are not closely related.}
%Reference \cite{zhang2023ota} minimizes the distortion of transmitted information by designing the device equalization coefficients and the server-side forwarding matrix.
%Considering limited communication resources, \cite{mu2022comm} proposes an optimization framework for UL/DL transmit power. 
%\zheng{Add more references on differential or predictive coding for federated learning. These papers most likely will have our potential reviewers.}
%Reference 

%Our focus herein is on improviFL over wireless networks with unreliable communication links.
%However, for such DL broadcast transmission to work, it has to be heavily protected against fading, noise, and interference. 
This paper focuses on the DL transmission of global models in wireless FL with unreliable communication links, and specifically with the use of differential coding (DiC)  for the global model compression.
DiC is widely used for image and video coding  as a lossy source coding scheme that exploits temporal and spatial redundancy \cite{sayood2000introduction}. The key idea is to encode the differences between adjacent pixels, consecutive frames, or prediction residuals rather than the absolute values. By doing so, the entropy of the signal can be greatly reduced, enabling more efficient compression and lower bit rates.

In FL, existing studies have illustrated the evidence of temporal correlation between learning models in consecutive iterations and proposed various DiC-based techniques for gradient or model update compression \cite{yue2022com, adrian2024Temp,Adi2021comp,song2024resfed, zheng2021design}. Nevertheless, all these studies focus on the uplink transmission of model updates from local devices to the server. The potential advantages of using DiC for DL communication cost reduction have not been   explored. 
One drawback of the standard DiC framework is that it is prone to decoding failures. The reconstruction of the original frame critically depends on whether the differential update (or residual) is correctly received.
In the context of FL, missing one (differentially encoded) model broadcast hinders the reconstruction of the current and subsequent global   models.
Existing works mentioned above have not  addressed this issue, which is the main motivation behind our work.
%If a device misses an update, it has to conduct model training based on an outdated model, which adversely affects the learning process. 

\subsection{Contributions}
We propose a novel mixed-timescale differential coding (MTDC) scheme to robustify DiC-based compression for DL global model transmission in wireless FL.
%\zheng{The meaning of "the full model" and "the model" and their differences are not clear.}
MTDC is a hierarchical differential coding scheme operating at three timescales 
(with different quantization levels): (1) the full model is broadcast on a slow timescale (i.e., infrequently), without DiC and at high resolution; (2) a first-level DiC-coded model update is broadcast with fewer quantization bits, on an intermediate timescale; and (3) a second-level DiC-coded model update is broadcast with even fewer quantization bits, on a fast timescale.
The difference between the two levels of differential updates lies in their choice of reference frame: the first-level update uses an older version of the global model -- which is more likely to have been correctly reconstructed at the local devices -- as the reference for computing the residual (differential update). 
In this way, the first-level differential update (at the intermediate timescale) serves as a fallback mechanism for the second-level differential update (at the fast timescale), 
and the full-model broadcast serves as a fallback mechanism for the first-level differential update, in case of transmission failures.
%More specifically, an agent that fails to decode a fast-timescale differential update will have to wait for the next intermediate-timescale differential broadcast. Similarly, an agent that fails to decode an intermediate-timescale differential broadcast will have to wait for the next full-model broadcast.

%% -- redo --
%Conceptually, MTDC introduces a fallback mechanism to ameliorate overall model staleness across devices and therefore mitigates the effect of outdated updates in the learning process. 
%Recalling that we propose MTDC to tackle the issue of limited communication resources. For the same level of magnitude resolution, transmission of a full model requires more resources compared to that of a differential model. Hence, the broadcast frequency of a full model should be kept low to save resources, but at the same time, not to sacrifice the learning performance. For example, in milder decoding-failure scenarios, we can schedule fewer full-model broadcasts, and vice versa.

In addition to the MTDC scheme, we propose an age-aware extension, termed A-MTDC, which adaptively selects the type of DL broadcast (full model or differential update) based on the model staleness level at local devices and the statistics of decoding failures. In the presence of DL decoding failures, we establish the convergence of Federated Averaging under the proposed MTDC scheme (see Section~\ref{sec:conv_study} for  specifics). This theoretical result further motivates the design of A-MTDC and the associated age-aware scheduling policy.
%This analysis also  motivates the design of A-MTDC and the age-aware scheduling policy.

\section{System Model}	
We consider a wireless FL system with a parameter server and a set of devices $\mathcal{K}\triangleq\{1,...,K\}$ participating in
the training of a shared learning model parameterized by $\boldsymbol{\theta}\in\mathbb{R}^d$. 
The goal is to minimize a global loss function $F(\boldsymbol{\theta})\triangleq\sum_{k\in\mathcal{K}}w_k F_k(\boldsymbol{\theta})$, which is a weighted average of local loss functions $F_k(\boldsymbol{\theta}), \forall k\in\mathcal{K}$, evaluated over the local datasets $\mathcal{S}_k,k\in\mathcal{K}$. 
A widely used algorithm for this is %To achieve this goal, several algorithms have been proposed, among which 
Federated Averaging (FedAvg) \cite{mcmahan2017communication}.
% is the most widely used. It 
It operates by iteratively combining local training on devices with centralized model aggregation.
%\zheng{Briefly mention about local datasets so that the description is self contained.}

\subsection{Federated Averaging (FedAvg)} 
\label{sec:flbaseline}
At the $t$-th global iteration ($t=1,...$):
\begin{enumerate}
	\item The server broadcasts the global model $\boldsymbol{\theta}(t)$ to $\mathcal{K}$ and schedules a device subset $\Pi(t)\subseteq\mathcal{K}$ for model training.  
	\item Each device $k\in\Pi(t)$ updates the model with $E$-step mini-batch stochastic gradient descent:
    \begin{equation}
        \boldsymbol{\theta}_k(t,\iota+1)=\boldsymbol{\theta}_k(t,\iota)-\eta\nabla F_k(\boldsymbol{\theta}_k(t,\iota);\mathcal{B}_k(t,\iota)),
    \end{equation}
    $\iota=0,...,E-1,$ where $\boldsymbol{\theta}_k(t,0)=\boldsymbol{\theta}(t)$, $\mathcal{B}_k(t,\iota)\subseteq \mathcal{S}_k$, and $\eta$ is the learning rate.
    %\zheng{Each device doesn't exactly minimize $F_k$. The above sentence implies that the updated local model will be the minimizer of the local loss function.}
    The corresponding model update
    \begin{equation}
        \label{eq:triThetak}\triangle\boldsymbol{\theta}_k(t)=\boldsymbol{\theta}_k(t,E)-\boldsymbol{\theta}_k(t,0)
    \end{equation} is transmitted to the server.
    %\zheng{Give proper definition of $\triangle\boldsymbol{\theta}_k(t)$.}
    
	\item The server aggregates the received gradient updates, and updates the global model according to
	\begin{equation}
		\boldsymbol{\theta}(t+1)=\boldsymbol{\theta}(t)+\sum_{ k\in\Pi(t)}w_k\triangle\boldsymbol{\theta}_k(t),
		\label{eq:syncFlAggregation}	
	\end{equation}
	where $\sum_{k\in\Pi(t)}w_k=1$.
	A common choice is  $w_k=|\mathcal{S}_k|/\sum_{j\in\Pi(t)}|\mathcal{S}_j|$.
\end{enumerate}
In FedAvg, the server needs to broadcast the global model at each iteration. 
The model parameters $\boldsymbol{\theta}(t)$ progress in the gradient descent directions. This naturally introduces temporal correlation between consecutive global models. In Figure~\ref{fig:expGrdt}, we illustrate this temporal correlation in the global model evolution when using FedAvg with a convolutional neural network for an image classification problem.\footnote{We consider the same learning setting and use the same parameters as in the other simulation results presented in Section \ref{sec:simulation_results}.} In this example,  high correlation is observed even when two iterates are $20$ iterations apart. In general, the correlation between two iterates weakens when the time window increases, and it increases with a smaller learning rate.
%In the FL application, there is no explicit memory model, and our proposed techniques, which we will introduce later, do not rely on any specific model
%for how the iterates are correlated.}
This observation motivates the usage of DiC techniques to improve DL communication efficiency. In the literature, DiC has been considered for UL local update transmission \cite{adrian2024Temp, yue2022com}, but to our knowledge it has not been investigated for DL global model transmission.

\begin{figure}[t!]
    \centering
	%\hspace{2cm}
	%\vspace{-4.5cm}
	% This file was created by matlab2tikz.
%
%The latest updates can be retrieved from
%  http://www.mathworks.com/matlabcentral/fileexchange/22022-matlab2tikz-matlab2tikz
%where you can also make suggestions and rate matlab2tikz.
%
\definecolor{mycolor1}{rgb}{0.00000,0.44700,0.74100}%
\definecolor{mycolor2}{rgb}{0.85000,0.32500,0.09800}%
\definecolor{mycolor3}{rgb}{0.92900,0.69400,0.12500}%
\definecolor{mycolor4}{rgb}{0.49400,0.18400,0.55600}%
\definecolor{mycolor5}{rgb}{0.46600,0.67400,0.18800}%
\begin{tikzpicture}

\begin{axis}[%
width=1.79in,
height=1.35in,
at={(0.763in,0.64in)},
scale only axis,
xmin=0,
xmax=60,
xlabel style={font=\color{white!15!black}},
xlabel={$t$},
ymin=0.8,
ymax=1,
ylabel style={font=\color{white!15!black}},
ylabel={sim$_{\boldsymbol{\theta}}(t,t+\delta)$},
axis background/.style={fill=white},
xmajorgrids,
ymajorgrids,
legend style={at={(0.97,0.03)}, anchor=south east, legend cell align=left, align=left, draw=white!15!black}
]
\addplot [color=mycolor1, line width=1.0pt]
  table[row sep=crcr]{%
1	0.98624152\\
2	0.99167764\\
3	0.99309629\\
4	0.99426562\\
5	0.99522942\\
6	0.99470037\\
7	0.99548656\\
8	0.99395162\\
9	0.99544042\\
10	0.99656016\\
11	0.99716544\\
12	0.99566007\\
13	0.99732137\\
14	0.99623483\\
15	0.99787223\\
16	0.99722975\\
17	0.99716902\\
18	0.99713367\\
19	0.99726188\\
20	0.99813259\\
21	0.99686736\\
22	0.9972046\\
23	0.99698836\\
24	0.9978314\\
25	0.99612254\\
26	0.99821687\\
27	0.99728775\\
28	0.99800062\\
29	0.99741066\\
30	0.99801159\\
31	0.9988265\\
32	0.99859828\\
33	0.99905956\\
34	0.99920708\\
35	0.99911028\\
36	0.99937999\\
37	0.99855328\\
38	0.99928743\\
39	0.99893069\\
40	0.99883628\\
41	0.99939197\\
42	0.99858785\\
43	0.99873638\\
44	0.99927956\\
45	0.99857694\\
46	0.99921423\\
47	0.99931425\\
48	0.99925655\\
49	0.99915767\\
50	0.99919122\\
51	0.99948788\\
52	0.99944246\\
53	0.99944448\\
54	0.99946618\\
55	0.99959946\\
56	0.99932754\\
57	0.99928981\\
58	0.9994756\\
59	0.99898523\\
};
\addlegendentry{$\delta=1$}

\addplot [color=mycolor2, dashed, line width=1.0pt]
  table[row sep=crcr]{%
1	0.93387663\\
2	0.95051515\\
3	0.96511334\\
4	0.97141486\\
5	0.97393668\\
6	0.9780491\\
7	0.97995859\\
8	0.97941536\\
9	0.98408097\\
10	0.98650461\\
11	0.98776156\\
12	0.98901904\\
13	0.98818225\\
14	0.98826736\\
15	0.98913658\\
16	0.98924404\\
17	0.99157876\\
18	0.99160576\\
19	0.99175745\\
20	0.99175966\\
21	0.99184787\\
22	0.99201262\\
23	0.99293935\\
24	0.99102026\\
25	0.99053609\\
26	0.99296606\\
27	0.9943338\\
28	0.99522609\\
29	0.99643439\\
30	0.99633604\\
31	0.99714613\\
32	0.99806952\\
33	0.99812025\\
34	0.99818081\\
35	0.99813753\\
36	0.99784964\\
37	0.99722421\\
38	0.99775738\\
39	0.99695367\\
40	0.99731892\\
41	0.9980908\\
42	0.99812376\\
43	0.99772352\\
44	0.99794602\\
45	0.99711847\\
46	0.99816096\\
47	0.9974637\\
48	0.99800122\\
49	0.99793059\\
50	0.99799275\\
51	0.99816877\\
52	0.99838036\\
53	0.99857825\\
54	0.99778205\\
55	0.99830186\\
};
\addlegendentry{$\delta=5$}

\addplot [color=mycolor3, line width=1.0pt, mark=x, mark options={solid, mycolor3}, mark repeat=3]
  table[row sep=crcr]{%
1	0.88210917\\
2	0.91005546\\
3	0.92801249\\
4	0.94152415\\
5	0.94693828\\
6	0.95471174\\
7	0.96196949\\
8	0.96630985\\
9	0.96833152\\
10	0.97478908\\
11	0.97560966\\
12	0.97699279\\
13	0.97818106\\
14	0.98034149\\
15	0.98053938\\
16	0.98040217\\
17	0.98276091\\
18	0.98327494\\
19	0.98282588\\
20	0.98250443\\
21	0.98498517\\
22	0.98648065\\
23	0.9880892\\
24	0.9891693\\
25	0.98907375\\
26	0.99210119\\
27	0.99252099\\
28	0.99367589\\
29	0.99482435\\
30	0.99355012\\
31	0.99494183\\
32	0.99515599\\
33	0.99573857\\
34	0.99522722\\
35	0.99535757\\
36	0.99590999\\
37	0.99577522\\
38	0.99595803\\
39	0.99569714\\
40	0.99555367\\
41	0.99651206\\
42	0.99648029\\
43	0.99611497\\
44	0.99590349\\
45	0.99535114\\
46	0.99635994\\
47	0.99648315\\
48	0.99666542\\
49	0.9965778\\
50	0.99677694\\
};
\addlegendentry{$\delta=10$}

\addplot [color=mycolor4, dashdotted, line width=1.0pt]
  table[row sep=crcr]{%
1	0.84690833\\
2	0.87875676\\
3	0.90214044\\
4	0.91741502\\
5	0.92673278\\
6	0.93881589\\
7	0.9449231\\
8	0.94898826\\
9	0.95427519\\
10	0.95916653\\
11	0.96089602\\
12	0.96365094\\
13	0.96710867\\
14	0.96568167\\
15	0.96958321\\
16	0.97238106\\
17	0.97534436\\
18	0.97728199\\
19	0.97924763\\
20	0.9804464\\
21	0.9826405\\
22	0.98326015\\
23	0.98509479\\
24	0.9859708\\
25	0.98552096\\
26	0.98932022\\
27	0.98947358\\
28	0.99080575\\
29	0.99118787\\
30	0.9912889\\
31	0.9926455\\
32	0.99279785\\
33	0.9931885\\
34	0.9935469\\
35	0.99324191\\
36	0.99363631\\
37	0.99354905\\
38	0.9944098\\
39	0.99460387\\
40	0.99390453\\
41	0.99503142\\
42	0.99539536\\
43	0.9943682\\
44	0.99500406\\
45	0.99430335\\
};
\addlegendentry{$\delta=15$}

\addplot [color=mycolor5, line width=1.0pt, mark=o, mark options={solid, mycolor5}, mark repeat=4]
  table[row sep=crcr]{%
1	0.82206476\\
2	0.85393441\\
3	0.87927598\\
4	0.89658564\\
5	0.90362626\\
6	0.91934633\\
7	0.92701137\\
8	0.93152803\\
9	0.93605226\\
10	0.94387591\\
11	0.94988221\\
12	0.95373541\\
13	0.95874476\\
14	0.96174562\\
15	0.9638263\\
16	0.96695745\\
17	0.97063261\\
18	0.97343773\\
19	0.97459531\\
20	0.97652209\\
21	0.97850651\\
22	0.98010892\\
23	0.98137075\\
24	0.98207217\\
25	0.98174274\\
26	0.98606706\\
27	0.98671252\\
28	0.98790419\\
29	0.98899984\\
30	0.98919231\\
31	0.9901731\\
32	0.991072\\
33	0.991413\\
34	0.99178296\\
35	0.99179029\\
36	0.99175775\\
37	0.99147421\\
38	0.99243838\\
39	0.99231386\\
40	0.99224275\\
};
\addlegendentry{$\delta=20$}

\end{axis}
\end{tikzpicture}%
	\caption{
    Illustration of the temporal correlation of the iterates $\boldsymbol{\theta}(t)$ quantified by the cosine similarity between $\boldsymbol{\theta}(t)$ and $\boldsymbol{\theta}(t+
    \delta)$, denoted by sim$_{\boldsymbol{\theta}}(t,t+\delta)$, where $\delta=1,5,10,15,20$.}
	\label{fig:expGrdt}
\end{figure}

\subsection{Differential Coding}
\label{sec:df}
%\zheng{Give some references of differential/predictive coding for image compression. You can even use $x_t$ to denote the image frame so that the explanation of the scheme is easier. }
Differential coding (DiC) is a common technique for multimedia data compression \cite{sayood2000introduction}. 
For instance, let $\boldsymbol{x}_t$ be the current image frame and $\boldsymbol{x}_{\text{ref}}$ be the reference frame.  
The sender transmits the compressed residual $\triangle\boldsymbol{x}_t=Q(\boldsymbol{x}_t-\boldsymbol{x}_{\textnormal{ref}})$, where $Q(\cdot)$ is a compression operator (e.g., quantizer). We will call $\triangle\boldsymbol{x}_t$ the differential update throughout the paper.
Let $\tilde{\boldsymbol{x}}_{t}$ be the reconstructed frame at the receiver side, which can be computed by $\tilde{\boldsymbol{x}}_{t}=\boldsymbol{x}_{\textnormal{ref}}+\triangle\boldsymbol{x}_t$ at the receiver. To avoid error propagation in $\tilde{\boldsymbol{x}}_{t}$, the residual calculation will use the reference frame $\boldsymbol{x}_{\textnormal{ref}}$ based on $\{\tilde{\boldsymbol{x}}_i\}_{i<t}$ rather than the original frames $\{\boldsymbol{x}_i\}_{i<t}$. A special case is  $\boldsymbol{x}_{\textnormal{ref}}=\tilde{\boldsymbol{x}}_{t-1}$, the latest reconstructed frame.
%\zheng{Explain the advantages of DiC: smaller signal magnitude, smaller dynamic range, need less quantization bits to reach similar MSE level.}

DiC offers the advantage of reducing the signal magnitude and dynamic range, thereby requiring fewer quantization bits to achieve a similar mean-squared error (MSE)   as the full model transmission.
However, DiC schemes rely on perfect reception of the differential update $\triangle\boldsymbol{x}_t$ and the reconstructed reference frame.
%\zheng{Current image frame be $x_t$, reference/predicted image frame $\hat{x}_t$, transmit residual as the difference between the current and predicted image. Special cases, predicted image can be the reconstructed previous image frame. Rely on two components: prediction and perfect reception of residual.}
%A reasonable way to deploy differential coding in FL is to broadcast a full model occasionally ($\boldsymbol{x}_t$ for some $t$) and for the rest a differential model ($\triangle\boldsymbol{x}_t$ otherwise). 
%One major defect of differential coding is vulnerability to decoding failures. 
Missing a differential update $\triangle\boldsymbol{x}_t$ makes the reconstruction of the current and all the subsequent frames (i.e., $\{\tilde{\boldsymbol{x}}_i\}_{i\geq t}$) impossible. 
Wireless communication  is generally susceptible to random decoding failures, and in the context of FL, missing a model update will cause a device to work on an outdated global model
when performing 
%. More specifically, the device will use an outdated global model to conduct 
local training. One way to tackle this issue is to occasionally schedule full-model broadcasts along the   process.
An example is illustrated in Figure \ref{fig:broadcastTimeDic}, where at any iteration $t$, $\hat{\boldsymbol{\theta}}(t)$ is the transmitted vector (either a full model or a differential update) and $\tilde{\boldsymbol{\theta}}(t)$ is the reconstructed model with respect to $\boldsymbol{\theta}(t)$. 
Depending on whether $\hat{\boldsymbol{\theta}}(t)$ is successfully received at device $k$, the local model before training, $\boldsymbol{\theta}_k(t,0)$, can be either $\tilde{\boldsymbol{\theta}}(t)$ or an outdated model $\boldsymbol{\theta}_k(t-1,0)$. The infrequent full-model broadcasts offer the possibility to re-synchronize to the latest global model, but with an increased communication cost compared to only transmitting differential updates.
%\zheng{Full model doesn't cost "extra" communication resource compared to FedAvg. It costs extra compared to FedAvg with DiC in every iteration. Also, explain why transmitting full models consume more resources, or conversely, why transmitting differential updates consume less resources.}
Bearing this in mind, we propose the following mixed-timescale differential coding (MTDC) scheme.
\begin{figure}[t!]
	\centering
	\includegraphics[scale=0.45]{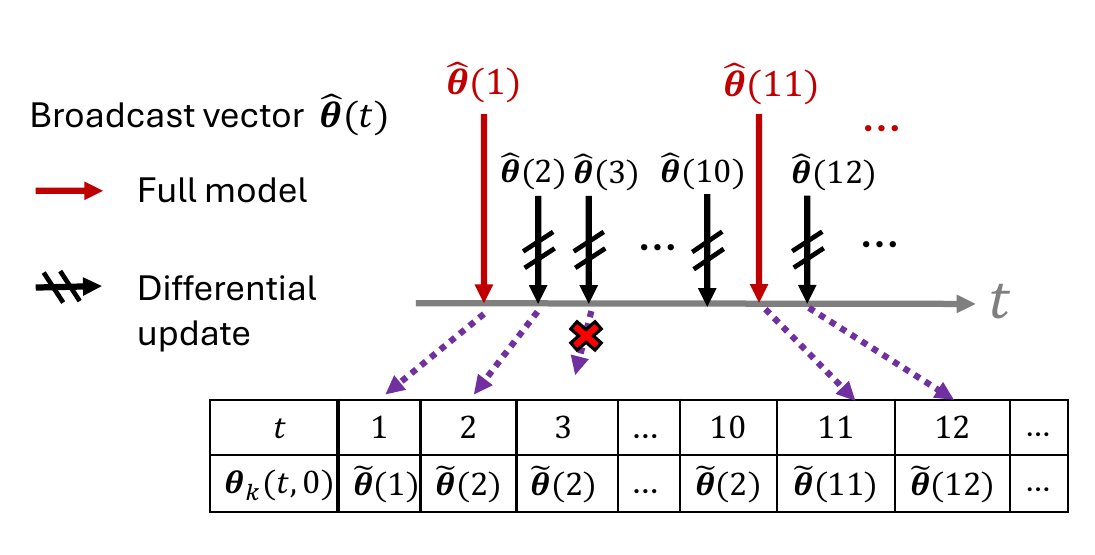}
	\caption{An example of FL with DiC-enabled model broadcast: the server broadcasts the full model every $10$ iterations and in between differential updates are broadcast instead.}
	\label{fig:broadcastTimeDic}
\end{figure}

\subsection{Mixed-Timescale Differential Coding}
\label{sec:mtdc_intro}
% \egl{In case we need to shorten, this paragraph could probably be taken out
% (at a high level, how MTDC works is already explained in the introduction)}
We consider DiC-enabled FL model broadcast with one reference model.\footnote{It can be extended to more advanced prediction schemes using multiple reference models reconstructed at different time instances.} 
%To tackle the decoding failure issue, we  
Other than assigning the latest reconstructed model as the reference (e.g., DiC in Sec. \ref{sec:df}), MTDC offers flexibility in choosing references from different past instances in the training process.
%\zheng{In our work, we use reference model as the reconstructed previous model. Add footnote that it can extended to other prediction schemes and the reference model can be calculated in other ways using previous information.}
In this work, we focus on MTDC with two-level differential updates.
%\footnote{It can be conceptually extended to the case with more-level differential models.} 

We first introduce the iteration index sets $\{\mathcal{T}_i\}_{i=0}^2$. When $t\in\mathcal{T}_0$, the server broadcasts a full model, and when $t\in\mathcal{T}_1$ and $t\in\mathcal{T}_2$, a first-level and a second-level differential updates are broadcast, respectively. All broadcast vectors from the server are compressed before being sent. We denote the compression functions by $\{Q_i(\boldsymbol{\theta})\}_{i=0}^2$ when $t\in\mathcal{T}_i,i=0,1,2$.
Let $\hat{\boldsymbol{\theta}}(t)$ be the transmitted vector and $\tilde{\boldsymbol{\theta}}(t)$ be the reconstructed model with respect to $\boldsymbol{\theta}(t)$. 
When the full model is broadcast, $\hat{\boldsymbol{\theta}}(t)=Q_0(\boldsymbol{\theta}(t))$. At other times, $\hat{\boldsymbol{\theta}}(t)$ is the corresponding differential
update.
Then, in lieu of \eqref{eq:syncFlAggregation}, the updating rule becomes\footnote{We assume perfect reception of $\triangle\boldsymbol{\theta}_k(t)$ by the server.}
\begin{equation}
	\boldsymbol{\theta}(t+1)=\tilde{\boldsymbol{\theta}}(t)+\sum_{ k\in\Pi(t)}w_k\triangle\boldsymbol{\theta}_k(t),\label{eq:update_MTDC}
\end{equation}
with $\triangle\boldsymbol{\theta}_k(t)$ defined in \eqref{eq:triThetak} and $\boldsymbol{\theta}_k(t,0)=\tilde{\boldsymbol{\theta}}(t)$, to keep the models at the server and the devices synchronized.
%\zheng{The DiC transmission is in the DL. Why should the UL aggregation rule been updated? To keep the global model synchronized between the server and the devices? This should be clarified.}
%\zheng{Add in footnote that here we assume $\Delta \theta_k$ are perfected received.}

\begin{figure*}[t!]
	\centering
	\includegraphics[scale=0.5]{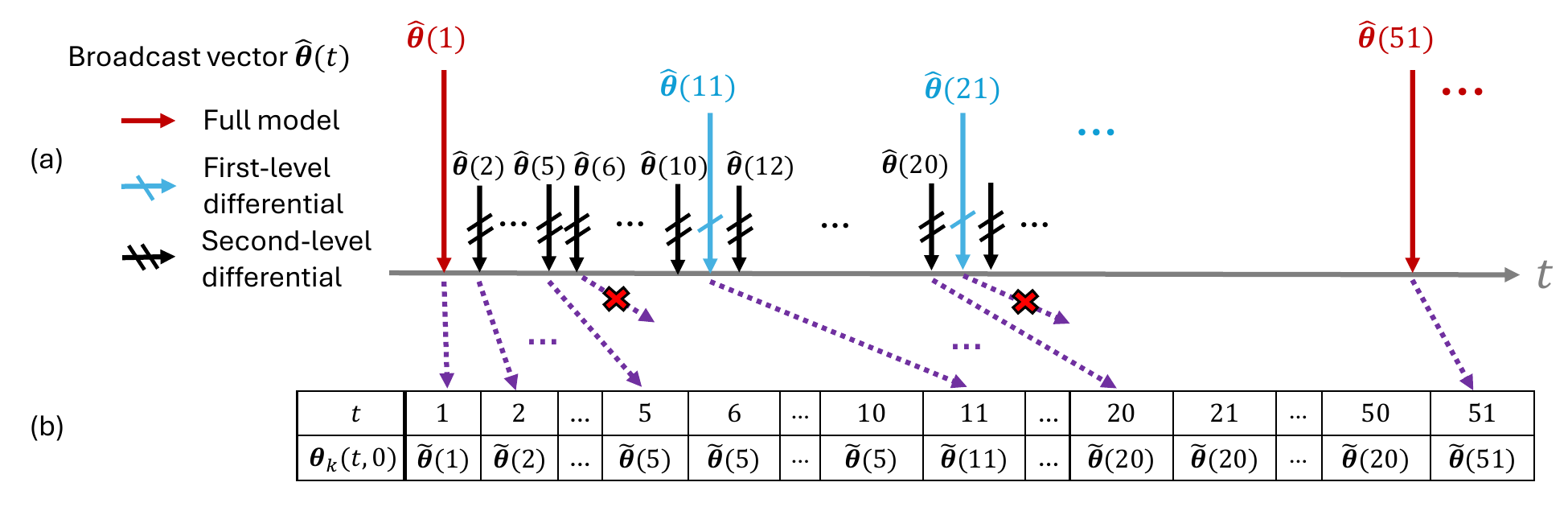}
	\caption{(a) The server broadcasts a full model at times $1,51,101,...$, a first-level differential update at times $11,21,...$, and a second-level differential update at times $2,...,10$, $12,...,20,...$. (b) Reception of $\hat{\boldsymbol{\theta}}(t)$ and possession of the corresponding reference model determine the adopted model at device $k$, $\boldsymbol{\theta}_k(t,0)$, to be either $\tilde{\boldsymbol{\theta}}(t)$ or $\boldsymbol{\theta}_k(t-1,0)$.}
	\label{fig:broadcastTime}
\end{figure*}

An illustrative example with $\mathcal{T}_0=\{1,51,...\}$, $\mathcal{T}_1=\{11,21,...\}$, and $\mathcal{T}_2=\{2,3,...\}$ is in Fig. \ref{fig:broadcastTime}a:
\begin{itemize}
	\item $\hat{\boldsymbol{\theta}}(1)$ and $\hat{\boldsymbol{\theta}}(51)$ are the compressed full models corresponding to $\boldsymbol{\theta}(1)$ and $\boldsymbol{\theta}(51)$. Thus, $\hat{\boldsymbol{\theta}}(1)=Q_0(\boldsymbol{\theta}(1))$ and $\hat{\boldsymbol{\theta}}(51)=Q_0(\boldsymbol{\theta}(51))$.
	\item A first-level differential update $\hat{\boldsymbol{\theta}}(11)$ carries the difference between $\boldsymbol{\theta}(11)$ and $\tilde{\boldsymbol{\theta}}(1)$ after compression, i.e., $Q_1(\boldsymbol{\theta}(11)-\tilde{\boldsymbol{\theta}}(1))$. Similarly, $\hat{\boldsymbol{\theta}}(21)=Q_1(\boldsymbol{\theta}(21)-\tilde{\boldsymbol{\theta}}(11))$. 
	\item A second-level differential update carries the difference between $\boldsymbol{\theta}(t)$ and $\tilde{\boldsymbol{\theta}}(t-1)$ after compression, e.g., $\hat{\boldsymbol{\theta}}(2)=Q_2(\boldsymbol{\theta}(2)-\tilde{\boldsymbol{\theta}}(1))$ and $\hat{\boldsymbol{\theta}}(6)=Q_2(\boldsymbol{\theta}(6)-\tilde{\boldsymbol{\theta}}(5))$.
\end{itemize}
Wireless transmissions are susceptible to decoding failures. 
We define the decoding failure probability of device $k$ at iteration $t$ as $P_{k,i}^{(t)}$, with $t\in\mathcal{T}_i$, $i\in\{0,1,2\}$.
To ensure  successful reconstruction, the transmissions of $\hat{\boldsymbol{\theta}}(t), t\in\mathcal{T}_i$, $i\in\{0,1,2\}$ are coded such that
$P^{(t)}_{k,0}\ll P^{(t)}_{k,1}\leq P^{(t)}_{k,2}$.\footnote{Making full-model broadcasts available to all facilitates device participation in model evolution under differential coding schemes.}
%\zheng{Need to be revised}
%\zheng{"Code rate" has different meanings in source coding and channel coding. Be specific about what you mean here.}
%\zheng{reception with higher decoding success probability or reception with higher accuracy? This connects to my question about the meaning of "code rate".}
When a device fails to decode $\hat{\boldsymbol{\theta}}(t)$, the local training is based on an outdated model $\tilde{\boldsymbol{\theta}}(\tau)$, for some $\tau<t$.
% which generally leads to slow convergence.
The proposed scheme enables   re-synchronization with the server, improving model staleness in case of decoding failures.   Fig. \ref{fig:broadcastTime}b
exemplifies:
\begin{itemize}
	\item When $\hat{\boldsymbol{\theta}}(t)$ is decoded successfully by device $k$ and the reference model is available (e.g., at $t=2,11$, the reference $\tilde{\boldsymbol{\theta}}(1)$ is available), the adopted model for local training is up-to-date, i.e., $\boldsymbol{\theta}_k(t,0)=\tilde{\boldsymbol{\theta}}(t)$.
	\item When device $k$ misses $\hat{\boldsymbol{\theta}}(6)$, it has to use $\tilde{\boldsymbol{\theta}}(5)$, the model reconstructed at $t=5$ as the adopted model, until $t=11$, at which point it can rely on the first-level differential $\hat{\boldsymbol{\theta}}(11)$ and the memory of $\tilde{\boldsymbol{\theta}}(1)$ to compute $\tilde{\boldsymbol{\theta}}(11)$.
	\item When a device misses the transmitted information $\hat{\boldsymbol{\theta}}(21)$, it has to adopt the reconstructed model $\tilde{\boldsymbol{\theta}}(20)$ until the next full-model broadcast at $t=51$.
\end{itemize}
Compared to the baseline DiC where $\boldsymbol{\theta}_k(50,0)=\tilde{\boldsymbol{\theta}}(5)$ (due to decoding failure of $\hat{\boldsymbol{\theta}}(6)$), the proposed MTDC improves the staleness with $\boldsymbol{\theta}_k(50,0)=\tilde{\boldsymbol{\theta}}(20)$.

\begin{remark}
Note that with transmitting either full models or differentially coded updates, the communication frequency is the same; the difference lies in what the communicated message contains.
When successive global models are highly correlated, differential updates exhibit a much smaller magnitude and dynamic range than full models, thus requiring fewer quantization bits.
\end{remark}

In the following sections, we explain in detail how MTDC is implemented (Sec. \ref{sec:flmtdc}). Then, we provide a   convergence analysis (Sec. \ref{sec:conv_study}), which motivates the age-aware MTDC and scheduling design proposed in Sec. \ref{sec:age-aware-design}.
Table \ref{tab:param} summarizes the  notation.
%\zheng{The effects of decoding failures should be emphasized.}
\begin{figure*}[t!]
	\centering
	\includegraphics[scale=0.6]{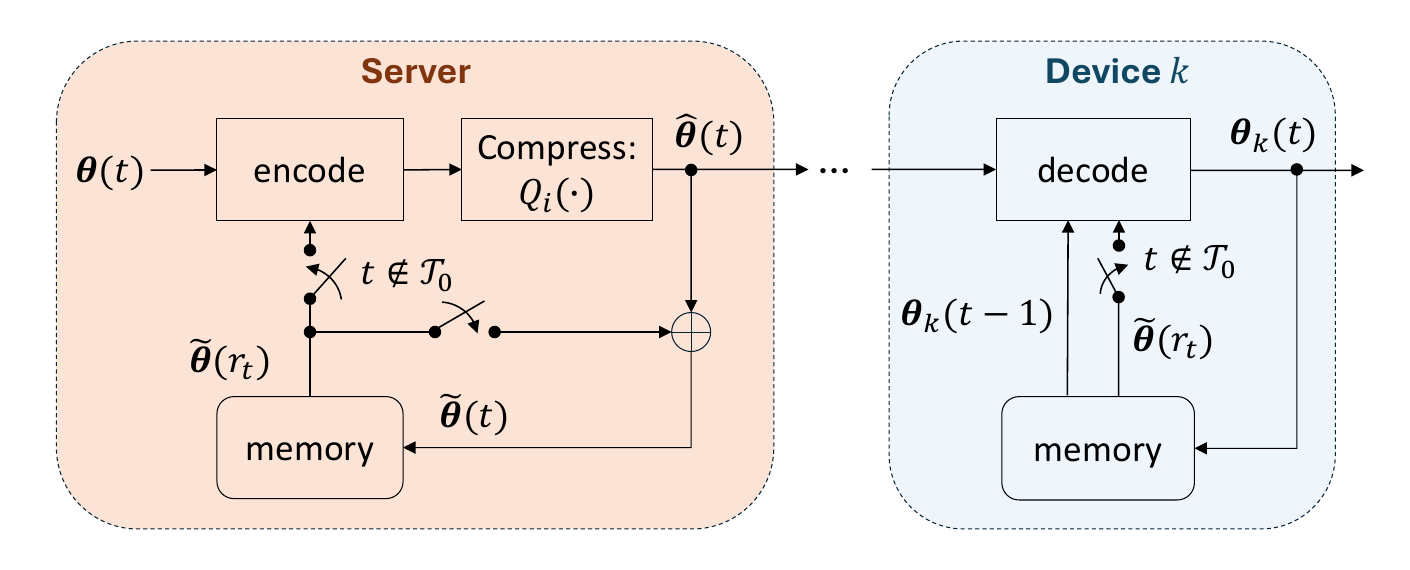}
	\caption{Block diagram of an FL system with MTDC.}
	\label{fig:blockDiagram}
\end{figure*}
\begin{table}[t]
	\caption{Parameter definitions at iteration $t$} %title of the table
	\centering % centering table
	\normalsize
	\begin{tabular}{|c|c|} % creating eight columns
		\hline
		\textbf{Parameter}& \textbf{Definition}\\
		\hline
		%\\[1ex] % inserts single-line
		$\boldsymbol{\theta}(t)$/$\tilde{\boldsymbol{\theta}}(t)$ & true/reconstructed model\\
        $\hat{\boldsymbol{\theta}}(t)$&transmitted vector\\
		$Q_i(\boldsymbol{\theta})$ & compression function for $t\in\mathcal{T}_i,i=0,1,2$\\
		  $\triangle\boldsymbol{\theta}_k(t)$&model update from device $k$\\
          \multirow{2}{*}{$P_{k,i}^{(t)}$}&decoding failure probability of\\
        &device $k$ for $t\in\mathcal{T}_i,i=0,1,2$\\
		\multirow{2}{*}{$\boldsymbol{\theta}_k(t)$} & adopted model before local training\\
        &at device $k$ (i.e., $\boldsymbol{\theta}_k(t)=\boldsymbol{\theta}_k(t,0)$)\\
        $r_t$& timestamp of reference model. See \eqref{eq:t_ref}\\
		$a_k^{(t)}$& age of $\boldsymbol{\theta}_k(t)$\\%[1ex] % [1ex] adds vertical space
		\hline % inserts single-line
	\end{tabular}
	\label{tab:param}
\end{table} 

\section{Federated Learning with Mixed-Timescale Differential Coding}
\label{sec:flmtdc}
For an FL system with MTDC, at iteration $t$, $\tilde{\boldsymbol{\theta}}(\tau),\tau\leq t,$ is the global model that a device has worked on in order to compute its local update. This model  also serves as the base model for the server to compute the broadcast for some of the future iterations
(differential updates).
To simplify notation, we denote the locally adopted model before local training at device $k$ by $\boldsymbol{\theta}_k(t)$ (i.e., $\boldsymbol{\theta}_k(t)=\boldsymbol{\theta}_k(t,0)$).
%, $\hat{\boldsymbol{\theta}}(\tau),\tau>t$.
%The block diagram in 
Fig. \ref{fig:blockDiagram} explains:
\begin{itemize}
	\item how the server computes the broadcast vector $\hat{\boldsymbol{\theta}}(t)$ and
    the reconstructed model $\tilde{\boldsymbol{\theta}}(t)$, and;
	\item how device $k$ computes the adopted model $\boldsymbol{\theta}_k(t)$, based on the received  $\hat{\boldsymbol{\theta}}(t)$ and on knowledge of previous reconstructed models (e.g., $\tilde{\boldsymbol{\theta}}(\tau)$ for some $\tau<t$), 
\end{itemize}
at any iteration $t$. More details are elaborated as follows.

\subsection{Server-Side Operations}

\label{sec:srv-opt}
As illustrated in Fig. \ref{fig:blockDiagram}, the broadcast vector at iteration $t$ is
\begin{equation}
	\hat{\boldsymbol{\theta}}(t)=\begin{cases}
		Q_0(\boldsymbol{\theta}(t)), & t\in\mathcal{T}_0\\
		Q_i(\boldsymbol{\theta}(t)-\tilde{\boldsymbol{\theta}}(r_t)), & t\in\mathcal{T}_i,i=1,2
	\end{cases},\label{eq:h_hat_def}
\end{equation}
where $r_t$ is the timestamp of the reference model
\begin{equation}
	r_t=\begin{cases}
		\max\{\tau|\tau\in\mathcal{T}_0\bigcup \mathcal{T}_1,\tau<t\}, & t\in\mathcal{T}_1\\
		t-1, & t\in\mathcal{T}_2 
		\label{eq:t_ref}
	\end{cases},
\end{equation}
being either the last full-model broadcast or  first-level differential broadcast (for $t\in\mathcal{T}_1$),
or the model at the previous iteration (for $t\in\mathcal{T}_2 $).

We consider $\nu_i$-level random quantizers,\footnote{
We adopt the random quantizer for its unbiasedness property.
    The quantization noise has zero mean,
    conditioned  on the iterate, which facilitates a convergence analysis of the proposed algorithm.} for the different levels of differential coding, $i=0,1,2$ \cite{Alistarh2017QSGDCS}. 
That is, for $i\in\{1,2,3\}$, $\boldsymbol{\theta}\triangleq[x_1,...,x_d]$, the $j$-th element of $Q_i\left(\boldsymbol{\theta}\right)$ is
\begin{equation}
	\|\boldsymbol{\theta}\|_2\cdot\text{sign}(x_j)\cdot \mathcal{Z}_j(\boldsymbol{\theta},\nu_i),
\end{equation}
where
\begin{equation}
	\label{eq:zj}\mathcal{Z}_j(\boldsymbol{\theta},\nu_i)=\begin{cases}
		\left(X_j+1\right)/\nu_i,&\text{ with prob. }\nu_i|X_j|/\|\boldsymbol{\theta}\|_2-X_j\\
		X_j/\nu_i,&\text{ otherwise}
	\end{cases}
\end{equation}
is a random variable and $X_j=\lfloor\nu_i|x_j|/\|\boldsymbol{\theta}\|_2\rfloor$, $j=1,...,d$. Clearly, $Q_i(\boldsymbol{0})=\boldsymbol{0}$, and each pair in the following cases has the same probability distribution: (1) $Q_i(\eta\boldsymbol{\theta})$ and $\eta Q_i(\boldsymbol{\theta})$ for any $\eta>0$; (2) $Q_i(\boldsymbol{\theta})$ and $-Q_i(-\boldsymbol{\theta})$. 
For every transmission of $\boldsymbol{\theta}$, the quantization step is dynamically adjusted as $\|\boldsymbol{\theta}\|/\nu_i$.
$\mathcal{Z}_j(\cdot,\cdot), \forall j$, are real numbers taken from a finite set of at most $\nu_i+1$ rational numbers. This set of rational numbers in turn can be mapped one-to-one onto a finite set of integers.
\begin{remark}
Note that each $\nu_i$-level random quantizer $Q_i(\boldsymbol{\theta}), i=0,1,2,$ has the properties
	\begin{equation}
		\mathbb{E}\left[Q_i(\boldsymbol{\theta})|\boldsymbol{\theta}\right]=\boldsymbol{\theta},~~\mathbb{E}\left[\|\boldsymbol{\theta}-Q_i(\boldsymbol{\theta})\|_2^2|\boldsymbol{\theta}\right]\leq\sigma_i\|\boldsymbol{\theta}\|_2^2,\label{eq:rdmqnt}
	\end{equation}	
	where a higher $\nu_i$ gives a smaller precision constant $\sigma_i$. Since a full model, a first- and a second-level differential updates tend to have the highest to the lowest signal ranges respectively, the required $\nu_i$ for a sufficient signal quality decreases over $i$. This leads to $\sigma_2\geq\sigma_1\geq\sigma_0$.
	\label{rmk:rdmqnt}
\end{remark}
\begin{remark}
\label{rmk:qnt_bits}
For $t\in\mathcal{T}_i, \forall i$, the transmission of the $\nu_i$-level-quantized $\hat{\boldsymbol{\theta
}}(t)$ requires $d\left(\lceil\log_2(\nu_i+1)\rceil+1\right)+32$ bits.    
\end{remark}

The server computes the reconstructed global model by 
\begin{equation}
	\tilde{\boldsymbol{\theta}}(t)=\begin{cases}
		\hat{\boldsymbol{\theta}}(t), & t\in\mathcal{T}_0\\
		\hat{\boldsymbol{\theta}}(t)+\tilde{\boldsymbol{\theta}}(r_t), & \text{otherwise,}
	\end{cases}
	\label{eq:mdl_rcst}
\end{equation}
which is then saved in the memory of the server. This reconstructed model, $\tilde{\boldsymbol{\theta}}(t)$, serves as a reference model for future differential broadcasts.
%The broadcast model $\hat{\boldsymbol{\theta}}(t)$ can be computed as
%\begin{equation*}
%	\hat{\boldsymbol{\theta}}(t)=\mathcal{B}\left(\triangle\boldsymbol{\theta}(t-1),\{\boldsymbol{\theta}^{(i)}\}_{i=0}^{2}\right)+\boldsymbol{e}(t)
%	%\label{eq:bcst_spf}
%\end{equation*}
%for some function $\mathcal{B}$ and compression noise $\boldsymbol{e}(t)$.
%$\triangle\boldsymbol{\theta}(t-1)$ is the aggregated model update from the previous iteration, and $\{\boldsymbol{\theta}^{(i)}\}_{i=0}^{2}$ are the stored memory for the last full, accumulated first-level and second-level differential models respectively.

\subsection{Device-Side Operations}
\label{sec:dvc-opt}
Device $k$ relies on  $\hat{\boldsymbol{\theta}}(t)$  broadcast by the server, together with the reference model reconstructed at an earlier iteration, to compute the adopted model $\boldsymbol{\theta}_k(t)$ before local training. That is,
\begin{equation}
	\boldsymbol{\theta}_k(t)=\begin{cases}
		\tilde{\boldsymbol{\theta}}(t),&\text{$\hat{\boldsymbol{\theta}}(t)$ received, $t\in\mathcal{T}_0$; or}\\
		&\text{$\hat{\boldsymbol{\theta}}(t)$ received, $t\notin\mathcal{T}_0$, and $\tilde{\boldsymbol{\theta}}(r_t)$}\text{ available}\\
		\boldsymbol{\theta}_k(t-1),&\text{otherwise}
	\end{cases}
	\label{eq:adopted_model}
\end{equation}
where  $r_t$ and $\tilde{\boldsymbol{\theta}}(t)$ are defined in  \eqref{eq:t_ref} and \eqref{eq:mdl_rcst}, respectively.
A device has $\tilde{\boldsymbol{\theta}}(r_t)$ in its memory only  when it has successfully reconstructed it in iteration $r_t$.
If $\boldsymbol{\theta}_k(t)=\tilde{\boldsymbol{\theta}}(t)$, corresponding to a successful reconstruction of the latest model, device $k$ saves $\tilde{\boldsymbol{\theta}}(t)$ in its memory for future computation. $\boldsymbol{\theta}_k(t)$ will be saved as well in case of a future model reconstruction failure.
%Specifically, at iteration $t$, for each device $k$, such a reconstructed model can be expressed as
%\begin{equation*}
%	\tilde{\boldsymbol{\theta}}_k(t)=\mathcal{R}\left(\hat{\boldsymbol{\theta}}(t),\{\boldsymbol{\theta}_k^{(i)}\}_{i=0}^{2}\right)
%	%\label{eq:dev_rct}
%\end{equation*}
%for some function $\mathcal{R}$. The memory buffers $\{\boldsymbol{\theta}_k^{(i)}\}_{i=0}^{2}$ keep track of the latest broadcast full model, and the accumulated model differentials of first-level and second-level respectively.
%A device can reconstruct the latest model only if it receives $\hat{\boldsymbol{\theta}}(t)$ successfully. 

%Following \eqref{eq:adopted_model}, if a device misses a second-level update, it can only use the last known model until the next first-level or full model update. 
%Similarly, missing a first-level update results in adopting an outdated model until the next full model broadcast. 
\subsection{Discussion on Communication and Memory Overhead}
\label{rmk:fairComp}
    Compared to the full model broadcast, the two-level MTDC requires additional communication  of $\lceil\log_2 (t-r_t)\rceil$ and $2$ bits, for transmitting the timestamp of the reference model, and to convey the broadcast model type information in each transmission block. This extra signaling overhead is negligible compared to the maximum payload size of a data packet. Also, the differential-coding-based schemes effectively improve the communication efficiency, but require additional memory for storing the reference model.
\begin{remark}\label{rmk:deFLdisc}
In principle, the MTDC mechanism is applicable also to decentralized FL frameworks \cite{xing2021flo,yan2024perf}.
However, the modeling would be more involved as the links between different devices may have different quality and fail independently, and each node needs to keep track of historical model information for all its neighbors. 
Note that MTDC may not be directly applicable to decentralized FL with over-the-air computation, as in this case the aggregation of analog signals makes it impossible to distinguish and track individual local models separately.
\end{remark}

\section{Convergence Analysis}
\label{sec:conv_study}

In our MTDC FL system, some devices may conduct the local training (gradient computation) based on outdated models.
The question is then, whether convergence of the learning algorithm can be guaranteed.
In the following analysis, we answer this question affirmatively under standard assumptions on the objectives and some idealized additional assumptions on the model:
%We conduct theoretic analyses for an FL system with MTDC. To focus on the effect of device decoding failures and broadcast model quantization on the convergence of learning performance, we provide the analyses in simpler scenarios where 
no device scheduling ($\Pi(t)=\mathcal{K}$), 
a single local gradient step per iteration ($E=1$),
no sampling noise in the gradient computation ($\mathcal{B}_k(t,0)=\mathcal{S}_k,\forall k$), and no quantization or communication noise in the uplink gradient transmissions. 
In more detail, these assumptions are as follows.
%The essential assumptions, system properties and parameters are introduced as follows.
\begin{assumption}
	(Smoothness): Each local loss function $F_k(\boldsymbol{\theta}),\forall k$ is $L$-smooth, i.e., $\forall \boldsymbol{\theta}_1,\boldsymbol{\theta}_2\in\mathbb{R}^d$,\footnote{$\|\cdot\|$ denotes the Euclidean norm.}
	\begin{equation*}
		\|\nabla F_k\left(\boldsymbol{\theta}_1\right)-\nabla F_k\left(\boldsymbol{\theta}_2\right)\|\leq{L}\|\boldsymbol{\theta}_1-\boldsymbol{\theta}_2\|,
	\end{equation*}
	or equivalently,
	\begin{equation*}
		F_k(\boldsymbol{\theta}_1)-F_k(\boldsymbol{\theta}_2)\leq\nabla F_k(\boldsymbol{\theta}_2)^T(\boldsymbol{\theta}_1-\boldsymbol{\theta}_2)+\frac{L}{2}\|\boldsymbol{\theta}_1-\boldsymbol{\theta}_2\|^2.
	\end{equation*}
	\label{asp:smth}	
\end{assumption}
\begin{assumption}
	(Strong convexity): Each local loss function $F_k(\boldsymbol{\theta}),\forall k$ is $\mu$-strongly convex, i.e., $\forall \boldsymbol{\theta}_1, \boldsymbol{\theta}_2\in\mathbb{R}^d$,
	\begin{equation*}
		%\label{eq:asmp2_localSGD}
		F_k\left(\boldsymbol{\theta}_1\right)-F_k\left(\boldsymbol{\theta}_2\right)\geq\nabla F_k\left(\boldsymbol{\theta}_2\right)^T\left(\boldsymbol{\theta}_1-\boldsymbol{\theta}_2\right)+\frac{\mu}{2}\|\boldsymbol{\theta}_1-\boldsymbol{\theta}_2\|^2.
	\end{equation*}
	\label{asp:conv}
\end{assumption}

Let $a_k^{(t)}$ be the age of $\boldsymbol{\theta}_k(t)$ relative to $\tilde{\boldsymbol{\theta}}(t)$, which measures the outdatedness of the adopted model before local training.
Then, $a_k^{(t)}=c$ when $\boldsymbol{\theta}_k(t)=\tilde{\boldsymbol{\theta}}(t-c)$, for $c\geq 0$.
%We define the global and local optimum respectively as $F^*=\min F(\boldsymbol{\theta})=F(\boldsymbol{\theta}^*)$ and $F_k^*=\min F_k(\boldsymbol{\theta})=F(\boldsymbol{\theta}^*_k)$, $\forall k$. 
We define 
\begin{equation}
\zeta=\sum_{k\in\mathcal{K}}w_k\|\boldsymbol{\theta}^*-\boldsymbol{\theta}_k^*\|^2\label{eq:zeta_hetero}    
\end{equation}
to quantify the device heterogeneity, where $\boldsymbol{\theta}^*=\argmin F(\boldsymbol{\theta})$ and $\boldsymbol{\theta}_k^*=\argmin F_k(\boldsymbol{\theta})$.
Then, the device model update $\triangle\boldsymbol{\theta}_k(t)$ becomes 
\begin{equation}
	\triangle\boldsymbol{\theta}_k(t)=-\eta\nabla F_k(\tilde{\boldsymbol{\theta}}(t-a_k^{(t)})).\label{eq:grdtAge}
\end{equation}

\begin{assumption}
	A full model is encoded and broadcast in a way that every device can receive it. Consequently, there exists $a_{\lim}>0$ such that $a_k^{(t)}\leq a_{\lim},\forall k,\forall t$. This age limit  $a_{\lim}$ is no larger than the time difference between any two adjacent full-model broadcasts. Furthermore, we assume no quantization error for a full-model broadcast, that is, $Q_0(\boldsymbol{\theta})=\boldsymbol{\theta}, \forall \boldsymbol{\theta}$.
	\label{asp:ageLim}
\end{assumption}

%\subsection{Theoretical Convergence Analysis}
The following is our main theoretical result.
\begin{theorem}
	Under Assumptions \ref{asp:smth}-\ref{asp:ageLim}, with a  stepsize satisfying
	\begin{equation}
		\eta<\frac{\mu}{2L^2\left[2+\hat{\sigma}a_{\lim}+4\hat{\sigma}a_{\lim}^2\left(\sqrt{\zeta}+2\right)\right]},
		\label{ieq:stepSize_ieq}
	\end{equation}
	%$\eta_t=\frac{\beta}{t+\kappa}$ with $\beta>1/\mu$,
%	\begin{align}
%		&\kappa>2\left\{\frac{8L^4\beta^4\hat{\sigma}^2a_{\lim}^4\zeta}{\mu\left(\mu\beta-1\right)^2}+\left(2\kappa_0+5a_{\lim}\right)\left(1+2\sqrt{\zeta\kappa_0/\mu}\right)\right\},\label{ieq:kappaCond}
%	\end{align}
%	\begin{equation*}
%		\kappa_0=L^2\beta^2\left[1+\hat{\sigma}\left(4a_{\lim}^2+3a_{\lim}+1\right)\right]\Big/\left(\mu\beta-1\right),
%	\end{equation*}
where
	\begin{align}
		\hat{\sigma}=\sqrt{\sigma_2}+(\sigma_2+1)\max(\sigma_1,\sqrt{\sigma_1},2),
		\label{eq:sigma}
	\end{align}
	the following result holds:
%	\begin{align}
%		&\mathbb{E}\left[\|\tilde{\boldsymbol{\theta}}(t+1)-\boldsymbol{\theta}^*\|^2\right]\nonumber\\
%		&<\left(1-\eta\mu\varsigma\right)^{\lfloor\frac{t-1}{2a_{\lim}+1}\rfloor+1}\mathbb{E}\left[\|\tilde{\boldsymbol{\theta}}(1)-\boldsymbol{\theta}^*\|^2\right]+\eta^2\epsilon\sum_{i=0}^{t-1}\left(1-\eta\mu\varsigma\right)^i,
%		\label{ieq:optGapMain}
%	\end{align}
	\begin{align}
		&\mathbb{E}\left[\|\tilde{\boldsymbol{\theta}}(t+1)-\boldsymbol{\theta}^*\|^2\right]\nonumber\\
		&\le \left(1-\frac{\eta\mu}{2}\right)^{\lfloor\frac{t-1}{3a_{\lim}+1}\rfloor+1}\mathbb{E}\left[\|\tilde{\boldsymbol{\theta}}(1)-\boldsymbol{\theta}^*\|^2\right]+\frac{2\eta\epsilon}{\mu},
		\label{ieq:optGapMain}
	\end{align}
    where
	\begin{equation}
		\epsilon=L^2\sqrt{\zeta}\left\{\sqrt{\zeta}\left[\hat{\sigma}a_{\lim}\left(2a_{\lim}+1\right)+2\right]+4\hat{\sigma}a_{\lim}^2\right\}.\label{eq:epsilon}
	\end{equation}
	The expectation is taken  w.r.t. the randomness in the  device decoding failures and the quantization of the
    downlink broadcast of $\hat{\boldsymbol{\theta}}(t)$. See Appendix \ref{apx:main_thm} for the proof.
	\label{thm:main}
\end{theorem}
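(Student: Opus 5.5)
We track the reconstructed global model $\tilde{\boldsymbol{\theta}}(t)$ and write one MTDC step as a perturbed gradient step. With $E=1$, full participation and exact gradients, \eqref{eq:update_MTDC} and \eqref{eq:grdtAge} give
\begin{equation*}
\boldsymbol{\theta}(t+1)=\tilde{\boldsymbol{\theta}}(t)-\eta\sum_{k}w_k\nabla F_k\big(\tilde{\boldsymbol{\theta}}(t-a_k^{(t)})\big).
\end{equation*}
We then set $\tilde{\boldsymbol{\theta}}(t+1)=\boldsymbol{\theta}(t+1)+\boldsymbol{e}(t+1)$, where $\boldsymbol{e}(t+1)$ is the quantization error. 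By \eqref{eq:h_hat_def} and \eqref{eq:mdl_rcst}, $\boldsymbol{e}(t+1)=Q_i(\boldsymbol{v})-\boldsymbol{v}$ with $\boldsymbol{v}=\boldsymbol{\theta}(t+1)-\tilde{\boldsymbol{\theta}}(r_{t+1})$, and $\boldsymbol{e}=\boldsymbol{0}$ on $\mathcal{T}_0$ by Assumption \ref{asp:ageLim}. By \eqref{eq:rdmqnt} this error has conditional mean zero and second moment at most $\sigma_i\|\boldsymbol{v}\|^2$. Hence
\begin{equation*}
\mathbb{E}\|\tilde{\boldsymbol{\theta}}(t+1)-\boldsymbol{\theta}^*\|^2=\mathbb{E}\|\boldsymbol{\theta}(t+1)-\boldsymbol{\theta}^*\|^2+\mathbb{E}\|\boldsymbol{e}(t+1)\|^2 .
\end{equation*}

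\textbf{Step 1: bound the noiseless part.} We split the stale gradient as $\nabla F_k(\tilde{\boldsymbol{\theta}}(t))$ plus the difference $\nabla F_k(\tilde{\boldsymbol{\theta}}(t-a_k^{(t)}))-\nabla F_k(\tilde{\boldsymbol{\theta}}(t))$. The fresh part gives the usual contraction: strong convexity and smoothness yield $(1-\eta\mu)\|\tilde{\boldsymbol{\theta}}(t)-\boldsymbol{\theta}^*\|^2$ plus an $\eta^2L^2$ term, for which $\eta<\mu/(4L^2)$ suffices. The staleness part is controlled with $L$-smoothness and telescoping,
\begin{equation*}
\|\tilde{\boldsymbol{\theta}}(t)-\tilde{\boldsymbol{\theta}}(t-a)\|\le\sum_{s=t-a}^{t-1}\|\tilde{\boldsymbol{\theta}}(s+1)-\tilde{\boldsymbol{\theta}}(s)\|,\qquad a\le a_{\lim}.
\end{equation*}
Each increment is $\eta$ times a gradient at some past iterate, plus $\boldsymbol{e}(s+1)$. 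For the gradient we use $\|\nabla F_k(\boldsymbol{\theta})\|\le L(\|\boldsymbol{\theta}-\boldsymbol{\theta}^*\|+\|\boldsymbol{\theta}^*-\boldsymbol{\theta}_k^*\|)$. Averaging with weights $w_k$ and applying Jensen/Cauchy--Schwarz produces the $\sqrt{\zeta}$ and $\zeta$ terms. The cross terms are handled with Young's inequality, which produces the $\sqrt{\zeta}$ products in \eqref{eq:epsilon}.

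\textbf{Step 2: bound the quantization error.} For $i=2$, $\boldsymbol{v}$ is a one-step increment, so $\|\boldsymbol{v}\|$ is $O(\eta)$ times gradient norms. For $i=1$, $\boldsymbol{v}$ spans at most $t-r_t\le a_{\lim}$ steps. Its intermediate second-level reconstructions contribute their own errors, which gives the factor $(\sigma_2+1)\sigma_1$. Taking square roots to convert second moments into norms produces the $\sqrt{\sigma_1},\sqrt{\sigma_2}$ entries. Collecting everything defines $\hat{\sigma}$ in \eqref{eq:sigma}, and we obtain
\begin{equation*}
\mathbb{E}\|\boldsymbol{e}(s)\|^2\le\eta^2L^2\hat{\sigma}\cdot a_{\lim}\max_{s-2a_{\lim}\le\tau\le s}\big(\mathbb{E}\|\tilde{\boldsymbol{\theta}}(\tau)-\boldsymbol{\theta}^*\|^2+\zeta\big).
\end{equation*}
The window length $2a_{\lim}$ arises because the stale point may itself be $a_{\lim}$ old and the reference another $a_{\lim}$ older.

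\textbf{Step 3: close a delayed recursion.} Writing $D_t=\mathbb{E}\|\tilde{\boldsymbol{\theta}}(t)-\boldsymbol{\theta}^*\|^2$, Steps 1 and 2 give
\begin{equation*}
D_{t+1}\le(1-\eta\mu)D_t+\eta^2L^2C\max_{t-3a_{\lim}\le\tau\le t}D_\tau+\eta^2\epsilon,
\end{equation*}
with $C=2+\hat{\sigma}a_{\lim}+4\hat{\sigma}a_{\lim}^2(\sqrt{\zeta}+2)$. The step-size condition \eqref{ieq:stepSize_ieq} is exactly $\eta L^2C<\mu/2$, so the recursion becomes $D_{t+1}\le D_t-\eta\mu D_t+\tfrac{\eta\mu}{2}\max_{\text{window}}D_\tau+\eta^2\epsilon$. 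We then use a standard block (Lyapunov-type max-over-window) induction. Let $M_j$ be the maximum of $D$ over the $j$-th block of length $3a_{\lim}+1$. The recursion yields $M_{j+1}\le(1-\eta\mu/2)M_j+\eta^2\epsilon$. Iterating and summing the geometric series gives the exponent $\lfloor (t-1)/(3a_{\lim}+1)\rfloor+1$ and the constant $2\eta\epsilon/\mu$ in \eqref{ieq:optGapMain}.

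The main obstacle is Step 2 together with the bookkeeping in Step 1. The first-level error depends on reconstructions from several earlier iterations, which are themselves noisy and stale. These nested dependencies must all be bounded by a single window maximum of $D_\tau$, with constants that match $\hat{\sigma}$. I would first prove a crude lemma of the form $\mathbb{E}\|\tilde{\boldsymbol{\theta}}(s+1)-\tilde{\boldsymbol{\theta}}(s)\|^2\lesssim\eta^2L^2(\cdot)$, and then reuse it in both steps.
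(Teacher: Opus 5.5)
Your outline is the paper's proof. The quantization error is orthogonal to $\boldsymbol{\theta}(t+1)-\boldsymbol{\theta}^*$; the paper's expansions of $\delta_{t+1}^2$, with $\delta_t=\|\tilde{\boldsymbol{\theta}}(t)-\boldsymbol{\theta}^*\|$, for $t+1\in\mathcal{T}_2$ and $\mathcal{T}_1$ amount to exactly this split. Strong convexity gives the $(1-\eta\mu)\delta_t^2$ contraction. The stale-gradient term is controlled by telescoping $\tilde{\boldsymbol{\theta}}(t)-\tilde{\boldsymbol{\theta}}(t-a_k^{(t)})$, which is Lemma~\ref{lm:outdateCrossAge}. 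The first-level error is bounded through the nested second-level reconstructions. Everything then closes via a window-max recursion with factor $1-\eta\mu/2$.

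The gap is that the quantitative part, which is the real content of the statement, is asserted rather than derived, and the explicit bound you give in Step~2 is wrong. It is too small by a factor of order $a_{\lim}$. For $t+1\in\mathcal{T}_1$ the residual $\boldsymbol{y}_{t+1}=\boldsymbol{\theta}(t+1)-\tilde{\boldsymbol{\theta}}(r_{t+1})$ is a sum of up to $a_{\lim}$ increments. With the tools you list this gives $\mathbb{E}\|\boldsymbol{e}\|^2\lesssim 2\sigma_1(\sigma_2+1)a_{\lim}^2\eta^2L^2(\max_\tau\mathbb{E}\delta_\tau^2+\zeta)$, cf.\ \eqref{ieq:2ndTerm}. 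This $a_{\lim}^2$ is the source of the $2\hat{\sigma}a_{\lim}^2$ in the step-size bracket and of $\hat{\sigma}a_{\lim}(2a_{\lim}+1)\zeta$ in $\epsilon$.

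Likewise, your coefficient $2+\hat{\sigma}a_{\lim}+4\hat{\sigma}a_{\lim}^2(\sqrt{\zeta}+2)$ on the window maximum is read off the statement. In the paper it is the sum of three pieces:
\begin{itemize}
\item $2$, from $\eta^2\|\boldsymbol{G}\|^2$, where $\boldsymbol{G}$ is the aggregated stale gradient;
\item $\hat{\sigma}a_{\lim}+2\hat{\sigma}a_{\lim}^2$, from the first-level error;
\item $\hat{\sigma}a_{\lim}^2(4\sqrt{\zeta}+3)+3\hat{\sigma}a_{\lim}^2$, from staleness.
\end{itemize}
The staleness piece needs specific relaxations. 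You keep $\delta_t$ multiplied by each increment norm and use $\mathbb{E}\|Q_i(\boldsymbol{z})-\boldsymbol{z}\|\le\sqrt{\sigma_i}\|\boldsymbol{z}\|$; this, not Step~2, is where the square roots in $\hat{\sigma}$ enter. You bound $2a_{\lim}\sqrt{\zeta}\,\delta_t\le 2a_{\lim}\sqrt{\zeta}(\delta_t^2+1)$, and apply AM--GM over at most $3a_{\lim}$ lags. A homogeneous Young split $\sqrt{\zeta}\,\delta_t\le(\delta_t^2+\zeta)/2$ would instead put $\zeta$ rather than $\sqrt{\zeta}$ into $\epsilon$ and remove $\sqrt{\zeta}$ from the step-size condition.

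Your closing plan is a second-moment increment lemma combined with Young or Cauchy--Schwarz in expectation on the cross term. That is a legitimate alternative, and it is cleaner on one point: it separates $\delta_t$ from the increments before any conditional expectation is taken. Lemma~\ref{lm:outdateCrossAge} instead averages over the quantizer at iteration $\tau_i+1\le t$ while treating $\delta_t$ as fixed, although $\tilde{\boldsymbol{\theta}}(t)$ depends on that quantizer's output. However, your route produces its own constants. You would need to check that they are dominated by the stated $\hat{\sigma}$, step size and $\epsilon$, or else follow the paper's bookkeeping.

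Finally, in Step~3 the inequality $M_{j+1}\le(1-\eta\mu/2)M_j+\eta^2\epsilon$ does not follow directly. Later entries of block $j+1$ see earlier entries of the same block, and these can exceed $M_j$ when $M_j<2\eta\epsilon/\mu$. Instead, prove by induction inside each block that every entry of block $j$ is at most $C^j\mathbb{E}\delta_1^2+2\eta\epsilon/\mu$, where $C=1-\eta\mu/2$. Take block $0=\{1\}$ and blocks of length $3a_{\lim}+1$ after it. This gives exactly the exponent $\lfloor (t-1)/(3a_{\lim}+1)\rfloor+1$.
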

\begin{remark}
	A larger $a_{\lim}$ or a larger $\hat{\sigma}$ requires a smaller  learning rate, as indicated in \eqref{ieq:stepSize_ieq}. This increases  $1-\eta\mu/2$, which slows down the per-iteration contraction  in \eqref{ieq:optGapMain}. 
	Furthermore, $\mathbb{E}\left[\|\tilde{\boldsymbol{\theta}}(t+1)-\boldsymbol{\theta}^*\|^2\right]\rightarrow2\eta\epsilon/\mu$ when $t\rightarrow\infty$. The
    asymptotic error, $\eta\epsilon/\mu$, increases with $\epsilon$, indicating that smaller $a_{\lim}$ and $\hat{\sigma}$ can lead to better learning performance. 
	\label{rmk:main_age}
\end{remark}

%\zheng{We used many times the expression ``work on a model''. While this is understandable in informal discussion, in formal writing I think we should be more specific that "a device works on a model $\theta_k$" means a device uses $\theta_k$ as the current iterate (or solution estimate) before applying gradient descent to compute the next iterate. }
\begin{remark}
With the MTDC scheme, some devices will use outdated models in the local training, which will cause an error in the gradient updates transmitted to the
server. In the analysis, this error term, $\nabla F_k(\tilde{\boldsymbol{\theta}}(t-a_k^{(t)}))-\nabla F_k(\tilde{\boldsymbol{\theta}}(t))$, is treated as extra additive noise on the gradients. The magnitude of this  additive noise 
scales with  $ \Vert \tilde{\boldsymbol{\theta}}(\tau)-\boldsymbol{\theta}^* \Vert, \tau\leq t$. The consequence is a slowdown in the contraction speed. More explicitly, we have the following observation.
\end{remark}
\begin{propi}
\label{propOnly}
Because of the outdated models, the linear convergence factor changes  
from $(1-\eta\mu)^{t}$ nominally to $(1-\eta\mu/2)^{\lfloor\frac{t-1}{3a_{\lim}+1}\rfloor+1}$. 
Since $1-\eta\mu<1-\eta\mu/2<1$ and $\lfloor\frac{t-1}{3a_{\lim}+1}\rfloor+1\leq t$, we have $(1-\eta\mu/2)^{\lfloor\frac{t-1}{3a_{\lim}+1}\rfloor+1}>(1-\eta\mu)^{t}$.
Therefore, this represents a slow-down of the convergence.
\end{propi}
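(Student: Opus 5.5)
The proof of Proposition~\ref{propOnly} is a direct comparison of two contraction factors, so I will argue in two elementary steps plus a short justification of the ``nominal'' rate.

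\textbf{Baseline rate.} I first recall why $(1-\eta\mu)^t$ is the nominal rate. If no device used an outdated model ($a_k^{(t)}=0$ for all $k,t$) and there were no quantization, then \eqref{eq:update_MTDC} with \eqref{eq:grdtAge} is plain gradient descent on $F$. The function $F$ is $\mu$-strongly convex and $L$-smooth by Assumptions~\ref{asp:smth}--\ref{asp:conv}. Expanding $\|\boldsymbol{\theta}(t+1)-\boldsymbol{\theta}^*\|^2$ and using strong convexity and smoothness in the standard way gives a per-iteration factor $(1-\eta\mu)$, valid for $\eta\le 1/L$, which is implied by \eqref{ieq:stepSize_ieq} since $\mu\le L$. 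Iterating this gives $(1-\eta\mu)^t$. The factor $(1-\eta\mu/2)^{\lfloor\frac{t-1}{3a_{\lim}+1}\rfloor+1}$ is the one appearing in Theorem~\ref{thm:main}, which I take as given.

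\textbf{Comparing the bases.} Since $\eta\mu>0$, we have $1-\eta\mu<1-\eta\mu/2$. I also need $1-\eta\mu\ge 0$. This holds because \eqref{ieq:stepSize_ieq} gives $\eta<\mu/(4L^2)\le 1/(4\mu)$, so $\eta\mu<1/4$. Hence $0<1-\eta\mu<1-\eta\mu/2<1$.

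\textbf{Comparing the exponents.} Set $m=\lfloor\frac{t-1}{3a_{\lim}+1}\rfloor+1$. For $t\ge1$ we have $3a_{\lim}+1\ge1$, so $m\le (t-1)+1=t$.

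\textbf{Combining.} Since $1-\eta\mu/2\in(0,1)$, the function $n\mapsto(1-\eta\mu/2)^n$ is nonincreasing, so $(1-\eta\mu/2)^m\ge(1-\eta\mu/2)^t$. Raising the base inequality from the second step to the power $t\ge1$ gives $(1-\eta\mu/2)^t>(1-\eta\mu)^t$. Chaining the two yields $(1-\eta\mu/2)^m>(1-\eta\mu)^t$, so the guaranteed contraction after $t$ iterations is strictly weaker, i.e., convergence slows down.

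\textbf{Main obstacle.} The only delicate point is interpretive rather than computational. The claim compares upper bounds, namely the guarantee of Theorem~\ref{thm:main} against the ideal gradient-descent guarantee, not actual trajectories. The argument above is about these bounds, so I would state the conclusion as a comparison of convergence guarantees.
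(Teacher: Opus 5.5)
Your proof is correct and follows the paper's own argument: compare the bases $1-\eta\mu<1-\eta\mu/2<1$ and the exponents $\lfloor\frac{t-1}{3a_{\lim}+1}\rfloor+1\le t$, then chain the two monotonicity facts. You also verify $1-\eta\mu>0$ via $\eta<\mu/(4L^2)\le 1/(4\mu)$, which is needed to raise the base inequality to the $t$-th power and which the paper leaves implicit; this check and your plain gradient-descent justification of the nominal $(1-\eta\mu)^t$ rate are sound refinements.
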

Note that this situation is somewhat akin to stochastic gradient descent with noise
whose conditional second-order moment scales with the iterate, in which case the contraction speed also reduces (see, for example,
\cite{rei2019ext}, \cite[Th. 4.6]{bottou2018ML}, and the NC$^3$T   in \cite{erik2025uni}).
However, we stress that Proposition \ref{propOnly}  is valid for the specific convergence bound 
that we derived in \eqref{ieq:optGapMain}, which in turn holds under Assumptions \ref{asp:smth}-\ref{asp:ageLim}.

The convergence analysis of FedAvg with MTDC-based DL transmission is complex. Our convergence bound is rigorous and captures all phenomena. It makes no assumptions on the statistical distributions of the difference between consecutive global models, and it explains analytically how the model staleness caused by decoding failures impacts convergence. This in turn inspires the age-aware design introduced next.

\section{Federated Learning with Age-Aware Differential Coding and Device Scheduling}
\label{sec:age-aware-design}

The discussion so far has assumed that every device is scheduled for uplink transmission in every iteration. To make more efficient use of uplink resources, we next introduce uplink device scheduling. As shown in Theorem 1, the convergence bound depends on $a_{\lim}$, and the gradient updates obtained from stale models due to decoding failures generally lead to slower convergence. 
Motivated by this result, we propose a dynamic, age-aware version of MTDC and an associated age-aware scheduling policy.
%They are motivated by the fact that decoding failures result in stale local models, and that gradient updates based on such models slow down convergence, which can be seen explicitly
%from  the dependence on $a_{\lim}$ of the right hand side of \eqref{ieq:optGapMain} in Theorem 1. 
%with heterogeneous ``age'' (level of outdatedness). 
%Intuitively, as shown in \eqref{eq:syncFlAggregation}, the misaligned updates from outdated $\triangle\boldsymbol{\theta}_k(t),k\in\mathcal{K},$ contribute deviation to the model evolving course. Moreover, the theoretical analyses reveal that model staleness slows the convergence of learning performance.
%To address this issue, we aim to answer the following questions:
%\begin{itemize}
	%\item how to decide the transmit frequencies of full and different levels of differential model to maintain the balance between model staleness and resource consumption?%
%	\item can we improve model staleness at UL phase?
%\end{itemize}

\subsection{Age-Aware Mixed-Timescale Differential Coding}

Recall that MTDC alleviates the model staleness in the system, as exemplified in Fig. \ref{fig:broadcastTime}b. However, are there alternative ways of deciding $\mathcal{T}_0$, $\mathcal{T}_1$, and $\mathcal{T}_2$ other than the fixed periodic patterns considered in the given example? 
Intuitively, if decoding failures regularly happen, we should broadcast higher-level models more frequently. Otherwise, broadcasting lower-level models is more beneficial since it consumes less communication resources.

As mentioned in Section \ref{sec:mtdc_intro}, with appropriate channel coding, the decoding failure probabilities follow $P^{(t)}_{k,0}\ll P^{(t)}_{k,1}\leq P^{(t)}_{k,2}$.
%\zheng{$A_0, A_1$ and $A_2$ are not defined. }
We define 
\begin{equation}
	\label{eq:A0}
	A_0=\frac{1}{K}\sum_{k\in\mathcal{K}}P^{(t)}_{k,0}\left[a_k^{(t)}+1\right]
\end{equation}
\begin{equation}
	\label{eq:A1}
	A_1=\frac{1}{K}\sum_{k\in\mathcal{K}}\begin{cases}
		a_k^{(t)}+1,&a_k^{(r_{t+1})}>0\\
		P^{(t)}_{k,1}\left[a_k^{(t)}+1\right],&\text{otherwise}
	\end{cases}
\end{equation}
\begin{equation}
	\label{eq:A2}
	A_2=\frac{1}{K}\sum_{k\in\mathcal{K}}\begin{cases}
		a_k^{(t)}+1,&a_k^{(t)}>0\\
		P^{(t)}_{k,2},&\text{otherwise}
	\end{cases}
\end{equation}
Then, at any iteration $t$, we can predict the average device age at the next iteration $t+1$ by computing $A_i$ for $t+1\in\mathcal{T}_i$, $i=0,1,2$. 
%\zheng{Explain why we choose the highest-level first, then to lowest-level broadcast.}
Recall that the required resources for transmitting $\hat{\boldsymbol{\theta}}(t),t\in\mathcal{T}_i$, is decreasing with $i$, as stronger coding protection and more-level data quantization are applied for smaller $i$. 
To balance between improving model staleness and saving communication resources, we decide $t+1\in\mathcal{T}_i$, where $i=\max\{j|A_j\leq \bar{A},j=0,1,2\}$ and $\bar{A}$ is a predetermined age limit. This way, the average device age is expected to be below $\bar{A}$, while for $\hat{\boldsymbol{\theta}}(t)$ with $t \in \mathcal{T}_i$, a larger $i$ is preferred to save communication resources since lower-level differential updates are quantized with fewer bits.
We call this scheme age-aware MTDC (A-MTDC).

\subsection{Age-Aware Device Scheduling}
\label{sec:schAggr}

In the UL transmission phase of FL, device scheduling is typically implemented to reduce the number of communication links. 
Several works propose age-based scheduling designs 1) to guarantee fairness of device participation, which effectively tackles the issue of data heterogeneity \cite{yang2019agebased,hu2024ver,zheng2024aou,liu2026age,wang2025cov}; or 2) to down-weight the contributions of stale updates and therefore improve the learning performance \cite{zeng2025clut}.
In our   system, since the model update $\triangle\boldsymbol{\theta}_k(t)$ based on an outdated model (i.e., $\boldsymbol{\theta}_k(t)=\tilde{\boldsymbol{\theta}}(t-a_k^{(t)})$ with $a_k^{(t)}>0$) may negatively affect the FL performance, we prioritize devices with fresher models, by deciding  $\Pi(t)$ at random  
in every iteration, based on a set of age-aware scheduling probabilities $\{p_k(t)\}_{k=1}^K$.
We select these probabilities as
\begin{equation}
	p_k(t)=\frac{e^{-a_k^{(t)}/a_{\text{max}}}}{\sum_{i\in\mathcal{K}}e^{-a_i^{(t)}/a_{\text{max}}}},\forall k,
	\label{eq:schProb}
\end{equation}
where $a_{\text{max}}=\max_{k\in\mathcal{K}}a_k^{(t)}$.
Consequently,   devices with outdated models will be assigned smaller scheduling probabilities and thus are less likely to participate in the model aggregation.
\begin{remark}
Our proposed device scheduling policy prioritizes devices whose local model updates are computed based on fresher global models. 
 As shown in \cite{liu2026age,wang2025cov}, maintaining fairness in device participation is another important aspect for the learning performance in non-IID (independently and identically distributed) data scenarios. Note that in \cite{liu2026age} and \cite{wang2025cov}, the local updates are computed based on the same global model, while in our framework we need to deal with stale global models at the participating devices.  Finding the optimal balance between participation fairness and information freshness in the scheduling design could be worth exploring in future work.
\end{remark}

We summarize the   operation  of the proposed FL system in \textbf{Algorithm \ref{alg:overall}} (notation in Table \ref{tab:param1}). Detailed steps at the server and  devices are in \textbf{Algorithms \ref{alg:srv_flow}, \ref{alg:dev_flow}}, respectively.
%together with the server/device-side procedures in \textbf{Algorithm \ref{alg:mbc}} and \textbf{\ref{alg:dev_oprt}}.
%\zheng{Did you define the abbreviations used in the algorithm description?}
\begin{table}[t]
	\caption{The notation in the algorithms.} %title of the table
	\centering % centering table
	\normalsize
	\begin{tabular}{|c|c|} % creating eight columns
		\hline
		& \textbf{Definition} \\
		\hline
		%\\[1ex] % inserts single-line
		$\text{Lvl}_t$& $0$: $t\in\mathcal{T}_0$; $1$: $t\in\mathcal{T}_1$; $2$: $t\in\mathcal{T}_2$\\
        \emph{DlModelLvl}$(\cdot)$&the function to decide $t\in\mathcal{T}_0,\mathcal{T}_1,$ or $\mathcal{T}_2$\\
        $\bar{A}$&age limit of A-MTDC\\
		\emph{BroadcastMdl}$(\cdot)$&the function to compute $\hat{\boldsymbol{\theta}}(t)$\\
		\emph{Rct}$(\cdot)$&the function to compute $\boldsymbol{\theta}_k(t)$\\
		\emph{MemUpdateSrv}$(\cdot)$& memory update  at the server\\
		\emph{MemUpdateUsr}$(\cdot)$& memory update  at a device\\
        $p_k(t), k\in\mathcal{K}$&device scheduling probability\\%[1ex] % [1ex] adds vertical space
		\hline % inserts single-line
	\end{tabular}
	\label{tab:param1}
\end{table}

\begin{algorithm}[t!]
	\caption{FL with A-MTDC and Age-aware Device Scheduling}
	\begin{algorithmic}[1]
		\STATE Initialize: $\text{Lvl}_1=0$, $a_k^{(0)}=0, \boldsymbol{\theta}_k(0)=\boldsymbol{0}$, $\forall k$.
		\FOR{$t=1,...,T$}
		\IF{$t>1$}
		\STATE $\text{Lvl}_t\leftarrow$
		\STATE\emph{DlModelLvl}$(\{a_k^{(t)},a_k^{(r_{t+1})}\}_{\forall k},\{P_{k,0}^{(t)},P_{k,1}^{(t)},P_{k,2}^{(t)}\}_{\forall k},\bar{A})$
		\ENDIF
		\STATE $\hat{\boldsymbol{\theta}}(t)\leftarrow$\emph{BroadcastMdl}$\left(\boldsymbol{\theta}(t),\tilde{\boldsymbol{\theta}}(r_{t}),\text{Lvl}_t\right)$
		\STATE \emph{MemUpdateSrv}$(\hat{\boldsymbol{\theta}}(t),\tilde{\boldsymbol{\theta}}(r_{t}),\text{Lvl}_t)$
		\FORALLP{device $k\in\mathcal{K}$}
		\STATE $\boldsymbol{\theta}_k(t)$, $a_k^{(t)}$ $\leftarrow$
		\STATE \emph{Rct}$\left(\hat{\boldsymbol{\theta}}(t),\boldsymbol{\theta}_k(t-1),\tilde{\boldsymbol{\theta}}(r_{t}),a_k^{(t-1)},\text{Lvl}_t\right)$
		\STATE\emph{MemUpdateUsr}$(\boldsymbol{\theta}_k(t),a_k^{(t)})$
		\ENDFAP
		\STATE The server computes \eqref{eq:schProb} to determine $\Pi(t)$.
		\FORALLP{device $k\in\Pi(t)$}
		\STATE Local training with $\boldsymbol{\theta}_k(t)$, obtain $\triangle\boldsymbol{\theta}_k(t)$ and transmit it to the server.
		\ENDFAP
		\STATE The server computes $\sum_{k\in\Pi(t)}w_k(t)\triangle\boldsymbol{\theta}_k(t)$ and renews the model by \eqref{eq:update_MTDC}. 
		\ENDFOR  		
	\end{algorithmic}
	\label{alg:overall}
\end{algorithm}
\begin{algorithm}[t!]
	\caption{Server Operations}
	\begin{algorithmic}[1]
	\STATE Lvl$_t=$\emph{DlModelLvl}$\left(\{a_k^{(t)},a_k^{(r_{t+1})}\}_{\forall k},\{P_{k,0}^{(t)},P_{k,1}^{(t)},P_{k,2}^{(t)}\}_{\forall k},\bar{A}\right)$:
	\STATE Compute $\{A_i\}_{i=0}^2$ in \eqref{eq:A0}-\eqref{eq:A2}.
	\STATE $\text{Lvl}_t=\max\{j|A_j\leq \bar{A},j=0,1,2\}$.
	\STATE
	\STATE $\hat{\boldsymbol{\theta}}(t)=\text{\emph{BroadcastMdl}}\left(\boldsymbol{\theta}(t), \tilde{\boldsymbol{\theta}}(r_{t}), \text{Lvl}_t\right)$:
	\IF{$\text{Lvl}_t=0$}
	\STATE $\hat{\boldsymbol{\theta}}(t)\leftarrow Q_0\left(\boldsymbol{\theta}(t)\right)$.
	\ELSE
	\STATE $i\leftarrow\text{Lvl}_t$, $\hat{\boldsymbol{\theta}}(t)\leftarrow Q_i\left(\boldsymbol{\theta}(t)-\tilde{\boldsymbol{\theta}}(r_{t})\right)$.
	\ENDIF
	\STATE
	\STATE \emph{MemUpdateSrv}$(\hat{\boldsymbol{\theta}}(t),\tilde{\boldsymbol{\theta}}(r_{t}),\text{Lvl}_t)$:
	\IF{$\text{Lvl}_t=0$}
	\STATE Save $\tilde{\boldsymbol{\theta}}(t)=\hat{\boldsymbol{\theta}}(t)$ to the memory.
	\ELSE
	\STATE Save $\tilde{\boldsymbol{\theta}}(t)=\hat{\boldsymbol{\theta}}(t)+\tilde{\boldsymbol{\theta}}(r_{t})$ to the memory.
	\ENDIF
    \end{algorithmic}
    \label{alg:srv_flow}
\end{algorithm}
\begin{algorithm}[t!]
	\caption{Device Operations}
	\begin{algorithmic}[1]
	\STATE $\boldsymbol{\theta}_k(t), a_k^{(t)}=\text{\emph{Rct}}\left(\hat{\boldsymbol{\theta}}(t),\boldsymbol{\theta}_k(t-1),\tilde{\boldsymbol{\theta}}(r_{t}),a_k^{(t-1)},\text{Lvl}_t\right)$:
	\IF{fail to decode $\hat{\boldsymbol{\theta}}(t)$}
	\STATE $\boldsymbol{\theta}_k(t)=\boldsymbol{\theta}_k(t-1)$, $a_k^{(t)}=a_k^{(t-1)}+1$.
	\ELSIF{$\text{Lvl}_t=0$}
	\STATE $\boldsymbol{\theta}_k(t)=\hat{\boldsymbol{\theta}}(t)$, $a_k^{(t)}=0$.
	\ELSIF{$\tilde{\boldsymbol{\theta}}(r_{t})$ available}
	\STATE $\boldsymbol{\theta}_k(t)=\hat{\boldsymbol{\theta}}(t)+\tilde{\boldsymbol{\theta}}(r_{t})$, $a_k^{(t)}=0$.
	\ELSE
	\STATE $\boldsymbol{\theta}_k(t)=\boldsymbol{\theta}_k(t-1)$, $a_k^{(t)}=a_k^{(t-1)}+1$.
	\ENDIF
	\STATE
	\STATE \emph{MemUpdateUsr}$(\boldsymbol{\theta}_k(t),a_k^{(t)})$:
	\STATE Save $\boldsymbol{\theta}_k(t)$ to the memory.
	\IF{$a_k^{(t)}=0$}
	\STATE Save $\tilde{\boldsymbol{\theta}}(t)=\boldsymbol{\theta}_k(t)$ to the memory.
	\ENDIF
    \end{algorithmic}
    \label{alg:dev_flow}
\end{algorithm}

\section{Simulations}
\label{sec:simulation_results}
We train two convolutional neural networks parameterized by $\boldsymbol{\theta}\in\mathbb{R}^{21840}$ and $\boldsymbol{\theta}\in\mathbb{R}^{62006}$, with MNIST \cite{lecun-mnisthandwrittendigit-2010} and CIFAR-10 \cite{Krizhevsky09learningmultiple} datasets,  respectively. There are $K=20$ devices in the system. The training data are allocated to each device in a non-IID fashion. Each device contains training data of up to $6$ different classes/labels.
To evaluate the learning performance, at every iteration, the global model is tested on the testing datasets. The test accuracy (defined as the percentage of correct classification instances) is then measured.

\vspace{-0.2cm}
\subsection{Gain of Mixed-Timescale Differential Coding}
\label{sec:firstSim}
We evaluate the performance of the following schemes:
\begin{itemize}
	\item '\emph{AllFull}': baseline method, always broadcasting a full model in every iteration, i.e., $t\in\mathcal{T}_0,\forall t$.
	\item '\emph{DiC-$\rho$}': state-of-the-art method \cite{amiri2022Con}, broadcasting a full model every $\rho$ iterations
    and a differential update at all other iterations, i.e.,  
	\begin{equation*}
		t\in\begin{cases}
			\mathcal{T}_0, &t=1+\rho n, n=0,1,...\\
			\mathcal{T}_2, &\text{otherwise}	
		\end{cases}
	\end{equation*}
    To ensure a fair performance comparison, we periodically allocate the full model to DiC \cite{amiri2022Con}.
	\item '\emph{MTDC}-$(\rho_1,\rho_2)$': proposed method, broadcasting either a full model,
    with fixed period $\rho_1$; or a   first-level  differential update, with period  $\rho_2$;
    or a second-level differential update, in all other iterations: 
	\begin{equation*}
		t\in\begin{cases}
			\mathcal{T}_0, &t=1+\rho_1n, n=0,1,...\\
			\mathcal{T}_1, &t=1+\rho_2n,t\neq1+\rho_1n, n=0,1,...\\
			\mathcal{T}_2, &\text{otherwise.}	
		\end{cases}
	\end{equation*} 
\end{itemize}
The decoding failure probabilities are fixed over time and across devices, i.e., $\{P_{k,i}^{(t)}\}_{i=0}^{2}=[0.001,0.2,0.25], \forall k, \forall t$.
We schedule the devices uniformly at random, with scheduling ratios $|\Pi(t)|/K=0.1$, and $0.25$, respectively for the datasets MNIST and CIFAR-10. 
For fair comparison between the different methods (full model broadcast, DiC, and MTDC), we keep the \emph{time-average of the DL transmission bit rate} approximately the same (up to rounding effects). The exact number of bits in a given iteration may vary between different methods. Following this guideline, the quantization levels ($\nu_i$) are chosen as:
\begin{itemize}
\item '\emph{AllFull}': $\nu_0=31$
\item '\emph{DiC-$\rho$}': $\nu_0=255$ and $\nu_2=15$
\item '\emph{MTDC}-$(\rho_1,\rho_2)$': $\nu_0=255$, $\nu_1=127$, and $\nu_2=7$
\end{itemize}
The assignment of $\{ \nu_i \}$ is consistent with the ordering discussed in Remark \ref{rmk:rdmqnt}. For the case of MNIST, the average bit rate of each method is illustrated in Fig. \ref{fig:schComp_b}. With this normalization, all methods consume approximately the same amount of communication resources over time. Minor discrepancies arise due to rounding effects in the selection of quantization levels. In general, the average bit rate increases as the frequency of higher-resolution broadcasts increases.
%\begin{remark}
%\textcolor{blue}{
%Compared to AllFull, the two-level MTDC requires additionally $\lceil\log_2 (t-r_t)\rceil$ and $2$ bits, respectively, for the reference model  timestamp recording, and to convey  the broadcast model type in each transmission block. For instance, MTDC-(10,5) requires less than a $1\%$ bit rate increase for control signaling, which is negligible compared to the $18.3\%$ bit rate saving over AllFull, as indicated in Table \ref{tab:bitTab}. Since the model size generally exceeds the maximum packet size, with a similar time-average bit rate and the same payload size, MTDC would not require a higher channel coding redundancy or extra signaling overhead.
As discussed earlier, our MTDC scheme requires extra memory usage. As an example, the $62006$-parameter convolutional neural network model (for the  CIFAR-10 dataset), with $32$-bit precision, requires less than $1$ MB memory. This is relatively little compared to the storage capacity of modern edge devices.
%}
%\end{remark}
% \begin{table}[h]
% 	\caption{\textcolor{blue}{The extra bit rate  relative to \emph{MTDC}-(10,5), based on Fig. \ref{fig:schComp_b}.}} %title of the table
% 	\centering % centering table
% 	\normalsize
% 	\begin{tabular}{ccccc} % creating eight columns
% 		\hline
% 		\emph{AllFull}&\emph{DiC}-5&\emph{DiC}-7&\emph{DiC}-10&\emph{MTDC}-(8,4)  \\
% 		\hline
% 		$18.3\%$&$15.5\%$&$12.5\%$&$9.3\%$&$4.5\%$\\
% 		\hline % inserts single-line
% 	\end{tabular}
% 	\label{tab:bitTab}
% \end{table} 

Figs.~\ref{fig:schComp_lern} and \ref{fig:schCompCifar} show the test accuracy comparison between different schemes, for the datasets MNIST and CIFAR-10, respectively. The \emph{AllFull} method performs the worst, since it transmits the full model at every iteration and therefore relies on low-resolution quantization. In contrast, the differential coding schemes -- both the state-of-the-art DiC and the proposed MTDC -- mitigate this limitation by transmitting updates with a smaller dynamic range that can be more aggressively compressed.
Moreover, for the differential coding schemes, more frequent higher-level broadcasts generally lead to improved test accuracy (e.g., in Fig. \ref{fig:schComp_lern}, \emph{DiC}-5 outperforms \emph{DiC}-10, and \emph{MTDC}-(8,4) outperforms \emph{MTDC}-(10,5)), albeit at the cost of slightly higher average bit rates. Finally, and most importantly, MTDC consistently achieves strong learning performance with lower communication overhead than the state-of-the-art methods, owing to its increased resilience to decoding failures.\footnote{Since the conclusions from the experiments with  CIFAR-10 are consistent with those from MNIST, we only show the results for MNIST for the remaining experiments.}

\begin{figure}[t!]
	\centering
    	\begin{subfigure}[c]{.5\textwidth}
		\centering
		\includegraphics[width=7cm, height=4.5cm]{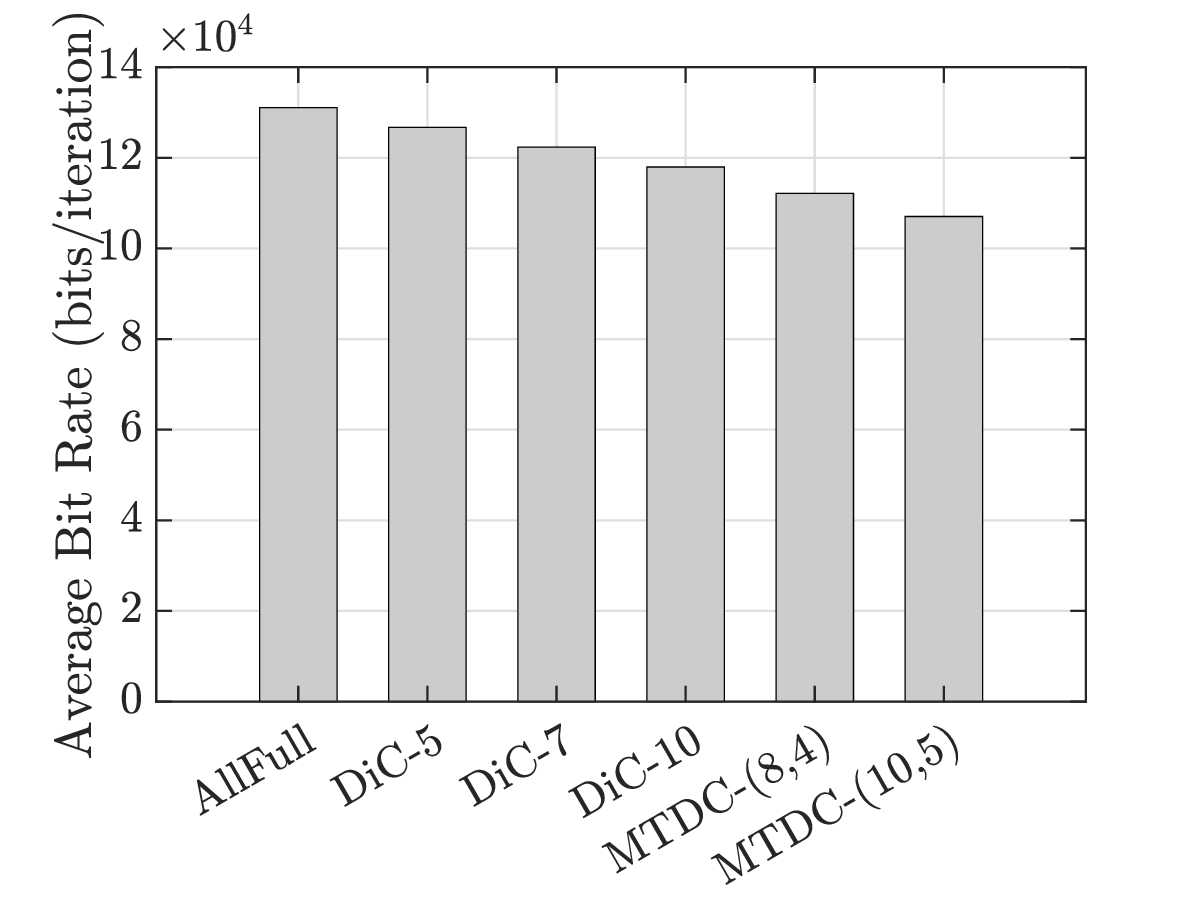}
        \caption{}
        \label{fig:schComp_b}
	\end{subfigure}
	\hfill
    \begin{subfigure}[c]{.5\textwidth}
		\hspace{-.25cm}
		% This file was created by matlab2tikz.
%
%The latest updates can be retrieved from
%  http://www.mathworks.com/matlabcentral/fileexchange/22022-matlab2tikz-matlab2tikz
%where you can also make suggestions and rate matlab2tikz.
%
\definecolor{mycolor1}{rgb}{0.00000,0.44700,0.74100}%
\definecolor{mycolor2}{rgb}{0.85000,0.32500,0.09800}%
\definecolor{mycolor3}{rgb}{0.92900,0.69400,0.12500}%
\definecolor{mycolor4}{rgb}{0.49400,0.18400,0.55600}%
\definecolor{mycolor5}{rgb}{0.46600,0.67400,0.18800}%
\definecolor{mycolor6}{rgb}{0.30100,0.74500,0.93300}%
\begin{tikzpicture}

\begin{axis}[%
width=2.528in,
height=1.904in,
at={(0.758in,0.651in)},
scale only axis,
xmin=0,
xmax=60,
xlabel style={font=\color{white!15!black}},
xlabel={iteration (t)},
ymin=20,
ymax=100,
ylabel style={font=\color{white!15!black}},
ylabel={Test accuracy (\%)},
axis background/.style={fill=white},
xmajorgrids,
ymajorgrids,
legend style={at={(0.97,0.03)}, anchor=south east, legend cell align=left, align=left, draw=white!15!black}
]
\addplot [color=mycolor1, dotted, line width=1.0pt, mark=x, mark options={solid, mycolor1}, mark repeat=2]
  table[row sep=crcr]{%
1	28.5433333333333\\
2	29.8\\
3	31.4533333333333\\
4	33.5766666666667\\
5	35.2066666666667\\
6	37.84\\
7	41.4633333333333\\
8	43.8833333333333\\
9	46.8366666666667\\
10	47.5\\
11	51.1533333333333\\
12	53.53\\
13	54.6966666666667\\
14	56.77\\
15	57.9166666666667\\
16	58.9333333333333\\
17	60.8266666666667\\
18	62.24\\
19	62.8566666666667\\
20	63.0566666666667\\
21	64.0833333333333\\
22	65.9\\
23	65.4466666666667\\
24	65.1333333333333\\
25	66.37\\
26	66.36\\
27	68.02\\
28	68.31\\
29	67.9\\
30	68.8866666666667\\
31	70.9666666666667\\
32	71.6266666666667\\
33	71.4433333333333\\
34	71.5033333333333\\
35	71.9266666666667\\
36	72.0266666666667\\
37	71.92\\
38	72.1566666666667\\
39	71.99\\
40	72.34\\
41	72.9266666666667\\
42	73.5766666666667\\
43	73.51\\
44	71.93\\
45	73.41\\
46	74.5566666666667\\
47	73.89\\
48	74.08\\
49	74.09\\
50	74.3666666666667\\
51	74.8066666666667\\
52	74.17\\
53	74.7333333333333\\
54	74.6633333333333\\
55	74.25\\
56	74.8733333333333\\
57	74.25\\
58	74.9833333333333\\
59	74.3166666666667\\
60	74.61\\
};
\addlegendentry{AllFull}

\addplot [color=mycolor2, solid, line width=1.0pt]
  table[row sep=crcr]{%
1	22.2766666666667\\
2	33.6766666666667\\
3	42.4033333333333\\
4	45.9666666666667\\
5	47.7933333333333\\
6	61.7866666666667\\
7	63.8733333333333\\
8	62.93\\
9	62.9666666666667\\
10	60.7133333333333\\
11	74.3566666666667\\
12	74.1833333333333\\
13	72.57\\
14	72.64\\
15	70.1566666666667\\
16	79.5966666666667\\
17	78.5866666666667\\
18	77.9133333333333\\
19	76.5033333333333\\
20	74.69\\
21	83.1866666666667\\
22	84.0733333333333\\
23	82.1633333333333\\
24	80.8866666666667\\
25	78.46\\
26	85.2866666666667\\
27	85.28\\
28	84.6233333333333\\
29	82.9133333333333\\
30	80.7566666666667\\
31	88.5733333333333\\
32	89.0933333333333\\
33	89.5733333333333\\
34	88.3966666666667\\
35	86.93\\
36	90.91\\
37	90.69\\
38	90.61\\
39	88.1466666666667\\
40	86.3833333333333\\
41	91.0666666666667\\
42	91.71\\
43	91.43\\
44	89.3866666666667\\
45	88.19\\
46	91.81\\
47	92.3566666666667\\
48	91.5133333333333\\
49	90.14\\
50	89.24\\
51	93.0566666666667\\
52	92.8366666666667\\
53	91.9\\
54	90.8\\
55	88.9733333333333\\
56	93.55\\
57	93.7433333333333\\
58	92.81\\
59	91.1366666666667\\
60	90.6033333333333\\
};
\addlegendentry{DiC-5}

\addplot [color=mycolor3, line width=1.0pt, mark=x, mark options={solid, mycolor3}]
  table[row sep=crcr]{%
1	24.0366666666667\\
2	33.4766666666667\\
3	41.3766666666667\\
4	44.45\\
5	48.1933333333333\\
6	47.8866666666667\\
7	48.3266666666667\\
8	65.67\\
9	64.7233333333333\\
10	66.2033333333333\\
11	65.5633333333333\\
12	63.43\\
13	60.3966666666667\\
14	60.2566666666667\\
15	74.8766666666667\\
16	75.3866666666667\\
17	73.7633333333333\\
18	73.8\\
19	72.8666666666667\\
20	70.2266666666667\\
21	68.51\\
22	81.0866666666667\\
23	81.7833333333333\\
24	81.5666666666667\\
25	80.5033333333333\\
26	78.1166666666667\\
27	75.1566666666667\\
28	74.4533333333333\\
29	84.8466666666667\\
30	86.4866666666667\\
31	87.4533333333333\\
32	86.4466666666667\\
33	86.0366666666667\\
34	84.1133333333333\\
35	81.8966666666667\\
36	89.89\\
37	90.8633333333333\\
38	90.1466666666667\\
39	89.7533333333333\\
40	87.9166666666667\\
41	86.5533333333333\\
42	84.2233333333333\\
43	91.21\\
44	92.02\\
45	91.1533333333333\\
46	90.0466666666667\\
47	89.3633333333333\\
48	87.13\\
49	85.67\\
50	91.7066666666667\\
51	92.1233333333333\\
52	92.4766666666667\\
53	91.3366666666667\\
54	90.0733333333333\\
55	88.4766666666666\\
56	86.5366666666667\\
57	92.8666666666667\\
58	93.2633333333333\\
59	92.4466666666667\\
60	91.8733333333333\\
};
\addlegendentry{DiC-7}

\addplot [color=mycolor4, dotted, line width=1.0pt]
  table[row sep=crcr]{%
1	22.6966666666667\\
2	34.2566666666667\\
3	41.1733333333333\\
4	45.3766666666667\\
5	46.1466666666667\\
6	48.08\\
7	48.99\\
8	49.07\\
9	49.92\\
10	47.4066666666667\\
11	65.1066666666667\\
12	66.97\\
13	67.76\\
14	65.75\\
15	63.89\\
16	60.6166666666667\\
17	60.84\\
18	59.4333333333333\\
19	58.4266666666667\\
20	56.54\\
21	74.2333333333333\\
22	77.5366666666667\\
23	76.4266666666667\\
24	77.11\\
25	74.0433333333333\\
26	71.8633333333333\\
27	69.4933333333333\\
28	68.13\\
29	68.07\\
30	66.67\\
31	83.2866666666667\\
32	85.97\\
33	85.9533333333333\\
34	85.42\\
35	83.7833333333333\\
36	83.54\\
37	81.0633333333333\\
38	78.7866666666666\\
39	76.1466666666667\\
40	75.5166666666667\\
41	87.68\\
42	89.5933333333333\\
43	88.8766666666667\\
44	88.6566666666667\\
45	89.0466666666667\\
46	87.7733333333333\\
47	86.0366666666667\\
48	83.3033333333333\\
49	80.9233333333333\\
50	78.4866666666667\\
51	89.8833333333333\\
52	91.57\\
53	91.0333333333333\\
54	90.0333333333333\\
55	87.9733333333333\\
56	87.35\\
57	85.4266666666667\\
58	83.5233333333333\\
59	81.6433333333333\\
60	80.6866666666667\\
};
\addlegendentry{DiC-10}

\addplot [color=mycolor5, solid, line width=1.0pt, mark=triangle*, mark repeat=4]
  table[row sep=crcr]{%
1	22.8966666666667\\
2	30.97\\
3	36.5166666666667\\
4	39.41\\
5	54.4966666666667\\
6	55.6\\
7	55.81\\
8	53.9933333333333\\
9	68.37\\
10	69.8933333333333\\
11	68.41\\
12	66.09\\
13	72.5633333333333\\
14	72.8833333333333\\
15	72.09\\
16	69.8166666666667\\
17	80.1033333333333\\
18	80.1933333333333\\
19	77.1766666666667\\
20	75.4266666666667\\
21	80.8733333333333\\
22	81.7766666666667\\
23	78.72\\
24	76.47\\
25	85.3833333333333\\
26	85.4966666666667\\
27	82.92\\
28	79.6266666666667\\
29	84.9166666666667\\
30	85.3533333333333\\
31	85.9233333333333\\
32	83.69\\
33	89.8833333333333\\
34	89.95\\
35	89.0366666666667\\
36	86.93\\
37	89.83\\
38	90.61\\
39	90.2933333333333\\
40	89.0366666666667\\
41	92.7333333333333\\
42	92.23\\
43	90.71\\
44	90.0466666666667\\
45	92.3866666666667\\
46	91.6033333333333\\
47	90.7933333333333\\
48	89.5033333333333\\
49	92.6433333333333\\
50	93.34\\
51	91.76\\
52	90.73\\
53	92.5333333333333\\
54	92.5366666666667\\
55	91.3766666666667\\
56	89.7366666666667\\
57	93.4433333333333\\
58	93.53\\
59	92.82\\
60	91.8833333333333\\
};
\addlegendentry{MTDC-(8,4)}

\addplot [color=mycolor6, dashdotted, line width=1.0pt, mark=o, mark options={solid, mycolor6}]
  table[row sep=crcr]{%
1	21.7666666666667\\
2	33.6366666666667\\
3	36.6966666666667\\
4	40.58\\
5	42.1666666666667\\
6	56.5266666666667\\
7	57.7766666666667\\
8	56.9666666666667\\
9	55.4466666666667\\
10	54.6433333333333\\
11	69.5233333333333\\
12	72.0966666666667\\
13	70.4666666666667\\
14	67.6566666666667\\
15	65.1766666666667\\
16	74.4033333333333\\
17	74.9833333333333\\
18	73.7\\
19	70.8166666666667\\
20	68.5333333333333\\
21	81.2366666666667\\
22	81.5866666666667\\
23	79.6633333333333\\
24	76.39\\
25	75.43\\
26	82.4733333333333\\
27	83.3366666666667\\
28	81.66\\
29	79.96\\
30	76.5933333333333\\
31	88.0533333333333\\
32	88.59\\
33	87.63\\
34	86.2966666666667\\
35	85.23\\
36	89.7266666666667\\
37	89.73\\
38	89.0266666666667\\
39	87.5466666666667\\
40	85.35\\
41	91.36\\
42	92.25\\
43	91.26\\
44	89.5766666666667\\
45	87.89\\
46	91.39\\
47	91.45\\
48	90.9\\
49	89.69\\
50	87.7333333333333\\
51	92.7766666666667\\
52	92.96\\
53	91.6866666666667\\
54	90.9066666666667\\
55	89.2333333333333\\
56	92.2266666666667\\
57	92.84\\
58	92.17\\
59	91.5\\
60	90.61\\
};
\addlegendentry{MTDC-(10,5)}

\end{axis}
\end{tikzpicture}%
        \caption{}
        \label{fig:schComp_lern}
	\end{subfigure}
	\caption{Comparison of average communication resource consumption and test accuracy for different schemes (MNIST).}
	\label{fig:schComp}
\end{figure}
\begin{figure}[t!]
	\centering
    \begin{subfigure}[c]{.5\textwidth} %0.6 size to letter size
		\hspace{-.25cm}
        \input{./figure/dfp_0p2_0p25_schCompRvs_cifar_rlz19.tex}
        %\caption{\textcolor{blue}{CIFAR-10}}
        %\label{fig:schComp_lernCifar}
	\end{subfigure}
	\caption{Comparison of test accuracy for different schemes (CIFAR-10).}
	\label{fig:schCompCifar}
\end{figure}

\subsection{A-MTDC versus MTDC}
To demonstrate the effectiveness of A-MTDC, we simulate the following two vanilla MTDC methods for comparison: \emph{MTDC}-$(10,5)$ and \emph{MTDC}-$(15,5)$. 
Two cases of device decoding failure probabilities are considered: $\{P_{k,i}^{(t)}\}_{i=0}^2=[0.0005, 0.05,0,1]$ and $\{P_{k,i}^{(t)}\}_{i=0}^2=[0.0005, 0.1,0,3]$. The device scheduling ratios $|\Pi(t)|/K$ are $0.5$ or $0.1$. We set the precision of the random quantizer to $\{\nu_i\}_{i=0}^2=[127,63,15]$.

When decoding failures are relatively rare, as in Fig. \ref{fig:dynFixComp}, A-MTDC (with $\bar{A}=2$) and the vanilla MTDC methods perform similarly in test accuracy for both scenarios $|\Pi(t)|/K=0.5$ (curves with legend '$0.5$:') and $|\Pi(t)|/K=0.1$ (legend '$0.1$:'). As A-MTDC keeps track of the device age along the learning process and dynamically chooses an appropriate model type to broadcast, the test accuracy improves more smoothly over time compared to the vanilla MTDC methods. All methods consume a similar amount of communication resources, according to the calculation from Remark \ref{rmk:qnt_bits}.
%\footnote{The calculation of resource consumption is based on Remark \ref{rmk:qnt_bits}.\label{ftQun}}
%\zheng{Specify how you measure ``communication resources''.}

With more frequent  decoding failures, as in Fig. \ref{fig:dynFixComp1}, the vanilla MTDC methods face regular test accuracy drops while A-MTDC does not, thanks to its quick reaction to model staleness. The price to pay for this is additional communication resources ($5\%$ more than vanilla MTDC in this case).
%$^{\ref{ftQun}}$
\begin{figure}[t]
	\centering
	\begin{subfigure}[c]{.5\textwidth}
		\centering
		% This file was created by matlab2tikz.
%
%The latest updates can be retrieved from
%  http://www.mathworks.com/matlabcentral/fileexchange/22022-matlab2tikz-matlab2tikz
%where you can also make suggestions and rate matlab2tikz.
%
\definecolor{mycolor1}{rgb}{0.00000,0.44700,0.74100}%
\definecolor{mycolor2}{rgb}{0.85000,0.32500,0.09800}%
\definecolor{mycolor3}{rgb}{0.92900,0.69400,0.12500}%
\definecolor{mycolor4}{rgb}{0.49400,0.18400,0.55600}%
\definecolor{mycolor5}{rgb}{0.46600,0.67400,0.18800}%
\definecolor{mycolor6}{rgb}{0.30100,0.74500,0.93300}%
\begin{tikzpicture}

\begin{axis}[%
width=2.25in,
height=1.7in,
at={(0.758in,0.651in)},
scale only axis,
xmin=0,
xmax=30,
xlabel style={font=\color{white!15!black}},
xlabel={iteration ($t$)},
ymin=20,
ymax=100,
ylabel style={font=\color{white!15!black}},
ylabel={Test accuracy $(\%)$},
axis background/.style={fill=white},
xmajorgrids,
ymajorgrids,
legend style={at={(0.97,0.03)}, anchor=south east, legend cell align=left, align=left, draw=white!15!black}
]
\addplot [color=mycolor1, dotted, line width=1.0pt, mark=x, mark options={solid, mycolor1}]
  table[row sep=crcr]{%
1	36.92\\
2	57.82\\
3	68.09\\
4	73.645\\
5	75.615\\
6	82.48\\
7	83.795\\
8	85.245\\
9	85.54\\
10	86.265\\
11	88.32\\
12	88.125\\
13	88.28\\
14	88.92\\
15	88.81\\
16	90.585\\
17	90.65\\
18	90.785\\
19	90.62\\
20	90.665\\
21	91.535\\
22	91.455\\
23	91.34\\
24	91.195\\
25	91.015\\
26	91.58\\
27	91.93\\
28	91.97\\
29	92.075\\
30	92.05\\
};
\addlegendentry{0.5: (10,5)}

\addplot [color=mycolor2, dotted, line width=1.0pt]
  table[row sep=crcr]{%
1	35.885\\
2	58.625\\
3	68.32\\
4	72.7\\
5	76.62\\
6	82.63\\
7	84.495\\
8	84.985\\
9	86.1\\
10	86.245\\
11	88.62\\
12	89.315\\
13	89.73\\
14	89.735\\
15	89.805\\
16	90.93\\
17	90.59\\
18	90.81\\
19	90.515\\
20	90.61\\
21	91.355\\
22	91.655\\
23	91.93\\
24	91.67\\
25	91.44\\
26	91.945\\
27	92.065\\
28	92.205\\
29	92.255\\
30	91.96\\
};
\addlegendentry{0.5: (15,5)}

\addplot [color=mycolor3, line width=1.0pt, mark=o, mark options={solid, mycolor3}]
  table[row sep=crcr]{%
1	36.37\\
2	57.915\\
3	67.855\\
4	72.585\\
5	75.095\\
6	77.895\\
7	80.36\\
8	83.38\\
9	85.405\\
10	86.545\\
11	86.555\\
12	86.965\\
13	87.75\\
14	88.165\\
15	89.34\\
16	90.16\\
17	90.455\\
18	90.82\\
19	91.02\\
20	91.15\\
21	91.165\\
22	91.255\\
23	91.41\\
24	91.365\\
25	91.76\\
26	92.03\\
27	91.645\\
28	91.57\\
29	92\\
30	92.105\\
};
\addlegendentry{0.5: $\bar{A}=2$}

\addplot [color=mycolor4, line width=1.0pt]
  table[row sep=crcr]{%
1	22.235\\
2	35.595\\
3	44.5\\
4	50.275\\
5	52.01\\
6	61.425\\
7	63.72\\
8	65.95\\
9	67.705\\
10	67.465\\
11	73.155\\
12	74.96\\
13	74.08\\
14	74.57\\
15	73.975\\
16	81.5\\
17	82.42\\
18	81.735\\
19	81.65\\
20	80.71\\
21	84.545\\
22	85.02\\
23	85.93\\
24	84.435\\
25	83.87\\
26	86.61\\
27	86.24\\
28	85.89\\
29	85.63\\
30	84.03\\
};
\addlegendentry{0.1: (10,5)}

\addplot [color=mycolor5, dashdotted, line width=1.0pt]
  table[row sep=crcr]{%
1	23.43\\
2	35.31\\
3	42.01\\
4	50.06\\
5	51.445\\
6	61.875\\
7	64.87\\
8	67.1\\
9	67.14\\
10	66.3\\
11	72.535\\
12	73.07\\
13	74.3\\
14	72.7\\
15	72.85\\
16	81.355\\
17	81.945\\
18	82.925\\
19	82.69\\
20	80.905\\
21	84.89\\
22	84.65\\
23	84.49\\
24	84.9\\
25	84.565\\
26	86.075\\
27	86.245\\
28	84.175\\
29	84.695\\
30	84.455\\
};
\addlegendentry{0.1: (15,5)}

\addplot [color=mycolor6, dashdotted, line width=1.0pt, mark=o, mark options={solid, mycolor6}]
  table[row sep=crcr]{%
1	22.36\\
2	37.19\\
3	44.945\\
4	49.5\\
5	51.28\\
6	56.095\\
7	58.945\\
8	64.175\\
9	67.56\\
10	67.87\\
11	69.5\\
12	70.25\\
13	71.135\\
14	72.835\\
15	75.78\\
16	79.04\\
17	78.25\\
18	80.48\\
19	81.265\\
20	81.61\\
21	82.89\\
22	83.12\\
23	84.73\\
24	83.85\\
25	84.675\\
26	84.655\\
27	84.385\\
28	85.1\\
29	86.025\\
30	85.075\\
};
\addlegendentry{0.1: $\bar{A}=2$}

\end{axis}
\end{tikzpicture}%
		\caption{$\{P_{k,i}^{(t)}\}_{i=0}^2=[0.0005, 0.05,0,1]$.}
		\label{fig:dynFixComp}
	\end{subfigure}
	\hfill
	\begin{subfigure}[c]{.5\textwidth}
		\centering
		% This file was created by matlab2tikz.
%
%The latest updates can be retrieved from
%  http://www.mathworks.com/matlabcentral/fileexchange/22022-matlab2tikz-matlab2tikz
%where you can also make suggestions and rate matlab2tikz.
%
\definecolor{mycolor1}{rgb}{0.00000,0.44700,0.74100}%
\definecolor{mycolor2}{rgb}{0.85000,0.32500,0.09800}%
\definecolor{mycolor3}{rgb}{0.92900,0.69400,0.12500}%
\definecolor{mycolor4}{rgb}{0.49400,0.18400,0.55600}%
\definecolor{mycolor5}{rgb}{0.46600,0.67400,0.18800}%
\definecolor{mycolor6}{rgb}{0.30100,0.74500,0.93300}%
\begin{tikzpicture}

\begin{axis}[%
width=2.25in,
height=1.7in,
at={(0.758in,0.651in)},
scale only axis,
xmin=0,
xmax=30,
xlabel style={font=\color{white!15!black}},
xlabel={iteration ($t$)},
ymin=20,
ymax=100,
ylabel style={font=\color{white!15!black}},
ylabel={Test accuracy $(\%)$},
axis background/.style={fill=white},
xmajorgrids,
ymajorgrids,
legend style={at={(0.97,0.03)}, anchor=south east, legend cell align=left, align=left, draw=white!15!black}
]
\addplot [color=mycolor1, dotted, line width=1.0pt, mark=x, mark options={solid, mycolor1}]
  table[row sep=crcr]{%
1	35.895\\
2	55.165\\
3	63.025\\
4	65.955\\
5	67.79\\
6	79.825\\
7	81.445\\
8	81.77\\
9	81.335\\
10	79.815\\
11	86.89\\
12	86.175\\
13	85.065\\
14	83.75\\
15	81.795\\
16	88.74\\
17	89.315\\
18	89.035\\
19	87.965\\
20	86.29\\
21	90.57\\
22	90.555\\
23	89.575\\
24	88.495\\
25	86.86\\
26	90.675\\
27	91.155\\
28	90.945\\
29	90.155\\
30	89.135\\
};
\addlegendentry{0.5: (10,5)}

\addplot [color=mycolor2, dotted, line width=1.0pt]
  table[row sep=crcr]{%
1	36.575\\
2	53.605\\
3	61.17\\
4	64.44\\
5	66.53\\
6	80.195\\
7	81.8\\
8	82.035\\
9	81.045\\
10	79.605\\
11	86.16\\
12	86.885\\
13	86.575\\
14	85.405\\
15	83.265\\
16	89.58\\
17	89.595\\
18	88.7\\
19	87.625\\
20	85.41\\
21	89.83\\
22	90.39\\
23	90.17\\
24	89.77\\
25	88.84\\
26	91.335\\
27	91.175\\
28	90.62\\
29	90.09\\
30	88.84\\
};
\addlegendentry{0.5: (15,5)}

\addplot [color=mycolor3, line width=1.0pt, mark=o, mark options={solid, mycolor3}]
  table[row sep=crcr]{%
1	35.83\\
2	53.85\\
3	61.29\\
4	66\\
5	77.18\\
6	80.84\\
7	81.48\\
8	82.62\\
9	84.835\\
10	85.965\\
11	87.165\\
12	87.375\\
13	88.13\\
14	88.28\\
15	88.455\\
16	89.29\\
17	89.49\\
18	89.73\\
19	90.395\\
20	90.69\\
21	90.54\\
22	90.62\\
23	90.77\\
24	90.865\\
25	91.12\\
26	91.065\\
27	91.03\\
28	91.43\\
29	91.31\\
30	91.4\\
};
\addlegendentry{0.5: $\bar{A}=2$}

\addplot [color=mycolor4, line width=1.0pt]
  table[row sep=crcr]{%
1	23.25\\
2	32.96\\
3	39.485\\
4	43.005\\
5	44.12\\
6	59.175\\
7	60.45\\
8	59.88\\
9	56.42\\
10	54.445\\
11	69.965\\
12	71.11\\
13	70.08\\
14	65.6\\
15	62.12\\
16	75.315\\
17	76.745\\
18	76.585\\
19	73.33\\
20	71.28\\
21	82.585\\
22	82.34\\
23	81.365\\
24	80.08\\
25	76.2\\
26	84.51\\
27	85.055\\
28	83.475\\
29	79.815\\
30	76.51\\
};
\addlegendentry{0.1: (10,5)}

\addplot [color=mycolor5, dashdotted, line width=1.0pt]
  table[row sep=crcr]{%
1	23.85\\
2	33.275\\
3	39.245\\
4	43.13\\
5	44.81\\
6	58.82\\
7	60.52\\
8	58.805\\
9	55.795\\
10	54.81\\
11	66.5\\
12	67.9\\
13	65.835\\
14	64.845\\
15	63.515\\
16	77.425\\
17	77.165\\
18	77.355\\
19	74.64\\
20	72.21\\
21	80.25\\
22	81.07\\
23	79.86\\
24	75.665\\
25	73.56\\
26	80.79\\
27	81.435\\
28	80.67\\
29	78.965\\
30	74.48\\
};
\addlegendentry{0.1: (15,5)}

\addplot [color=mycolor6, dashdotted, line width=1.0pt, mark=o, mark options={solid, mycolor6}]
  table[row sep=crcr]{%
1	22.785\\
2	33.63\\
3	40.385\\
4	41.28\\
5	54.625\\
6	59.475\\
7	59.33\\
8	61.335\\
9	64.135\\
10	67.13\\
11	69.025\\
12	71.405\\
13	72.185\\
14	71.11\\
15	72.21\\
16	74.855\\
17	77.405\\
18	78.77\\
19	79.285\\
20	80.93\\
21	81.535\\
22	81.225\\
23	81.955\\
24	81.57\\
25	81.855\\
26	83.345\\
27	83.14\\
28	83.34\\
29	84.685\\
30	84.875\\
};
\addlegendentry{0.1: $\bar{A}=2$}

\end{axis}
\end{tikzpicture}%
		\caption{$\{P_{k,i}^{(t)}\}_{i=0}^2=[0.0005, 0.1,0,3]$.}
		\label{fig:dynFixComp1}
	\end{subfigure}
	\caption{Test accuracy comparison of A-MTDC ($\bar{A}=2$) and vanilla MTDC, with patterns $(10,5)$ and $(15,5)$, for device scheduling ratios $0.5$ and $0.1$.}
	%\label{fig:histogram}
\end{figure}

\subsection{Advantages of Age-Aware Device Scheduling}

We demonstrate the performance gain of our proposed age-aware scheduling over the baseline random scheduling, and over a state-of-the-art version-age-based scheduling policy \cite{hu2024ver}\footnote{This method prioritizes devices with low participation frequency, and therefore minimizes the overall device staleness in the system.}, in the presence of decoding failures. 
We consider the same setting in Section \ref{sec:firstSim}. As shown in Figure \ref{fig:diffSch}, our age-aware scheduling method generally outperforms the others for FL frameworks with differential model broadcasts. Our method prioritizes devices with fresher model updates and achieves better test accuracy than the version-age-based policy. This suggests that, for scheduling design, excluding stale updates is more important than ensuring fairness of device participation when the system is subject to decoding failures. 
On the other hand, when the overall situation of model outdatedness is mild, e.g., when \emph{AllFull} is adopted, all three methods will have similar device scheduling probabilities, which is reflected by their similar learning performance. 
\begin{figure}[t]
	%\centering
	\vspace{-4.5cm}
	\input{./figure/dfp_0p2_0p25_diffScheduling1.tex}
	\caption{Learning performance with random scheduling ('rdm'), version-age-based scheduling ('v-age'), and the proposed age-aware scheduling scheme ('age') for different     schemes.}
	\label{fig:diffSch}
\end{figure}

\section{Conclusions and Future Work}
This paper proposed a mixed-timescale differential coding (MTDC) framework for DL transmission of global models in FL systems, leveraging the temporal correlation among model iterates. Compared to always broadcasting the full model and to conventional DiC schemes, MTDC achieves competitive learning performance while significantly improving communication efficiency and robustness to DL decoding failures.

We established convergence guarantees for FedAvg under the proposed MTDC framework, revealing how model staleness induced by decoding failures can slow the per-iteration contraction toward the optimum. These insights motivated the design of an age-aware extension, A-MTDC, along with an age-aware device scheduling policy. The effectiveness of the proposed schemes was validated through simulations, which demonstrate consistent learning performance gains over existing methods under comparable communication budgets.

The MTDC framework inherently involves trade-offs between the frequency of full-model broadcasts and that of first-level differential updates. Optimally balancing these transmissions is highly scenario-dependent (e.g., on the DL decoding failure statistics) and a detailed quantitative optimization is therefore left for future work. On the other hand, while this work adopts scalar quantization, further performance improvement may be achievable by incorporating more advanced quantization techniques, such as vector quantization \cite{oh2024fedvqcs, vector-quant-FL}. 
Finally, the MTDC mechanism is also applicable to decentralized FL frameworks, which requires more complex modeling and design considerations. We leave this part to our future studies.

%\begin{figure*}[h]
%	\centering
%	\includegraphics[scale=0.5]{flowchart}
%	\caption{The flowchart from the perspective of device $k$ at iteration $t$, starting from the model being broadcast by the server and ending with either model update transmission, reception failure, or iteration skipping.}
%	\label{fig:flowChart}
%\end{figure*}

%\begin{figure*}[h]
%	\centering
%	\includegraphics[scale=0.5]{broadcastInfo}
%	\caption{The decision flow at the server for computing $\hat{\boldsymbol{\theta}}(t)$, followed by memory buffer updates at $\boldsymbol{\theta}^{(i)},\forall i$.}
%	\label{fig:broadcastInfo}
%\end{figure*}
\appendix
We introduce the following auxiliary variables:
\begin{itemize}
	\item the aggregated gradient updates
	\begin{equation*}
		\boldsymbol{G}(\boldsymbol{t})=\sum\nolimits_{k\in\mathcal{K}}w_k\nabla F_k(\tilde{\boldsymbol{\theta}}(t_k)),
	\end{equation*}
	where $\boldsymbol{t}=[t_1,...,t_K]^{T}$ records the timestamps of the global model at each device, and
	\item the distance of the $t$-th iterate $\tilde{\boldsymbol{\theta}}(t)$ to the optimum
	\begin{equation*}
		\delta_t=\|\tilde{\boldsymbol{\theta}}(t)-\boldsymbol{\theta}^*\|.
	\end{equation*}
\end{itemize}
 Useful lemmas and inequalities  can be found in Appendix \ref{sec:more_lemma}.

\subsection{Proof of Theorem \ref{thm:main}}
\label{apx:main_thm}

We investigate how $\|\tilde{\boldsymbol{\theta}}(t+1)-\boldsymbol{\theta}^*\|^2=\delta_{t+1}^2$ evolves over time
in different scenarios, i.e., when $t+1\in\mathcal{T}_0$, $\mathcal{T}_1$, and $\mathcal{T}_2$.
\subsubsection{If $t+1\in\mathcal{T}_2$, a Second-Level Differential Update}
Based on \eqref{eq:mdl_rcst}, \eqref{eq:t_ref}, and \eqref{eq:h_hat_def}, $$\tilde{\boldsymbol{\theta}}(t+1)=\tilde{\boldsymbol{\theta}}(t)+\hat{\boldsymbol{\theta}}(t+1)=\tilde{\boldsymbol{\theta}}(t)+Q_2[\boldsymbol{\theta}(t+1)-\tilde{\boldsymbol{\theta}}(t)].$$ Furthermore with \eqref{eq:update_MTDC}, \eqref{eq:grdtAge}, and $$\{t-a_k^{(t)}\}_{\forall k}=[t-a_1^{(t)},...,t-a_K^{(t)}]^T,$$ we have (hereafter, $\boldsymbol{1}=[1,...,1]^T$)
\begin{subequations}
	\label{ieq:mainIeq}
	\begin{align}
		\delta_{t+1}^2&=\Big|\Big|\tilde{\boldsymbol{\theta}}(t)-Q_2\left[\eta\boldsymbol{G}\left(\{t-a_k^{(t)}\}_{\forall k}\right)\right]-\boldsymbol{\theta}^*\Big|\Big|^2\nonumber\\
		&=\delta_t^2-2\eta\left(\tilde{\boldsymbol{\theta}}(t)-\boldsymbol{\theta}^*\right)^T\boldsymbol{G}(t\boldsymbol{1})\label{eq:eq0}\\
		&\quad+\Big|\Big|Q_2\left[\eta\boldsymbol{G}\left(\{t-a_k^{(t)}\}_{\forall k}\right)\right]\Big|\Big|^2\label{eq:eq1}\\
		&\quad+2\left[\tilde{\boldsymbol{\theta}}(t)-\boldsymbol{\theta}^*\right]^T\Bigg\{\eta\boldsymbol{G}(t\boldsymbol{1})- Q_2\left[\eta\boldsymbol{G}\left(\{t-a_k^{(t)}\}_{\forall k}\right)\right]\Bigg\}.\label{eq:eq3}
	\end{align}
\end{subequations}
First, based on Assumption \ref{asp:conv} and using that $F(\tilde{\boldsymbol{\theta}}(t))\geq F(\boldsymbol{\theta}^*)$,
\begin{align}
	\text{\eqref{eq:eq0}}&\leq\delta_t^2-2\eta\sum\nolimits_{k\in\mathcal{K}}w_k\left[F_k\left(\tilde{\boldsymbol{\theta}}(t)\right)-F_k\left(\boldsymbol{\theta}^*\right)+\mu\delta_t^2/2\right]\nonumber\\
	&\leq\left(1-\mu\eta\right)\delta_t^2,\label{ieq:firstTermMu}
\end{align}
and therefore after taking total expectation,
\begin{align}
	&\mathbb{E}\left[\text{\eqref{eq:eq0}}\right]\leq
	\left(1-\mu\eta\right)\mathbb{E}\left[\delta_t^2\right].\label{ieq:eq0_e}
\end{align}
For \eqref{eq:eq1}, we first evaluate the conditional expectation based on a realization up to iteration $t$, and then take expectation:
\begin{align}
	&\mathbb{E}\left[\text{\eqref{eq:eq1}}\right]=\mathbb{E}\left[\mathbb{E}\left[\text{\eqref{eq:eq1}}\Big|\left\{\tilde{\boldsymbol{\theta}}(t-a_k^{(t)})\right\}_{k=1}^K\right]\right]\nonumber\\
	&\leq (\sigma_2+1)\eta^2\mathbb{E}\left[\Big|\Big|\boldsymbol{G}\left(\{t-a_k^{(t)}\}_{\forall k}\right)\Big|\Big|^2\right]\label{ieq:applyQassump}\\
	&\leq2L^2\eta^2(\sigma_2+1)\left[\sum_{k\in\mathcal{K}}w_k\mathbb{E}\left[\delta_{t-a_k^{(t)}}^2\right]+\sum_{k\in\mathcal{K}}w_k\|\boldsymbol{\theta}^*_k-\boldsymbol{\theta}^*\|^2\right]\label{ieq_L_sepNorm}\\
	&\leq L^2\eta^2\hat{\sigma}\Bigg\{\sum\nolimits_{k\in\mathcal{K}}w_k\mathbb{E}\left[\delta_{t-a_k^{(t)}}^2\right]+\zeta\Bigg\},\label{ieq:b_p1}
\end{align}
where \eqref{ieq:applyQassump} follows from \eqref{eq:rdmqnt} as
\begin{align}
	\mathbb{E}\left[\|Q_2(\boldsymbol{\theta})\|^2|\boldsymbol{\theta}\right]&=\mathbb{E}\left[\|Q_2(\boldsymbol{\theta})-\boldsymbol{\theta}\|^2|\boldsymbol{\theta}\right]+\|\boldsymbol{\theta}\|^2\nonumber\\
	&=\left(\sigma_2+1\right)\|\boldsymbol{\theta}\|^2,\label{eq:qntIntept}
\end{align}
and $Q_2(\eta\boldsymbol{x})$ has the same probability distribution as $\eta Q_2(\boldsymbol{x})$;
\eqref{ieq_L_sepNorm} follows from \eqref{ieq:wk_moveout}, Assumption \ref{asp:smth}, and \eqref{ieq:norm_ieq}; \eqref{ieq:b_p1} is based on \eqref{eq:zeta_hetero} and \eqref{eq:sigma}.\footnote{Introducing the  constant $\hat{\sigma}$ yields a simpler bound, though it may not be the tightest.} 
We handle \eqref{eq:eq3} with a similar approach (evaluating conditional expectation and using $\mathbb{E}\left[Q_2(\boldsymbol{\theta})\right]=\boldsymbol{\theta}$):
\begin{align}
	&\mathbb{E}\left[\text{\eqref{eq:eq3}}\right]=\mathbb{E}\left[\mathbb{E}\left[\text{\eqref{eq:eq3}}\big|\tilde{\boldsymbol{\theta}}(t),\left\{\tilde{\boldsymbol{\theta}}(t-a_k^{(t)})\right\}_{\forall k}\right]\right]\nonumber\\
	&=2\eta\mathbb{E}\Bigg\{\left(\tilde{\boldsymbol{\theta}}(t)-\boldsymbol{\theta}^*\right)^T\left[\boldsymbol{G}\left(t\boldsymbol{1}\right)-\boldsymbol{G}\left(\{t-a_k^{(t)}\}_{\forall k}\right)\right]\Bigg\}\nonumber\\
	&\leq2\eta\mathbb{E}\Bigg\{\delta_t\Big|\Big|\boldsymbol{G}\left(t\boldsymbol{1}\right)-\boldsymbol{G}\left(\{t-a_k^{(t)}\}_{\forall k}\right)\Big|\Big|\Bigg\},\label{ieq:crossAge}\\
    &\leq2L^2\hat{\sigma}a_{\lim}\eta^2\Bigg[2a_{\lim}\sqrt{\zeta}+a_{\lim}\left(2\sqrt{\zeta}+3/2\right)\mathbb{E}\left[\delta_t^2\right]\nonumber\\
	&\quad\quad\quad\quad\quad\quad+\sum\nolimits_{i=1}^{\min\left(t-1,3a_{\lim}\right)}\mathbb{E}\left[\delta_{t-i}^2\right]/2
    \Bigg]\label{ieq:lm2Aply}
\end{align}
where \eqref{ieq:crossAge} follows from Cauchy–Schwarz inequality and \eqref{ieq:lm2Aply} is based on Lemma \ref{lm:outdateCrossAge} (given in Appendix \ref{sec:more_lemma}).

Combining \eqref{ieq:eq0_e}, \eqref{ieq:b_p1}, and \eqref{ieq:lm2Aply},
\begin{align}
	\mathbb{E}\left[\delta_{t+1}^2\right]&\leq \left[1-\mu\eta+2L^2\eta^2\hat{\sigma}a_{\lim}^2\left(2\sqrt{\zeta}+3/2\right)\right]\mathbb{E}\left[\delta_t^2\right]\nonumber\\
	&\quad+L^2\eta^2\hat{\sigma}\Bigg\{\sqrt{\zeta}\left(\sqrt{\zeta}+4a_{\lim}^2\right)+\sum\nolimits_{k\in\mathcal{K}}w_k\mathbb{E}\left[\delta_{t-a_k^{(t)}}^2\right]\nonumber\\
	&\quad\quad\quad\quad\quad+a_{\lim}\sum\nolimits_{i=1}^{\min\left(t-1,3a_{\lim}\right)}\mathbb{E}\left[\delta_{t-i}^2\right]\Bigg\}\label{ieq:sumup}
\end{align}
\subsubsection{If $t+1\in\mathcal{T}_1$, a First-Level Differential Update}
Note that the server broadcasts second-level differential updates between iteration $r_{t+1}$ and $t+1$, i.e., $\{r_{t+1}+1,...,t\}\subset\mathcal{T}_2$. Hence, the reference model for reconstructing $\tilde{\boldsymbol{\theta}}(t+1)$ satisfies
\begin{equation}
\tilde{\boldsymbol{\theta}}\left(r_{t+1}\right)+\sum_{\tau=r_{t+1}}^{t-1}Q_2\left[-\eta\boldsymbol{G}\left(\{\tau-a_k^{(\tau)}\}_{\forall k}\right)\right]=\tilde{\boldsymbol{\theta}}(t).\label{eq:t1_ref}
\end{equation}
Define
\begin{equation}
	\boldsymbol{y}_{t+1}=\sum_{\tau=r_{t+1}}^{t-1}Q_2\left[-\eta\boldsymbol{G}\left(\{\tau-a_k^{(\tau)}\}_{\forall k}\right)\right]-\eta\boldsymbol{G}\left(\{t-a_k^{(t)}\}_{\forall k}\right).\label{eq:y_tp1_def}
\end{equation}
The broadcast model $\hat{\boldsymbol{\theta}}(t+1)=Q_1\left(\boldsymbol{\theta}(t+1)-\tilde{\boldsymbol{\theta}}(r_{t+1})\right)$ can then be expressed as
\begin{align}
	\hat{\boldsymbol{\theta}}(t+1)&=Q_1\left[\tilde{\boldsymbol{\theta}}(t)-\eta\boldsymbol{G}\left(\{t-a_k^{(t)}\}_{\forall k}\right)-\tilde{\boldsymbol{\theta}}(r_{t+1})\right]=Q_1(\boldsymbol{y}_{t+1}),\label{eq:thetaHat_1}
\end{align} 
by using $\boldsymbol{\theta}(t+1)=\tilde{\boldsymbol{\theta}}(t)-\eta\boldsymbol{G}\left(\{t-a_k^{(t)}\}_{\forall k}\right)$, \eqref{eq:t1_ref}, and \eqref{eq:y_tp1_def}.  
Since $\tilde{\boldsymbol{\theta}}(t+1)=\tilde{\boldsymbol{\theta}}(r_{t+1})+\hat{\boldsymbol{\theta}}(t+1)$, by using \eqref{eq:t1_ref} and \eqref{eq:thetaHat_1},
\begin{subequations}
	\label{eq:caseT1}
	\begin{align}
		&\delta_{t+1}^2\nonumber\\
		&=\Big|\Big|\tilde{\boldsymbol{\theta}}(t)-\sum_{\tau=r_{t+1}}^{t-1}Q_2\left[-\eta\boldsymbol{G}\left(\{\tau-a_k^{(\tau)}\}_{\forall k}\right)\right]+Q_1\left(\boldsymbol{y}_{t+1}\right)-\boldsymbol{\theta}^*\Big|\Big|^2\nonumber\\
		&=\Big|\Big|\tilde{\boldsymbol{\theta}}(t)-\boldsymbol{\theta}^*+Q_1\left(\boldsymbol{y}_{t+1}\right)-\boldsymbol{y}_{t+1}-\eta\boldsymbol{G}\left(\{t-a_k^{(t)}\}_{\forall k}\right)\Big|\Big|^2\nonumber\\
		&=\delta_t^2-2\eta\left(\tilde{\boldsymbol{\theta}}(t)-\boldsymbol{\theta}^*\right)^T\boldsymbol{G}(t\boldsymbol{1})+\eta^2\Big|\Big|\boldsymbol{G}\left(\{t-a_k^{(t)}\}_{\forall k}\right)\Big|\Big|^2\label{termLegacy}\\
		&\quad+\|Q_1\left(\boldsymbol{y}_{t+1}\right)-\boldsymbol{y}_{t+1}\|^2\label{termNew}\\
		&\quad+2\eta\left(\tilde{\boldsymbol{\theta}}(t)-\boldsymbol{\theta}^*\right)^T\left[\boldsymbol{G}\left(t\boldsymbol{1}\right)-\boldsymbol{G}\left(\{t-a_k^{(t)}\}_{\forall k}\right)\right]\label{termOld}\\
		&\quad+2\left[Q_1\left(\boldsymbol{y}_{t+1}\right)-\boldsymbol{y}_{t+1}\right]^T\Big\{\tilde{\boldsymbol{\theta}}(t)-\boldsymbol{\theta}^*-\eta\boldsymbol{G}\left(\{t-a_k^{(t)}\}_{\forall k}\right)\Big\}.\label{ieq:y2}
	\end{align}
\end{subequations}
By applying \eqref{ieq:firstTermMu}, \eqref{ieq:wk_moveout}, Assumption \ref{asp:smth}, \eqref{ieq:norm_ieq}, and \eqref{eq:zeta_hetero},
\begin{align}
	\text{\eqref{termLegacy}}&\leq\left(1-\mu\eta\right)\delta_t^2+2L^2\eta^2\sum\nolimits_{k\in\mathcal{K}}w_k\delta_{t-a_k^{(t)}}^2+2L^2\eta^2\zeta.\nonumber
\end{align}
By taking total expectation of \eqref{termLegacy},
\begin{align}
	\mathbb{E}\left[\text{\eqref{termLegacy}}\right]&\leq\left(1-\mu\eta\right)\mathbb{E}\left[\delta_t^2\right]+2L^2\eta^2\left\{\sum_{k\in\mathcal{K}}w_k\mathbb{E}\left[\delta_{t-a_k^{(t)}}^2\right]+\zeta\right\}.\label{ieq:1stTerm}
\end{align}
For \eqref{termNew}, we first deal with the randomness of $Q_1(\cdot)$,
\begin{align}
	&\mathbb{E}\left[\text{\eqref{termNew}}\right]\nonumber\\
	&=\mathbb{E}\Bigg[\mathbb{E}\left[\text{\eqref{termNew}}\Big|\left\{\tilde{\boldsymbol{\theta}}(\tau-a_k^{(\tau)})\big|r_{t+1}\leq\tau\leq t,\forall k\right\},Q_2\left(\cdot\right)\right]\Bigg]\nonumber\\
	&\leq\sigma_1\mathbb{E}\left[\|\boldsymbol{y}_{t+1}\|^2\right]\label{ieq:conditionalSame}\\
	&\leq\sigma_1\mathbb{E}\Bigg\{(t+1-r_{t+1})\Big[\eta^2\left\|\boldsymbol{G}\left(\{t-a_k^{(t)}\}_{\forall k}\right)\right\|^2\nonumber\\
	&\quad\quad\quad\quad+\sum\nolimits_{\tau=r_{t+1}}^{t-1}\left\|Q_2\left[-\eta\boldsymbol{G}\left(\{\tau-a_k^{(\tau)}\}_{\forall k}\right)\right]\right\|^2\Big]\Bigg\}\label{ieq:normSplit}\\
	&\leq a_{\lim}\sigma_1\eta^2L^2\Bigg\{
	\sum\nolimits_{k\in\mathcal{K}}w_k\mathbb{E}~\|\tilde{\boldsymbol{\theta}}(t-a_k^{(t)})-\boldsymbol{\theta}_k^*\|^2\nonumber\\
	&\quad+\left(\sigma_2+1\right)\mathbb{E}\left[\sum_{\tau=r_{t+1}}^{t-1}\sum_{k\in\mathcal{K}}w_k\|\tilde{\boldsymbol{\theta}}(\tau-a_k^{(\tau)})-\boldsymbol{\theta}_k^*\|^2\right]\Bigg\}\label{ieq:wkOutLQ2}\\
	&\leq2a_{\lim}\sigma_1\eta^2L^2\Bigg\{\zeta\left[\left(\sigma_2+1\right)a_{\lim}+1\right]+
	\sum\nolimits_{k\in\mathcal{K}}w_k\mathbb{E}\left[\delta_{t-a_k^{(t)}}^2\right]\nonumber\\
	&\quad\quad\quad\quad+	\left(\sigma_2+1\right)\sum\nolimits_{k\in\mathcal{K}}w_k\mathbb{E}\left[\sum\nolimits_{\tau=r_{t+1}}^{t-1}\delta_{\tau-a_k^{(\tau)}}^2\right]\Bigg\},\label{ieq:2ndTerm}
\end{align}
where \eqref{ieq:conditionalSame} evaluates the expectation over the random quantizer $Q_1(\cdot)$ according to \eqref{eq:rdmqnt}, conditioned on the previously reconstructed models and $Q_2$-quantized components in $\boldsymbol{y}_{t+1}$; \eqref{ieq:normSplit} is based on \eqref{eq:y_tp1_def} and \eqref{ieq:norm_ieq};
\eqref{ieq:wkOutLQ2} follows from applying \eqref{ieq:wk_moveout}, Assumption \ref{asp:smth}, and \eqref{eq:qntIntept}; and \eqref{ieq:2ndTerm} is obtained by applying \eqref{ieq:norm_ieq} and \eqref{eq:zeta_hetero}.
For \eqref{termOld}, we have
\begin{align}
	&\mathbb{E}\left[\text{\eqref{termOld}}\right]\leq2\eta\mathbb{E}\left[\delta_t\cdot\Big|\Big|\boldsymbol{G}(t\boldsymbol{1})-\boldsymbol{G}\left(\{t-a_k^{(t)}\}_{\forall k}\right)\Big|\Big|\right]\nonumber\\
	&\leq2L^2\hat{\sigma}a_{\lim}\eta^2\Bigg[2a_{\lim}\sqrt{\zeta}+a_{\lim}\left(2\sqrt{\zeta}+3/2\right)\mathbb{E}\left[\delta_t^2\right]\nonumber\\
	&\quad\quad\quad\quad\quad\quad+\sum\nolimits_{i=1}^{\min\left(t-1,3a_{\lim}\right)}\mathbb{E}\left[\delta_{t-i}^2\right]/2\Bigg],\label{ieq:3rdTerm}
\end{align}
based on Cauchy–Schwarz inequality and Lemma \ref{lm:outdateCrossAge}.
For \eqref{ieq:y2}, since $Q_1(\cdot)$ is unbiased, as given in \eqref{eq:rdmqnt}, we have\footnote{The expectation is conditioned on those $Q_2$-quantized components in $\boldsymbol{y}_{t+1}$.}
\begin{equation*}
	\mathbb{E}\left[\text{\eqref{ieq:y2}}\Big|\tilde{\boldsymbol{\theta}}(t),\{\tilde{\boldsymbol{\theta}}(t-a_k^{(t)})| 1\leq k\leq K\}, Q_2\left(\cdot\right)\right]=0.
\end{equation*}
Together with \eqref{ieq:1stTerm}, \eqref{ieq:2ndTerm}, \eqref{ieq:3rdTerm},
$\hat{\sigma}\geq 2\sigma_2\geq2\sigma_1$, and $\hat{\sigma}\geq\sigma_1(\sigma_2+1)$,
we have
\begin{align}
	\mathbb{E}\left[\delta_{t+1}^2\right]&\leq \left(1-\mu\eta+2L^2\hat{\sigma}a_{\lim}^2\eta^2\left(2\sqrt{\zeta}+3/2\right)\right)\mathbb{E}\left[\delta_t^2\right]\nonumber\\
	&\quad+L^2\eta^2\sqrt{\zeta}\left[\sqrt{\zeta}\left(2+\hat{\sigma}a_{\lim}\left(2a_{\lim}+1\right)\right)+4\hat{\sigma}a_{\lim}^2\right]\nonumber\\
	&\quad+L^2\eta^2\left(2+\hat{\sigma}a_{\lim}\right)\sum\nolimits_{k\in\mathcal{K}}w_k\mathbb{E}\left[\delta_{t-a_k^{(t)}}^2\right]\nonumber\\
	&\quad+2a_{\lim}\hat{\sigma}L^2\eta^2\sum\nolimits_{k\in\mathcal{K}}w_k\mathbb{E}\left[\sum\nolimits_{\tau=r_{t+1}}^{t-1}\delta_{\tau-a_k^{(\tau)}}^2\right]\nonumber\\
	&\quad+L^2\hat{\sigma}a_{\lim}\eta^2\sum\nolimits_{i=1}^{\min\left(t-1,3a_{\lim}\right)}\mathbb{E}\left[\delta_{t-i}^2\right].\label{ieq:caseFirstLvl}
\end{align}

\subsubsection{If $t+1\in\mathcal{T}_0$}
The upper bound of $\delta_{t+1}^2$ can be similarly derived as in the case of $t+1\in\mathcal{T}_2$. Specifically,
\begin{itemize}
    \item \eqref{ieq:mainIeq} holds with
     $Q_2\left[\eta\boldsymbol{G}\left(\{t-a_k^{(t)}\}_{\forall k}\right)\right]$ replaced by $\eta\boldsymbol{G}\left(\{t-a_k^{(t)}\}_{\forall k}\right)$;
     \item \eqref{ieq:b_p1} holds with $\hat{\sigma}$ replaced by a scaling $2$;
     \item \eqref{ieq:eq0_e} and \eqref{ieq:lm2Aply} hold without any change.
\end{itemize}
This gives
\begin{align}
	&\mathbb{E}\left[\delta_{t+1}^2\right]\leq\left[1-\mu\eta+2L^2\eta^2\hat{\sigma}a_{\lim}^2\left(2\sqrt{\zeta}+3/2\right)\right]\mathbb{E}\left[\delta_t^2\right]\nonumber\\
	&+L^2\eta^2\Bigg\{2\sqrt{\zeta}\left(\sqrt{\zeta}+2a_{\lim}^2\hat{\sigma}\right)+2\sum\nolimits_{k\in\mathcal{K}}w_k\mathbb{E}\left[\delta_{t-a_k^{(t)}}^2\right]\nonumber\\
	&\quad\quad\quad+a_{\lim}\hat{\sigma}\sum\nolimits_{i=1}^{\min\left(t-1,3a_{\lim}\right)}\mathbb{E}\left[\delta_{t-i}^2\right]\Bigg\}.\label{ieq:sumup0}
\end{align}
Comparing \eqref{ieq:sumup}, \eqref{ieq:caseFirstLvl}, and \eqref{ieq:sumup0}, we conclude that \eqref{ieq:caseFirstLvl} holds for all $t$, as it has the highest upper bound.

\subsubsection{Overall Convergence Bound}
Based on \eqref{ieq:stepSize_ieq},
\begin{align*}	&L^2\eta^2\left[2\hat{\sigma}a_{\lim}^2\left(2\sqrt{\zeta}+3/2\right)+2+\hat{\sigma}a_{\lim}+2\hat{\sigma}a_{\lim}^2+3\hat{\sigma}a_{\lim}^2\right]\nonumber\\
	&+1-\mu\eta<1-\eta\mu/2\triangleq C.
\end{align*}
Then, \eqref{ieq:caseFirstLvl} can be rearranged as\footnote{In \eqref{ieq:caseFirstLvl}, the $\tau$-summation has no more than $a_{\lim}$ terms and the $i$-summation has at most $3a_{\lim}$ terms, respectively.}
\begin{equation}
	\mathbb{E}\left[\delta_{t+1}^2\right]\leq C\max\left(\mathbb{E}\left[\delta_{t}^2\right],...,\mathbb{E}\left[\delta_{\max\left(1,t-3a_{\lim}\right)}^2\right]\right)+\eta^2\epsilon,\label{ieq:maxIneq}
\end{equation}
where $\epsilon$ is defined in \eqref{eq:epsilon}.
We will prove by induction that
\begin{equation}
	\mathbb{E}\left[\delta_{t+1}^2\right]\leq C^{\lfloor\frac{t-1}{3a_{\lim}+1}\rfloor+1}\mathbb{E}\left[\delta_{1}^2\right]+\eta^2\epsilon\sum\nolimits_{i=0}^{t-1}C^i. \label{ieq:claimed}
\end{equation}
When $t=1$, \eqref{ieq:maxIneq} gives $\mathbb{E}\left[\delta_{2}^2\right]\leq C\mathbb{E}\left[\delta_{1}^2\right]+\eta^2\epsilon$. Assuming that \eqref{ieq:claimed} holds, we evaluate $\mathbb{E}\left[\delta_{t+2}^2\right]$ based on \eqref{ieq:maxIneq}:
\begin{align}
	&\mathbb{E}\left[\delta_{t+2}^2\right]\leq C\max\left(\mathbb{E}\left[\delta_{t+1}^2\right],...,\mathbb{E}\left[\delta_{\max\left(1,t+1-
    3a_{\lim}\right)}^2\right]\right)+\eta^2\epsilon\nonumber\\
	&\leq C\left[C^{\lfloor\frac{\max\left(1,t+1-3a_{\lim}\right)-2}{3a_{\lim}+1}\rfloor+1}\mathbb{E}\left[\delta_{1}^2\right]+\eta^2\epsilon\sum\nolimits_{i=0}^{t-1}C^i\right]+\eta^2\epsilon\label{ieq:sep_step}\\
	&=C^{\lfloor\frac{\max\left(3a_{\lim},t\right)}{3a_{\lim}+1}\rfloor+1}\mathbb{E}\left[\delta_{1}^2\right]+\eta^2\epsilon\sum\nolimits_{i=0}^{t}C^i\label{sep_step_fin}
\end{align}
where \eqref{ieq:sep_step} holds because  the first and  second terms in \eqref{ieq:claimed} decreases and increases with $t$, respectively.
Note that
\begin{equation*}
	\left\lfloor\frac{\max\left(3a_{\lim},t\right)}{3a_{\lim}+1}\right\rfloor=\begin{cases}
		\lfloor\frac{3a_{\lim}}{3a_{\lim}+1}\rfloor=\lfloor\frac{t}{3a_{\lim}+1}\rfloor,&t<3a_{\lim}\\
		\lfloor\frac{t}{3a_{\lim}+1}\rfloor,&t\geq 3a_{\lim}
	\end{cases}.
\end{equation*} 
Then
\eqref{sep_step_fin} can be rewritten as
\begin{equation*}
	\mathbb{E}\left[\delta_{t+2}^2\right]\leq C^{\lfloor\frac{t}{3a_{\lim}+1}\rfloor+1}\mathbb{E}\left[\delta_{1}^2\right]+\eta^2\epsilon\sum\nolimits_{i=0}^{t}C^i,
\end{equation*}
which completes the proof. Finally, \eqref{ieq:optGapMain} follows from \eqref{ieq:claimed} and $\sum_{i=0}^{t-1}C^i<1/\left(1-C\right)$.

\subsection{Useful Inequalities and Lemmas}
\label{sec:more_lemma}
Let $\boldsymbol{a}_k\in\mathbb{R}^d, \forall k\in\mathcal{K}$. Based on Jensen's inequality,
	\begin{equation}
		\left\|\sum\nolimits_{k\in\mathcal{K}}w_k\boldsymbol{a}_k\right\|^p\leq\sum\nolimits_{k\in\mathcal{K}}w_k \left\|\boldsymbol{a}_k\right\|^p, p=1,2.\label{ieq:wk_moveout}
	\end{equation}
    As a special case,
	\begin{equation}
		\left\|\sum\nolimits_{i=1}^N\boldsymbol{a}_i\right\|^2\leq N\sum\nolimits_{i=1}^N\|\boldsymbol{a}_i\|^2\label{ieq:norm_ieq}.
	\end{equation}

\begin{lemma}
	The following result holds,
	\begin{align*}
		&\mathbb{E}\Big[\delta_t\Big|\Big|\boldsymbol{G}(t\boldsymbol{1})-\boldsymbol{G}\left(\{t-a_k^{(t)}\}_{\forall k}\right)\Big|\Big|\Big]\leq L^2\hat{\sigma}a_{\lim}\eta\Big[2a_{\lim}\sqrt{\zeta}\\
		&\quad\quad+a_{\lim}\left(2\sqrt{\zeta}+3/2\right)\mathbb{E}\left[\delta_t^2\right]+\sum\nolimits_{i=1}^{\min\left(t-1,3a_{\lim}\right)}\mathbb{E}\left[\delta_{t-i}^2\right]/2\Big].
	\end{align*}
	\label{lm:outdateCrossAge}
\end{lemma}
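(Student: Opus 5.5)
We need to bound $\mathbb{E}[\delta_t\|\boldsymbol{G}(t\boldsymbol{1})-\boldsymbol{G}(\{t-a_k^{(t)}\}_{\forall k})\|]$. The plan is to first use \eqref{ieq:wk_moveout} with $p=1$ and Assumption \ref{asp:smth} to write
\begin{equation*}
\Big\|\boldsymbol{G}(t\boldsymbol{1})-\boldsymbol{G}\left(\{t-a_k^{(t)}\}_{\forall k}\right)\Big\|\leq L\sum\nolimits_{k\in\mathcal{K}}w_k\|\tilde{\boldsymbol{\theta}}(t)-\tilde{\boldsymbol{\theta}}(t-a_k^{(t)})\|,
\end{equation*}
and then telescope $\tilde{\boldsymbol{\theta}}(t)-\tilde{\boldsymbol{\theta}}(t-a_k^{(t)})=\sum_{s=t-a_k^{(t)}}^{t-1}[\tilde{\boldsymbol{\theta}}(s+1)-\tilde{\boldsymbol{\theta}}(s)]$. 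Since $a_k^{(t)}\le a_{\lim}$ by Assumption \ref{asp:ageLim}, this has at most $a_{\lim}$ one-step increments. This is where the factor $a_{\lim}$ in front of the bracket in the lemma comes from.

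Next I would bound each one-step increment $\tilde{\boldsymbol{\theta}}(s+1)-\tilde{\boldsymbol{\theta}}(s)$ according to the type of $s+1$. For $s+1\in\mathcal{T}_2$ the increment is $Q_2[-\eta\boldsymbol{G}(\{s-a_k^{(s)}\})]$. For $s+1\in\mathcal{T}_0$ it is exactly $-\eta\boldsymbol{G}(\cdot)$, since $Q_0$ is the identity. For $s+1\in\mathcal{T}_1$, by \eqref{eq:t1_ref}–\eqref{eq:thetaHat_1} it equals $-\eta\boldsymbol{G}(\cdot)+Q_1(\boldsymbol{y}_{s+1})-\boldsymbol{y}_{s+1}$. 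In every case the increment is $\eta$ times a gradient evaluated at a model of age at most $a_{\lim}$, plus quantization error. That gradient is taken at time $s-a_k^{(s)}\ge t-2a_{\lim}$. Through $\boldsymbol{y}_{s+1}$ the error reaches back another $a_{\lim}$ steps, to $t-3a_{\lim}$, which explains the window $\min(t-1,3a_{\lim})$.

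To handle products, I would apply $\delta_t\cdot\|\cdot\|\le \tfrac12(\alpha\delta_t^2+\alpha^{-1}\|\cdot\|^2)$ only where needed. For the gradient parts I would use $\|\nabla F_k(\tilde{\boldsymbol{\theta}}(\tau))\|\le L(\delta_\tau+\|\boldsymbol{\theta}^*-\boldsymbol{\theta}_k^*\|)$ and $\sum_k w_k\|\boldsymbol{\theta}^*-\boldsymbol{\theta}_k^*\|\le\sqrt{\zeta}$ from \eqref{ieq:wk_moveout} and \eqref{eq:zeta_hetero}. This gives terms of the form $\delta_t\sqrt{\zeta}$, which are bounded by $\sqrt{\zeta}(1+\delta_t^2)$ and account for the $2a_{\lim}\sqrt{\zeta}$ and $2\sqrt{\zeta}\,\mathbb{E}[\delta_t^2]$ pieces. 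Terms of the form $\delta_t\delta_{t-i}$ are bounded by $(\delta_t^2+\delta_{t-i}^2)/2$, giving the $\tfrac32\mathbb{E}[\delta_t^2]$ and the $\tfrac12\sum_i\mathbb{E}[\delta_{t-i}^2]$ terms.

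For the quantized parts I would use Jensen, $\mathbb{E}\|Q(\boldsymbol{x})\|\le\sqrt{\sigma+1}\,\|\boldsymbol{x}\|$ and $\mathbb{E}\|Q_1(\boldsymbol{y})-\boldsymbol{y}\|\le\sqrt{\sigma_1}\|\boldsymbol{y}\|$, conditioning first on the past. Together with $\|\boldsymbol{y}_{s+1}\|$, which sums at most $a_{\lim}$ terms each bounded the same way, these factors are all dominated by $\hat{\sigma}$ in \eqref{eq:sigma}; the $\max(\sigma_1,\sqrt{\sigma_1},2)$ absorbs the different cases.

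The main obstacle is the conditioning. $\delta_t$ and the random ages $a_k^{(t)}$ are correlated with the quantization noise, so pulling out expectations requires the uniform worst-case bound $a\le a_{\lim}$ and careful placement of the Young-inequality splits. I would first write everything pathwise as products $\delta_t\delta_\tau$ and $\delta_t\sqrt{\zeta}$ with deterministic counts of terms. I would then bound the quantization norms via conditional Jensen before taking total expectation, and finally collect constants loosely into $\hat{\sigma}$, accepting a non-tight constant.
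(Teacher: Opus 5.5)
Your outline follows the paper's proof almost step for step. The shared steps are:
\begin{enumerate}
\item Jensen and smoothness reduce the left-hand side to $L\sum_k w_k\|\tilde{\boldsymbol{\theta}}(t)-\tilde{\boldsymbol{\theta}}(t-a_k^{(t)})\|$, which telescopes into at most $a_{\lim}$ one-step increments.
\item The increments are split by broadcast type $\mathcal{T}_0/\mathcal{T}_1/\mathcal{T}_2$. The $\mathcal{T}_1$ increment is $Q_1(\boldsymbol{y})-\boldsymbol{y}-\eta\boldsymbol{G}$, and unrolling $\|\boldsymbol{y}\|$ through the preceding $Q_2$ increments produces the $3a_{\lim}$ window.
\item You use $\|\nabla F_k(\tilde{\boldsymbol{\theta}}(\tau))\|\le L(\delta_\tau+\|\boldsymbol{\theta}^*-\boldsymbol{\theta}_k^*\|)$ together with $\sum_k w_k\|\boldsymbol{\theta}^*-\boldsymbol{\theta}_k^*\|\le\sqrt{\zeta}$.
\item You close with $\delta_t\le 1+\delta_t^2$ and $\delta_t\delta_{t-i}\le(\delta_t^2+\delta_{t-i}^2)/2$.
\end{enumerate}

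The obstacle you single out, however, is not removed by what you propose, and the paper has the same soft spot. Consider passing from $\mathbb{E}[\delta_t\|Q_2(\boldsymbol{x})\|]$ to $\sqrt{\sigma_2+1}\,\mathbb{E}[\delta_t\|\boldsymbol{x}\|]$, or from $\mathbb{E}[\delta_t\|Q_1(\boldsymbol{y})-\boldsymbol{y}\|]$ to $\sqrt{\sigma_1}\,\mathbb{E}[\delta_t\|\boldsymbol{y}\|]$, by conditioning on the quantizer input. This is your ``conditional Jensen before taking total expectation'', and it is what \eqref{ieq:sqrtVar}, \eqref{ieq:qntEqalT1}, \eqref{ieq:qntEqalT2} do in the paper. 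It treats $\delta_t$ as fixed under that conditioning. Yet $\tilde{\boldsymbol{\theta}}(t)$ is built from the very quantizer output being averaged over; e.g.\ for $i=1$ and $t\in\mathcal{T}_2$, $\tilde{\boldsymbol{\theta}}(t)=\tilde{\boldsymbol{\theta}}(t-1)+Q_2(\cdot)$. Your other two remedies do not close this either:
\begin{itemize}
\item You cannot first write everything pathwise as products $\delta_t\delta_\tau$ and $\delta_t\sqrt{\zeta}$, because \eqref{eq:rdmqnt} controls the quantizers only in mean square.
\item A Young split $\delta_t X\le\frac{\alpha}{2}\delta_t^2+\frac{1}{2\alpha}X^2$ with constant $\alpha$ does decouple the factors. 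But the quantizer's argument cannot be split into its $\delta$- and $\sqrt{\zeta}$-parts, so it produces $\zeta$ where the lemma has $\sqrt{\zeta}$.
\end{itemize}

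What closes the gap while keeping the stated form is to tune that split to the moments, i.e.\ Cauchy--Schwarz in expectation:
\begin{itemize}
\item First bound $\mathbb{E}[\delta_t\|Q(\boldsymbol{x})\|]\le\sqrt{\mathbb{E}[\delta_t^2]}\,\sqrt{(\sigma+1)\,\mathbb{E}\|\boldsymbol{x}\|^2}$ and $\mathbb{E}[\delta_t\|Q_1(\boldsymbol{y})-\boldsymbol{y}\|]\le\sqrt{\mathbb{E}[\delta_t^2]}\,\sqrt{\sigma_1\mathbb{E}\|\boldsymbol{y}\|^2}$. Conditioning is now harmless, since $\delta_t$ has been separated off and \eqref{eq:rdmqnt} gives the second-moment bounds by the tower property.
\item Then apply Minkowski to $\sqrt{\mathbb{E}\|\cdot\|^2}$ to separate the summands of $\boldsymbol{y}$ and the $\delta$- and $\sqrt{\zeta}$-parts of $\|\boldsymbol{G}\|$.
\item Finally use $\sqrt{ab}\le(a+b)/2$ and $\sqrt{a}\le 1+a$ in place of the pathwise relaxations.
\end{itemize}
This gives back the same factors $1+\sqrt{\sigma_1}$, $\sqrt{\sigma_2+1}$, $\sqrt{\sigma_1(\sigma_2+1)}$ that you absorb into $\hat{\sigma}$.

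A smaller bookkeeping point, also shared with the paper, concerns the coefficient $1/2$ on each $\mathbb{E}[\delta_{t-i}^2]$. Once the factor $a_{\lim}$ is pulled out, that coefficient presumes each device's index $j-a_l^{(j)}$ does not repeat across the at most $2a_{\lim}$ values of $j$. This fails when a device keeps the same stale model over several iterations, and a safe count yields the larger coefficient $(a_{\lim}+1)/2$ there.
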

\noindent\emph{Proof.}
By applying \eqref{ieq:wk_moveout}, Assumption \ref{asp:smth}, and the triangle inequality,
\begin{align}
	&\mathbb{E}\Bigg[\delta_t\Big|\Big|\boldsymbol{G}(t\boldsymbol{1})-\boldsymbol{G}\left(\{t-a_k^{(t)}\}_{\forall k}\right)\Big|\Big|\Bigg]\nonumber\\
    &\leq L\mathbb{E}\left\{\delta_t\sum\nolimits_{k\in\mathcal{K}}w_k\|\tilde{\boldsymbol{\theta}}(t)-\tilde{\boldsymbol{\theta}}(t-a_k^{(t)})\|\right\}\nonumber\\
	&\leq L\mathbb{E}\left\{\delta_t\sum_{k\in\mathcal{K}}w_k\left[\sum\nolimits_{i=1}^{a_k^{(t)}}\|\tilde{\boldsymbol{\theta}}(t-i+1)-\tilde{\boldsymbol{\theta}}(t-i)\|\right]\right\}.\label{ieq:c_part1}
  \end{align}
%where we evaluate $\mathbb{E}[\cdot]$ in two steps: 1) up to model broadcast at iteration $t$, 2) memory update at iteration $t$, i.e., $\mathbb{E}[\cdot]=\mathbb{E}_{t,\text{mem}}[\mathbb{E}_{\leq t,\text{dl}}[\cdot]]$. 
We first handle the partial terms
\begin{equation}
	\delta_t\|\tilde{\boldsymbol{\theta}}(t-i+1)-\tilde{\boldsymbol{\theta}}(t-i)\|, \quad i\in\{1,...,a_k^{(t)}\}\label{eq:partialTerms}
\end{equation} in \eqref{ieq:c_part1} as follows.
We define $\tau_i=t-i,\forall i$. Then, \eqref{eq:partialTerms} becomes
\begin{equation*}
	\delta_t\|\tilde{\boldsymbol{\theta}}(\tau_i+1)-\tilde{\boldsymbol{\theta}}(\tau_i)\|, \quad i\in\{1,...,a_k^{(t)}\}.
\end{equation*}

\subsubsection{If a First-Level Differential Update Is Broadcast at Iteration $\tau_i+1$}
$\hat{\boldsymbol{\theta}}(\tau_i+1)=Q_1\left(\boldsymbol{\theta}(\tau_i+1)-\tilde{\boldsymbol{\theta}}(r_{\tau_i+1})\right)=Q_1(\boldsymbol{z}_i)$, where
\begin{equation*}
	\boldsymbol{z}_i=\tilde{\boldsymbol{\theta}}(\tau_i)-\eta\boldsymbol{G}\left(\{\tau_i-a_k^{(\tau_i)}\}_{\forall k}\right)-\tilde{\boldsymbol{\theta}}(r_{\tau_i+1}).
\end{equation*}
Then, $\tilde{\boldsymbol{\theta}}(\tau_i+1)-\tilde{\boldsymbol{\theta}}(\tau_i)=\hat{\boldsymbol{\theta}}(\tau_i+1)+\tilde{\boldsymbol{\theta}}(r_{\tau_i+1})-\tilde{\boldsymbol{\theta}}(\tau_i)$ gives
\begin{align}
	&\left\|\tilde{\boldsymbol{\theta}}(\tau_i+1)-\tilde{\boldsymbol{\theta}}(\tau_i)\right\|=\left\|Q_1(\boldsymbol{z}_i)-\boldsymbol{z}_i-\eta\boldsymbol{G}\left(\{\tau_i-a_k^{(\tau_i)}\}_{\forall k}\right)\right\|\nonumber\\
	&\leq\|Q_1(\boldsymbol{z}_i)-\boldsymbol{z}_i\|+\|\eta\boldsymbol{G}\left(\{\tau_i-a_k^{(\tau_i)}\}_{\forall k}\right)\|,\label{ieq:qnt_ieq}
\end{align}
by applying the  triangle inequality.
Then, from \eqref{ieq:qnt_ieq} and evaluating the effect of $Q_1(\cdot)$,
\begin{align}
	&\mathbb{E}\Big[\delta_t\|\tilde{\boldsymbol{\theta}}(\tau_i+1)-\tilde{\boldsymbol{\theta}}(\tau_i)\|\Big]\nonumber\\
	&\leq\mathbb{E}\Bigg\{\mathbb{E}\Big[\delta_t\|Q_1(\boldsymbol{z}_i)-\boldsymbol{z}_i\|\Big|\tilde{\boldsymbol{\theta}}(\tau_i),\tilde{\boldsymbol{\theta}}(r_{\tau_i+1}),\left\{\tilde{\boldsymbol{\theta}}(\tau_i-a_k^{(\tau_i)})\right\}_{k=1}^K\Big]\Bigg\}\nonumber\\
	&\quad+\mathbb{E}\left[\delta_t\|\eta\boldsymbol{G}\left(\{\tau_i-a_k^{(\tau_i)}\}_{\forall k}\right)\|\right]\nonumber\\
	&\leq\sqrt{\sigma_1}\mathbb{E}\left[\delta_t\|\boldsymbol{z}_i\|\right]+\mathbb{E}\left[\delta_t\|\eta\boldsymbol{G}\left(\{\tau_i-a_k^{(\tau_i)}\}_{\forall k}\right)\|\right]\label{ieq:sqrtVar}\\
	&\leq\sqrt{\sigma_1}\mathbb{E}\left[\delta_t\|\tilde{\boldsymbol{\theta}}(\tau_i)-\tilde{\boldsymbol{\theta}}(r_{\tau_i+1})\|\right]\nonumber\\
	&\quad+\left(1+\sqrt{\sigma_1}\right)\eta\mathbb{E}\left[\delta_t\|\boldsymbol{G}\left(\{\tau_i-a_k^{(\tau_i)}\}_{\forall k}\right)\|\right]\label{notatRplc},
\end{align}
where \eqref{ieq:sqrtVar} follows from  applying Jensen's inequality to \eqref{eq:rdmqnt} and \eqref{notatRplc} follows from using triangle inequality on $\|\boldsymbol{z}_i\|$.
Note that $r_{\tau_i+1}+1,...,\tau_i\in\mathcal{T}_2,$ by applying triangle inequality,
\begin{align}
	&\mathbb{E}\left[\delta_t\|\tilde{\boldsymbol{\theta}}(\tau_i)-\tilde{\boldsymbol{\theta}}(r_{\tau_i+1})\|\right]\leq\sum\nolimits_{j=r_{\tau_i+1}}^{\tau_i-1}\mathbb{E}\left[\delta_t\|\tilde{\boldsymbol{\theta}}(j+1)-\tilde{\boldsymbol{\theta}}(j)\|\right]\nonumber\\
	&=\sum\nolimits_{j=r_{\tau_i+1}}^{\tau_i-1}\mathbb{E}\left\{\delta_t\left\|Q_2\left[-\eta\boldsymbol{G}\left(\{j-a_k^{(j)}\}_{\forall k}\right)\right]\right\|\right\}\nonumber\\
	&=\sum\nolimits_{j=r_{\tau_i+1}}^{\tau_i-1}\mathbb{E}\Bigg\{\nonumber\\
    &\quad\quad\quad\mathbb{E}\Big\{\delta_t\left\|Q_2\left[-\eta\boldsymbol{G}\left(\{j-a_k^{(j)}\}_{\forall k}\right)\right]\right\|\Big|\left\{\tilde{\boldsymbol{\theta}}(j-a_k^{(j)})\right\}_{k=1}^K\Big\}\Bigg\}\nonumber\\
	&\leq\eta
	\sqrt{\sigma_2+1}\sum\nolimits_{j=r_{\tau_i+1}}^{\tau_i-1}\mathbb{E}\left[\delta_t\|\boldsymbol{G}\left(\{j-a_k^{(j)}\}_{\forall k}\right)\|\right].\label{ieq:qntEqalT1}
\end{align}
Inserting \eqref{ieq:qntEqalT1} into \eqref{notatRplc},
\begin{align}
	&\mathbb{E}\Big[\delta_t\|\tilde{\boldsymbol{\theta}}(\tau_i+1)-\tilde{\boldsymbol{\theta}}(\tau_i)\|\Big]\nonumber\\
	&\leq\eta\sqrt{\sigma_1\left(\sigma_2+1\right)}\sum\nolimits_{j=r_{\tau_i+1}}^{\tau_i-1}\mathbb{E}\left[\delta_t\|\boldsymbol{G}\left(\{j-a_k^{(j)}\}_{\forall k}\right)\|\right]\nonumber\\
	&\quad+\eta\left(1+\sqrt{\sigma_1}\right)\mathbb{E}\left[\delta_t\|\boldsymbol{G}\left(\{\tau_i-a_k^{(\tau_i)}\}_{\forall k}\right)\|\right]\label{ieq:qntEqalT1_1}.
\end{align}

\subsubsection{If a Full Model Is Broadcast at Iteration $\tau_i+1$}
Based on Assumption \ref{asp:ageLim},
\begin{align}
	&\|\tilde{\boldsymbol{\theta}}(\tau_i+1)-\tilde{\boldsymbol{\theta}}(\tau_i)\|\nonumber\\
	&=\Bigg|\Bigg|Q_0\Big\{\tilde{\boldsymbol{\theta}}(\tau_i)-\eta\boldsymbol{G}\left(\{\tau_i-a_k^{(\tau_i)}\}_{\forall k}\right)\Big\}-\tilde{\boldsymbol{\theta}}(\tau_i)\Bigg|\Bigg|\nonumber\\
	&=\eta\|\boldsymbol{G}\left(\{\tau_i-a_k^{(\tau_i)}\}_{\forall k}\right)\|.\nonumber\\
	&\Rightarrow\mathbb{E}\left[\delta_t\|\tilde{\boldsymbol{\theta}}(\tau_i+1)-\tilde{\boldsymbol{\theta}}(\tau_i)\|\right]=\eta\mathbb{E}\left[\delta_t\|\boldsymbol{G}\left(\{\tau_i-a_k^{(\tau_i)}\}_{\forall k}\right)\|\right].\label{ieq:qntEqalT0}
\end{align}

\subsubsection{If a Second-Level Differential Update Is Broadcast at Iteration $\tau_i+1$}
\begin{align}
	&\mathbb{E}\left[\delta_t\|\tilde{\boldsymbol{\theta}}(\tau_i+1)-\tilde{\boldsymbol{\theta}}(\tau_i)\|\right]=\mathbb{E}\left[\delta_t\Big|\Big|Q_2\left[-\eta\boldsymbol{G}\left(\{\tau_i-a_k^{(\tau_i)}\}_{\forall k}\right)\right]\Big|\Big|\right]\nonumber\\
	&\leq\eta\sqrt{\sigma_2+1}\mathbb{E}\left[\delta_t\left\|\boldsymbol{G}\left(\{\tau_i-a_k^{(\tau_i)}\}_{\forall k}\right)\right\|\right].\label{ieq:qntEqalT2}
\end{align}
Based on \eqref{ieq:qntEqalT1_1}, \eqref{ieq:qntEqalT0}, and \eqref{ieq:qntEqalT2}, we conclude that
\begin{align}
	&\mathbb{E}\left[\delta_t\|\tilde{\boldsymbol{\theta}}(\tau_i+1)-\tilde{\boldsymbol{\theta}}(\tau_i)\|\right]\nonumber\\
	&\leq\eta(1+\sqrt{\sigma_2})\mathbb{E}\left[\delta_t\|\boldsymbol{G}\left(\{\tau_i-a_k^{(\tau_i)}\}_{\forall k}\right)\|\right]\nonumber\\
	&\quad+\eta\sqrt{\sigma_1(\sigma_2+1)}\sum\nolimits_{j=r_{\tau_i+1}}^{\tau_i-1}\mathbb{E}\left[\delta_t\|\boldsymbol{G}\left(\{j-a_k^{(j)}\}_{\forall k}\right)\|\right]\nonumber\\
	&\leq\eta\hat{\sigma} L\sum\nolimits_{j=r_{\tau_i+1}}^{\tau_i}\sum\nolimits_{k\in\mathcal{K}}w_k\mathbb{E}\left[\delta_t\|\tilde{\boldsymbol{\theta}}(j-a_k^{(j)})-\boldsymbol{\theta}_k^*\|\right],\label{ieq:qntUlDl}
\end{align}
$\tau_i+1\in\{\mathcal{T}_0,\mathcal{T}_1,\mathcal{T}_2\}$, by applying \eqref{eq:sigma}, Assumption \ref{asp:smth}, and \eqref{ieq:wk_moveout}.
By inserting \eqref{ieq:qntUlDl} in \eqref{ieq:c_part1} and recovering $\tau_i$ by $t-i$,
\begin{align}
    &\mathbb{E}\Big[\delta_t\Big|\Big|\boldsymbol{G}(t\boldsymbol{1})-\boldsymbol{G}\left(\{t-a_k^{(t)}\}_{\forall k}\right)\Big|\Big|\Big]\leq L^2\hat{\sigma}\eta\nonumber\\
	&\quad\quad \cdot \mathbb{E}\Bigg\{\sum_{k\in\mathcal{K}}w_k\sum_{i=1}^{a_k^{(t)}}\delta_t\left[\sum_{j=r_{t-i+1}}^{t-i}\sum_{l\in\mathcal{K}}w_l\|\tilde{\boldsymbol{\theta}}(j-a_l^{(j)})-\boldsymbol{\theta}_l^*\|\right]\Bigg\}\nonumber\\
	&\leq L^2\hat{\sigma}a_{\lim}\eta\mathbb{E}\Bigg\{\delta_t\left[\sum_{j=r_{t-i+1}}^{t-1}\sum_{l\in\mathcal{K}}w_l\|\tilde{\boldsymbol{\theta}}(j-a_l^{(j)})-\boldsymbol{\theta}_l^*\|\right]\Bigg\}\label{ieq:usingAlim}\\
    &\leq L^2\hat{\sigma}a_{\lim}\eta\mathbb{E}\Bigg\{\delta_t\left[\sum_{j=r_{t-i+1}}^{t-1}\left(\sum_{l\in\mathcal{K}}w_l\delta_{j-a_l^{(j)}}+\sqrt{\zeta}\right)\right]\Bigg\},\label{ieq:usingTriAng}
\end{align}
where \eqref{ieq:usingAlim} is based on $a_k^{(t)}\leq a_{\lim}$ in Assumption \ref{asp:ageLim}; \eqref{ieq:usingTriAng} is by applying the triangle inequality and $$\sum\nolimits_{k\in\mathcal{K}}w_k\|\boldsymbol{\theta}^*-\boldsymbol{\theta}_k^*\|\leq\sqrt{\sum\nolimits_{k\in\mathcal{K}}w_k\|\boldsymbol{\theta}^*-\boldsymbol{\theta}_k^*\|^2}=\sqrt{\zeta}.$$
Since $(t-a_{\lim}+1)-r_{t-a_{\lim}+1}\leq a_{\lim}$, 
\begin{equation*}
\sum\nolimits_{j=r_{t-i+1}}^{t-1}1\leq t-r_{t-a_{\lim}+1}\leq 2a_{\lim}~\text{and }r_{t-a_{\lim}+1}\geq t-2a_{\lim}.
\end{equation*}
\eqref{ieq:usingTriAng} can then be rearranged as
\begin{align*}
	&\mathbb{E}\Bigg[\delta_t\Big|\Big|\boldsymbol{G}(t\boldsymbol{1})-\boldsymbol{G}\left(\{t-a_k^{(t)}\}_{\forall k}\right)\Big|\Big|\Bigg]\nonumber\\
	&\leq L^2\hat{\sigma}a_{\lim}\eta\left[2a_{\lim}\sqrt{\zeta}\mathbb{E}\left[\delta_t\right]+\mathbb{E}\left[\delta_t\sum\nolimits_{i=1}^{\min\left(t-1,3a_{\lim}\right)}\delta_{t-i}\right]\right],\label{ieq:lm_preFinal}
\end{align*}
The proof is complete by relaxing the bound above with: 
\begin{itemize}
	\item $\delta_t<\delta_t^2+1$, since $x<x^2+1, \forall x$
	\item the arithmetic-geometric mean inequality, leading to
	\begin{align*}
		&\delta_t\sum\nolimits_{i=1}^{\min\left(t-1,3a_{\lim}\right)}\delta_{t-i}\leq\sum\nolimits_{i=1}^{\min\left(t-1,3a_{\lim}\right)}\left(\delta_{t}^2+\delta_{t-i}^2\right)/2\nonumber\\
		&\leq 3a_{\lim}\delta_{t}^2/2+\sum\nolimits_{i=1}^{\min\left(t-1,3a_{\lim}\right)}\delta_{t-i}^2/2.
	\end{align*}
\end{itemize}

\bibliographystyle{IEEEtran}
\bibliography{ref}

	\begin{IEEEbiography}[{\includegraphics[width=1in,height=1.25in,clip,keepaspectratio]{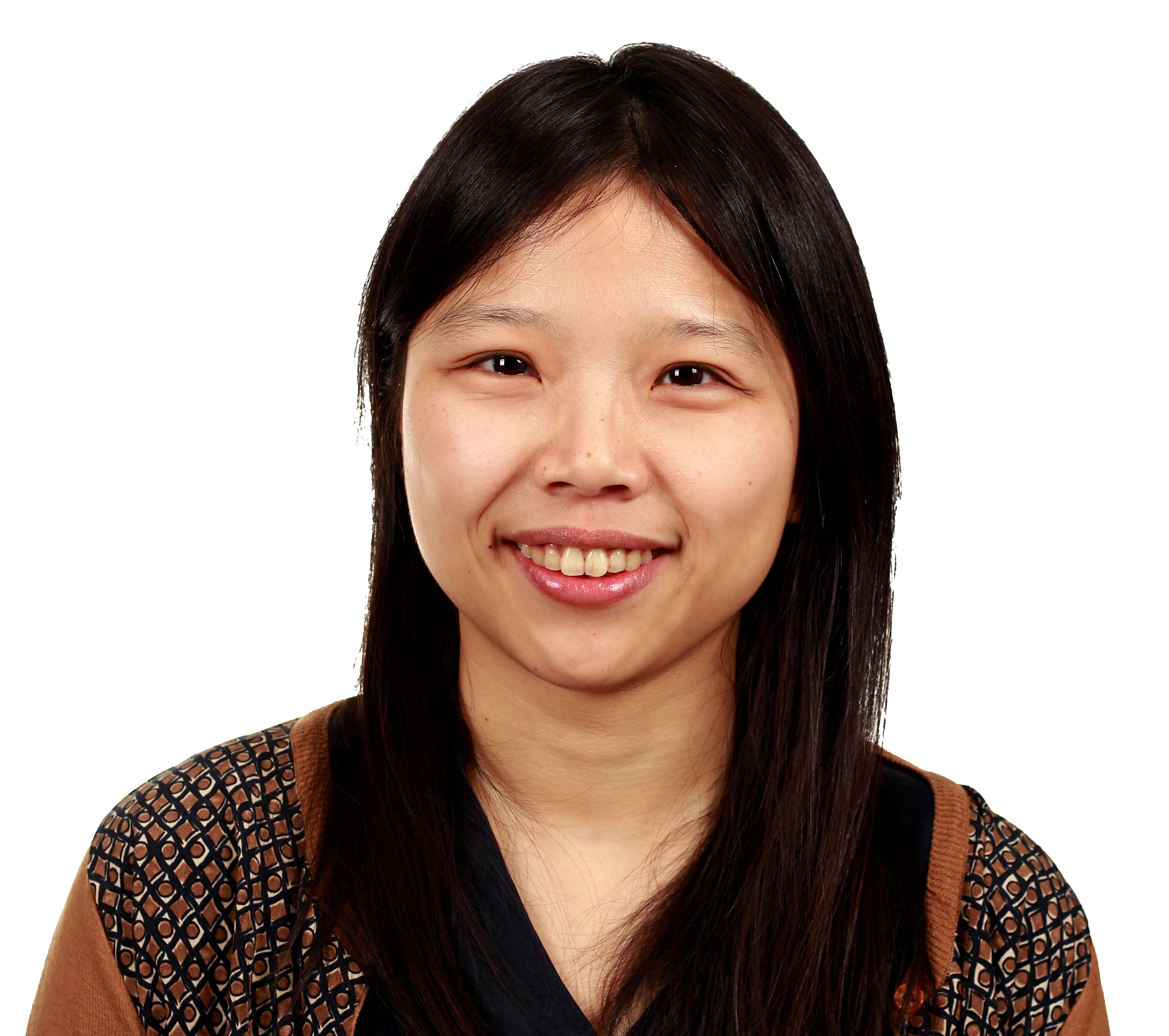}}]
		{\bfseries Chung-Hsuan Hu} received the B.Sc. degree in electronics and electrical engineering, and M.Sc. degree in communications engineering from National Yang Ming Chiao Tung University (NYCU), Taiwan, in 2010 and 2012, respectively. From 2013 to 2020, she worked as a communication systems engineer with MediaTek Inc., Taiwan. After that, in 2026, she received the Ph.D. degree with the Division of Communication Systems, Department of Electrical Engineering, Linköping University, Sweden. Her research interests include distributed learning systems and wireless communications.
	\end{IEEEbiography}
	\begin{IEEEbiography}[{\includegraphics[width=1in,height=1.25in,clip,keepaspectratio]{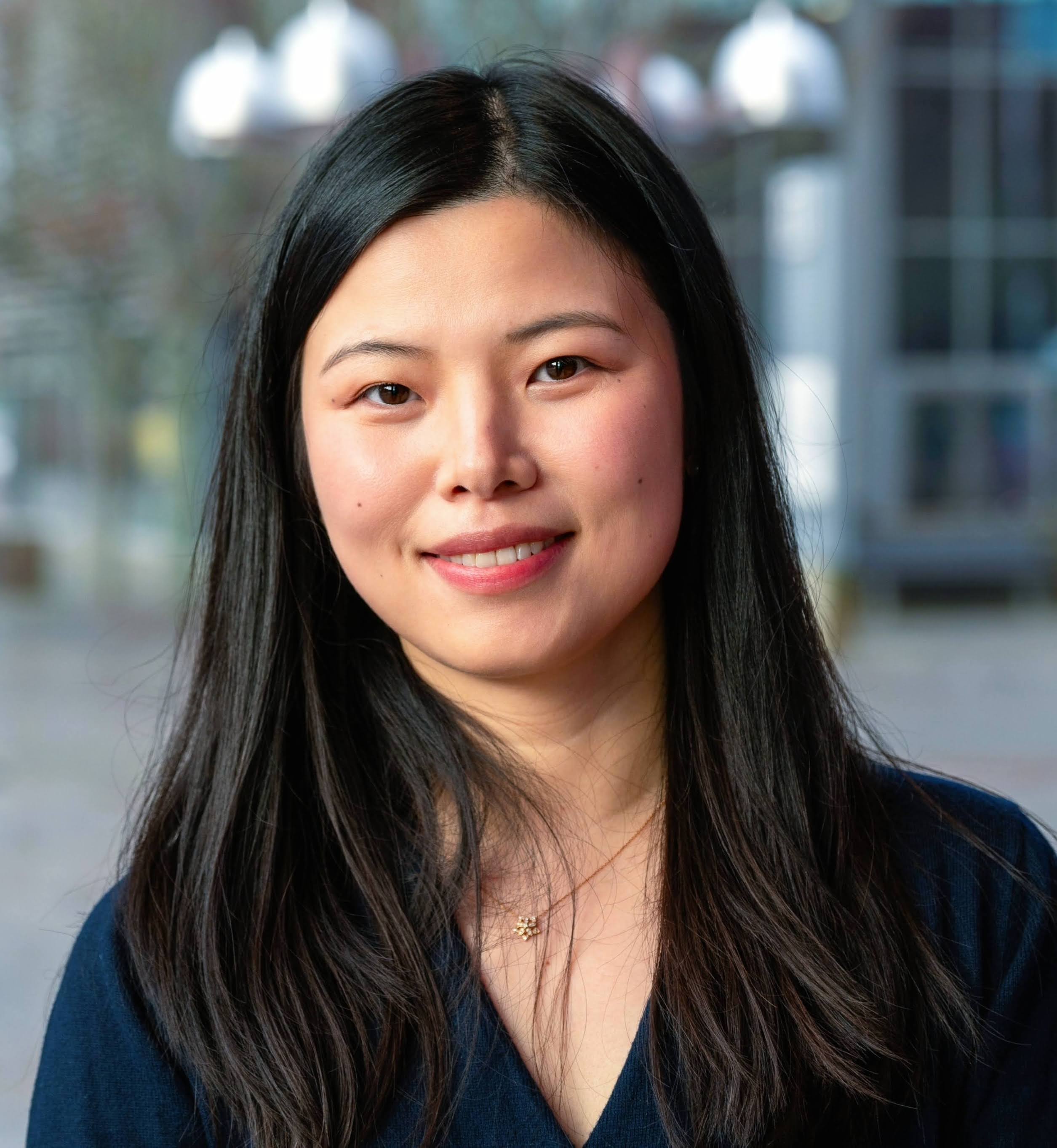}}]
{\bfseries Zheng Chen} is an Associate Professor with the Department of Electrical Engineering at Linköping University, Sweden. She received her M.Sc. and Ph.D. degrees from CentraleSupélec, Université Paris-Saclay, France, in 2013 and 2017, respectively. Her research focuses on distributed and cooperative computing, optimization, and learning for networked intelligent systems under communication constraints.
She was a recipient of the 2020 IEEE Communications Society Young Author Best Paper Award. She has served as the co-chair of several workshops and special sessions at IEEE GLOBECOM, ICASSP, SPAWC, Asilomar, and as the technical program chair of the 2022 IEEE SPS-EURASIP summer school on “Defining 6G: Theory, Applications and Enabling Technologies”. She is currently an Associate Editor of the IEEE Transactions on Wireless Communications, IEEE Transactions on Communications, and IEEE Transactions on Green Communications and Networking. She is also serving as the Lead Guest Editor for IEEE JSAC special issue on “Distributed Optimization, Learning, and Inference over Communication-Constrained Networks”.
	\end{IEEEbiography}
	\begin{IEEEbiography}[{\includegraphics[width=1in,height=1.25in,clip,keepaspectratio]{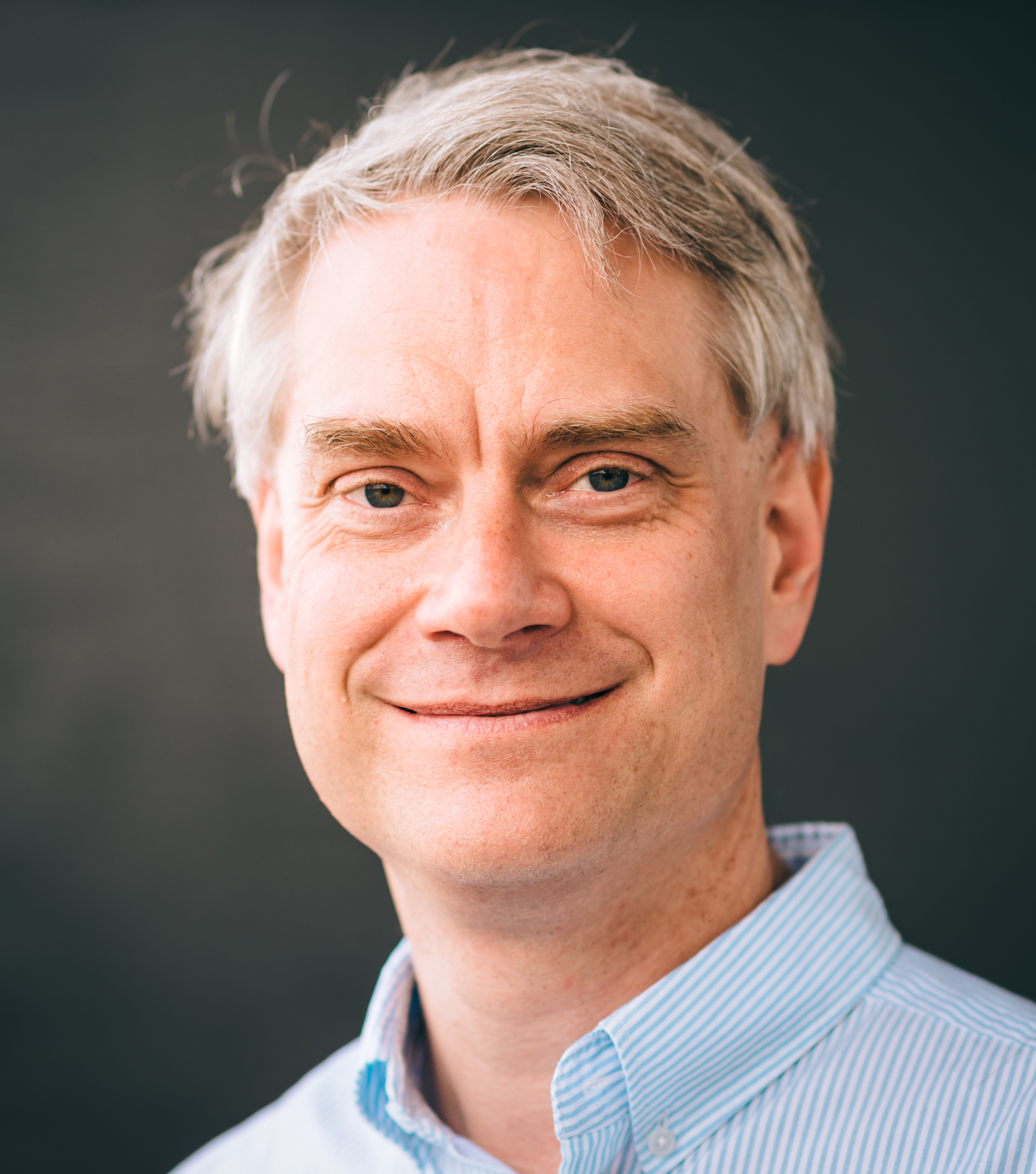}}]
		{\bfseries Erik G. Larsson} (Fellow) received the Ph.D. degree from Uppsala University,
		Uppsala, Sweden, in 2002.  He is currently Professor of Communication
		Systems at Link\"oping University (LiU) in Link\"oping, Sweden. He was
		with the KTH Royal Institute of Technology in Stockholm, Sweden, the
		George Washington University, USA, the University of Florida, USA, and
		Ericsson Research, Sweden.  His main professional interests are within
		wireless communications, signal processing, and network science. He 
		co-authored \emph{Space-Time Block Coding for  Wireless Communications} (Cambridge University Press, 2003) 
		and \emph{Fundamentals of Massive MIMO} (Cambridge University Press, 2016). 	
		
		He served as  chair  of the IEEE Signal Processing Society SPCOM technical committee (2015--2016), 
		chair of  the \emph{IEEE Wireless  Communications Letters} steering committee (2014--2015), 
		member of the  \emph{IEEE Transactions on Wireless Communications}    steering committee (2019-2022),
		General and Technical Chair of the Asilomar SSC conference (2015, 2012), 
		technical co-chair of the IEEE Communication Theory Workshop (2019), 
		and   member of the  IEEE Signal Processing Society Awards Board (2017--2019).
		He was Associate Editor for, among others, the \emph{IEEE Transactions on Communications} (2010-2014), 
		the \emph{IEEE Transactions on Signal Processing} (2006-2010),
		and  the \emph{IEEE Signal  Processing Magazine} (2018-2022).
		
		He received the IEEE Signal Processing Magazine Best Column Award
		twice, in 2012 and 2014, the IEEE ComSoc Stephen O. Rice Prize in
		Communications Theory in 2015, the IEEE ComSoc Leonard G. Abraham
		Prize in 2017, the IEEE ComSoc Best Tutorial Paper Award in 2018, 
		the IEEE ComSoc Fred W. Ellersick Prize in 2019, and the
		2023 IEEE SPS Donald G. Fink Overview Paper Award,
and the IEEE ComSoc Test of Time Paper Award for Advances in Communications in 2026.
	\end{IEEEbiography}
    
\end{document}